\documentclass[aps,prd,twocolumn,nofootinbib,superscriptaddress,longbibliography,amsfonts,amssymb,amsmath,titlepage]{revtex4-2}


\usepackage[usenames,dvipsnames]{color}
\usepackage{microtype}
\usepackage{IEEEtrantools}
\usepackage{amsthm}
\usepackage{enumitem}

\usepackage{aas_macros}
\usepackage{bm}
\usepackage{dcolumn}
\usepackage{latexsym}
\usepackage{rotating}
\usepackage{longtable}
\usepackage{url}
\usepackage[linktocpage]{hyperref}
\usepackage[scr=rsfs]{mathalpha}

\def\pd{\ensuremath{\partial}}
\usepackage[T1]{fontenc}

\let\eth\relax

\font\ec=ecrm0800 at 12pt
\def\thorn{\hbox{\ec\char'336}}
\def\eth{\hbox{\ec\char'360}}

\newcommand{\Lie}{\mathcal{L}}
\newcommand{\GHPLie}{\text{\L}}
\def\GHPwt{\ensuremath{\circeq}}

\def\Sh{\ensuremath{{}_s S_{\ell m \omega}}}
\def\Rh{\ensuremath{ {}_s R_{\ell m \omega }}}
\newcommand{\Rq}[1]{\ensuremath{ {}_s R_{\ell m #1 }}}

\def\Kh{\ensuremath{ {}_s  K_{\ell m \omega}}}
\def\Ih{\ensuremath{ {}_s I_{\ell m \omega}}}

\def\Us{\ensuremath{\Upsilon_{s}}}
\def\tUs{\ensuremath{\hat \Upsilon_{s}}}

\newcommand{\R}{{\mathbb R}}
\newcommand{\CC}{{\mathbb C}}

\newcommand{\llangle}{\langle\langle}
\newcommand{\rrangle}{\rangle\rangle}

\def\sm{\ensuremath{q}} 
\def\nv{\ensuremath{\nu}} 
\def\pU{\ensuremath{\varpi}}

\newtheorem{theorem}{Theorem}
\newtheorem{corollary}[theorem]{Corollary}
\newtheorem{lemma}[theorem]{Lemma}
\newtheorem{definition}{Definition}


\renewcommand{\Re}{\operatorname{Re}}
\renewcommand{\Im}{\operatorname{Im}}

\newcommand{\dd}{{\rm d}}
\newcommand{\cX}{\mathcal X}


\newcommand{\half}{\dfrac{1}{2}}
\newcommand{\abs}[1]{\left\lvert #1 \right\rvert}

\newcommand{\intd}[1]{\textrm{d} #1}

\newcommand{\GHPw}[2]{\left\{ #1, #2 \right\}}

\begin{document}

\title{Conserved currents for Kerr and orthogonality of quasinormal modes}

\author{Stephen R. Green}
\email{stephen.green@aei.mpg.de}
\affiliation{Max Planck Institute for Gravitational Physics (Albert Einstein Institute)\\
  Am M\"uhlenberg 1, D-14476 Potsdam, Germany}

\author{Stefan Hollands}
\email{stefan.hollands@uni-leipzig.de}
\affiliation{Institut f\"ur Theoretische Physik, Universit\"at Leipzig\\
  Br\"uderstrasse 16, D-04103 Leipzig, Germany}
\affiliation{Max Planck Institute for Mathematics in the Sciences, Inselstrasse 16\\
  D-04109 Leipzig, Germany}

\author{Laura Sberna}
\email{laura.sberna@aei.mpg.de}
\affiliation{Max Planck Institute for Gravitational Physics (Albert Einstein Institute)\\
  Am M\"uhlenberg 1, D-14476 Potsdam, Germany}
  
\author{Vahid Toomani}
\email{vahid.toomani@uni-leipzig.de}
\affiliation{Institut f\"ur Theoretische Physik, Universit\"at Leipzig\\
  Br\"uderstrasse 16, D-04103 Leipzig, Germany}

\author{Peter Zimmerman}
\email{zimmerator@protonmail.com}

\begin{abstract}
  
  We introduce a bilinear form for Weyl scalar perturbations of Kerr. The form is symmetric and conserved, and we show that, when combined with a suitable renormalization prescription involving complex $r$ integration contours, quasinormal modes are orthogonal in the bilinear form for different $(l,m,n)$. These properties are apparently not evident consequences of standard properties for the radial and angular solutions to the decoupled Teukolsky relations and rely on the Petrov type D character of Kerr and its $t$--$\phi$ reflection isometry.
  We show that quasinormal mode excitation coefficients are given precisely by the projection with respect to our bilinear form. These properties can make our bilinear form useful to set up a framework for nonlinear quasinormal mode coupling in Kerr. {We also provide a general discussion on conserved local currents and their associated local symmetry operators for metric and Weyl perturbations, identifying a collection containing an increasing number of derivatives.}

  
\end{abstract}

\maketitle


\section{Introduction}
Quasinormal ringing is the principal gravitational-wave signature of
the final black hole after a binary merger. This is described by a
spectrum of complex quasinormal frequencies $\omega_{lmn}$, which are
uniquely specified in linear perturbation theory by the mass and spin
of the Kerr background\footnote{
{We are assuming here the applicability of the no-hair theorems; see, e.g., \cite{chrusciel6112living}.} 
}~\cite{KokkotasSchmidt1999,Nollert:1999ji,Berti2009}. Precise measurement of these frequencies
therefore characterizes the background~\cite{Echeverria89} and moreover constrains
deviations from general relativity (with more than one mode, or when
combined with other measurements)~\cite{Dreyer:2003bv,Berti:2005ys,Brito:2018rfr,LIGOScientific:2021sio}. Although data today already hint at
modes beyond the fundamental~\cite{Isi:2019aib,Cotesta:2022pci,Finch:2022ynt,Capano:2021etf}, future observations with sensitive
detectors are sure to enable detailed spectroscopy~\cite{Berti:2005ys,Bhagwat:2021kwv,Ota:2019bzl}.

To interpret future observations, however, it will be necessary to
understand quasinormal mode interactions. The ringdown follows a highly
nonlinear phase (the merger) and although numerical calculations indicate that a sum of
modes may be sufficient to represent the gravitational-wave emission~\cite{Giesler:2019uxc,Mourier:2020mwa,Chen:2022dxt},
it is not clear that this corresponds to a full nonlinear
description. Indeed, nonlinear ringdown effects have been identified in numerical simulations of binary mergers~\cite{Mitman:2022qdl,Cheung:2022rbm} as well as in anti-de Sitter black holes~\cite{Bantilan:2012vu,Sberna:2021eui}.
In other contexts (e.g., perturbations of large
anti-de Sitter black holes) quasinormal modes can interact and even
become turbulent~\cite{Green:2013zba,Adams:2013vsa}. The point of this paper is to introduce some tools that may be helpful when 
developing a theory of quasinormal mode interactions. 

Compared to normal modes, quasinormal modes do not in general form in a 
straightforward sense a
complete ``basis'' of solutions to the linearized field equations. In fact, black
hole perturbations are only described by quasinormal modes for an
intermediate time period in their evolution; at early times they are
described by a free propagation piece, and at
late times by a power law tail~\cite{Price1972a,Leaver1986b,Ching:1995tj}. 
The spatial wavefunction of a decaying quasinormal mode also
\emph{diverges} at the bifurcation surface and at spatial
infinity. This makes it hard to write down canonical (conserved) $L^2$-type inner products based on the usual Cauchy-surfaces
of Kerr.\footnote{Note however that one may choose hyperboloidal slices \cite{Zenginoglu:2011jz,PanossoMacedo:2019npm,Ripley:2022ypi,gajic2021quasinormal}; see the conclusions for a discussion of this alternative in connection with our approach.}.
Without an inner product, it is not 
clear how to project onto quasinormal modes to study nonlinear
mode mixing.

The main goal of this paper is to point out an unconventional bilinear form which may take
the place, for some purposes, of an inner product on quasinormal modes of Kerr. Before 
we introduce this notion, we develop a general theory for conserved -- under time evolution --
bilinear forms for Weyl scalars or metric perturbations. Similar to \cite{carter1977killing,Toth:2018ybm}, the key idea is 
to start with a ``Klein-Gordon'' type current for Weyl scalars or metric perturbations and 
to apply symmetry operators to the entries of this bilinear expression.  As we show, in Kerr spacetimes, 
such symmetry operators include, besides the obvious ones descending from the 
Killing symmetry, also an infinite tower of operators built from Carter's Killing tensor. 
(For the Weyl scalars, the symmetry operator of lowest differential order has two derivatives; 
for metric perturbations, it has six derivatives). 
In particular, using a combination of such operators we find an infinite set of new conserved, local, gauge invariant current associated with Carter's constant \cite{Carter:1968ks} in Kerr.\footnote{
\label{footnote1}
For an explanation of the relation with previous works \cite{carter1977killing, carter1979generalized, grant2020class, grant2020conserved, andersson2015spin,aksteiner2019symmetries}, see section \ref{sec:Symmetry}.}

The bilinear form of main interest for this paper is, however, not obtained from such differential symmetry operators but rather the symmetry operator associated with the discrete $t$--$\phi$ reflection.  We show that gravitational quasinormal modes
with different frequencies are orthogonal with respect to this
bilinear form. For the reader interested in the main result, the bilinear form is presented explicitly for quasinormal modes in \eqref{eq:mode-bilinear}. We show furthermore that the quasinormal mode excitation coefficients of a solution are given precisely by the projection of data onto the corresponding modes via the bilinear form.

The plan of this paper is as follows. In section \ref{sec:Bilinear} we recall the standard recipe for 
constructing conserved bilinear forms for partial differential operators. In section \ref{sec:Symmetry}
we introduce symmetry operators (including symmetry operators related to the Killing tensor, see also footnote \ref{footnote1}) 
to construct further conserved bilinear forms, and currents.
 In section \ref{sec:bilinear_tphi} we construct the bilinear form  $\llangle \cdot , \cdot \rrangle$ 
 using the $t$--$\phi$ reflection symmetry, which gives orthogonality of quasinormal modes in section \ref{sec:Ortho}.
Finally, in section \ref{sec:Lap transform} we explain the 
relation with excitation coefficients. Some technical aspects of this paper are deferred to various appendices.

\section{Bilinear form -- basic construction}
\label{sec:Bilinear}

Consider a partial differential operator $\cX$ acting on sections of some vector bundle, $E$, 
over a manifold $M$. We assume that $M$ is equipped with a volume form, 
$\epsilon_{a_1 \dots a_n}$; later we will always have a metric $g_{ab}$, so the volume form 
is chosen as the one compatible with the metric. Let $\tilde E$ be the dual vector bundle, i.e., 
each fibre is given by the $\CC$-linear maps of the corresponding fiber of $E$. If $\psi$ is a 
section of $E$ and $\tilde \psi$ is a section of $\tilde E$, we can pointwise form the scalar 
$\tilde \psi \psi \in \CC$. The formal adjoint is the unique differential operator $\cX^\dagger$
defined by the formula 
\begin{equation}
(\cX^\dagger \tilde \psi) \psi - \tilde \psi \cX \psi = \nabla_a x^a[\tilde \psi, \psi], 
\end{equation}
where $x^a[\tilde \psi, \psi]$ is local, i.e., at any point built from finitely many 
derivatives of the fields at that point. The divergence operator on the right is defined by our 
volume form and if it comes from a metric, as we assume from now, it 
is equal to the usual covariant derivative operator. Said differently, $\cX^\dagger \tilde \psi$ is 
obtained by the usual ``partial integration'' procedure dropping surface terms as if the above 
equation were placed under an integral sign. Note that, by contrast to quantum mechanics, 
$\dagger$ as defined above is $\CC$-linear, rather than anti-linear.

Now let $(\tilde \psi, \psi)$ be a pair of solutions to $\cX \psi = 0 = \cX^\dagger \tilde \psi$, and let 
$\Sigma$ be a codimension 1 submanifold of $M$ (later to be chosen as a constant $t$ slice of Kerr). 
Then, by Gauss' theorem, if $\tilde \psi$, $\psi$ have sufficient decay on $\Sigma$ for the following integral to be suitably convergent (e.g., if they are compactly supported), 
then the bilinear form 
\begin{equation}
\label{Xdef}
X[\tilde \psi, \psi] := \int_\Sigma x^a [\tilde \psi, \psi] \, \dd \Sigma_a \equiv \int_\Sigma (\star x) [\tilde \psi, \psi]
\end{equation}
is unchanged under local deformations of $\Sigma$, and we say that it is ``conserved''.
(Here $\star$ denotes the Hodge dual.) 
As a simple example, consider $\cX = \nabla^a \nabla_a - m^2$, the Klein-Gordon operator acting on real-valued functions $\psi$, 
so $E = \tilde E = \R$ is the trivial bundle. Then $\cX^\dagger = \cX$ and $x^a = - \tilde \psi \nabla^a \psi + \psi \nabla^a \tilde \psi$ is the Klein-Gordon (symplectic) current, 
which is of course conserved for any pair of solutions. 
The bilinear form in this case is just the symplectic form for Klein-Gordon theory. It is anti-symmetric 
under $\tilde \psi \leftrightarrow \psi$, but note that in the general case we cannot say that about the bilinear form since the bundles $E$
and $\tilde E$ cannot usually be identified in a natural way.   

As a second example, let $\mathcal{E}$ be the linearized Einstein operator on a Ricci-flat spacetime. It acts on symmetric covariant rank-2
tensors $h_{ab}$, so $E$ is equal to ${\rm Sym}(T^*M \otimes T^*M)$ in this case, and the dual bundle $\tilde E$ corresponds to symmetric contravariant 
rank-2 tensors, ${\rm Sym}(TM \otimes TM)$. The formula is
\begin{align}\label{eq:linearE}
  \mathcal{E}_{ab}(h) \equiv \frac{1}{2}\big[ &-\nabla^c\nabla_c h_{ab} - \nabla_a\nabla_bh + 2 \nabla^c\nabla_{(a} h_{b)c} \nonumber\\
 & + g_{ab}(\nabla^c \nabla_c h - \nabla^c\nabla^d h_{cd}) \big],
\end{align}
and under the identification of $E$ with $\tilde E$ (by using the metric $g^{ab}$ to raise indices), we have $\mathcal{E}^\dagger = \mathcal{E}$.
As in the Klein-Gordon case, this last relation follows because the linearized 
Einstein equation arises from an action principle. By explicit calculation, the boundary term $w^a \equiv x^a[\tilde h, h]$
is given by \cite{iyer1994some}
\begin{equation}
\label{wadef}
w^a = 
p^{abcdef}\left( h_{bc} \nabla_d \tilde h_{ef} - \tilde h_{bc} \nabla_d h_{ef}\right),
\end{equation}
where
\begin{align}
  p^{abcdef} = &g^{ae}g^{fb}g^{cd} - \frac{1}{2}g^{ad}g^{be}g^{fc} - \frac{1}{2}g^{ab}g^{cd}g^{df}\nonumber\\ 
  &- \frac{1}{2}g^{bc}g^{ae}g^{fd} + \frac{1}{2}g^{bc}g^{ad}g^{ef}.
\end{align}
The bilinear form 
\begin{equation}
\label{Wdef}
W[\tilde h, h] = \int_\Sigma
p^{abcdef}\left( h_{bc} \nabla_d \tilde h_{ef} - \tilde h_{bc} \nabla_d h_{ef}\right) \dd \Sigma_a ,
\end{equation}
is the symplectic form of General Relativity \cite{iyer1994some}.

Our third, and most important, example concerns the Teukolsky operator(s) for the perturbed Weyl scalars of the Kerr spacetime $(M,g_{ab})$, 
to which we will restrict attention from now on. For this, we shall employ the GHP
formalism~\cite{Geroch:1973am,Bini:2002jx,Aksteiner:2010rh,Toth:2018ybm} in the following, and we now briefly review the essential portions of this formalism which 
simplifies and also conceptualizes many calculations in the
Kerr -- or more generally, Petrov type D -- geometry. $l^a$ and $n^a$ are taken to be the repeated
principal null directions which are completed to a null tetrad by defining a smooth pair of
complex null rays $(m^a, \bar m^a)$ that span the remaining
dimensions. We choose the normalization $l_an^a=1$ and
$m_a\bar m^a=-1$, corresponding to the $-2$ signature. The metric
then takes the form
\begin{equation}\label{eq:NP met}
g_{ab} = 2l_{(a}n_{b)}-2m_{(a}\bar m_{b)}.
\end{equation}
The basic idea is to contract any tensor field on $M$ into
the legs of the Newman-Penrose (NP) tetrad $(l^a, n^a, m^a, \bar m^a)$
in all possible ways\footnote{We do not require tensor fields to be
  \emph{fully} contracted with the tetrad, so in general we refer to
  NP \emph{tensors}, not just scalars. In other words, there can
  remain tensor indices after contraction.} and to represent the
action of the covariant derivative operator $\nabla_a$ in terms of
these tetrad components, in a way that preserves a natural grading by
spin and boost weights.

Fields $\eta$ obtained by contracting with the tetrad are
classified according to their spin and boost weights as follows. Under
a local rotation that preserves the real null pair, the tetrad
transforms as $(l^a, n^a, e^{i\Gamma} m^a, e^{-i\Gamma} \bar m^a)$,
whereas under a local boost that preserves the directions of the real
null pair, it transforms as
$(\Lambda l^a, \Lambda^{-1}n^a, m^a, \bar m^a)$, where $\Lambda$,
$\Gamma$ are smooth real-valued functions.  If we combine these
functions into the complex function $\lambda^2 = \Lambda e^{i\Gamma}$,
then $\eta$ is said to possess (real) GHP weights $(p,q)$ if under
the above combined local rotation and boost of the tetrad, it
transforms as
\begin{equation}\label{trafo}
\eta \to \lambda^p \bar \lambda^q \eta. 
\end{equation}
We write $\eta \GHPwt (p,q)$ if this is the case. In the GHP
formalism, only quantities with the same weight may be added, whereas
weights behave additively under multiplication.

From the mathematical viewpoint, the GHP formalism can be understood in terms of principal fibre bundles and
their associated vector bundles, as follows. Consider the set of oriented null frames aligned with the given null directions. On each such frame, we may pointwise 
perform a boost/rotation, which as we described can be combined into a nonzero complex number $\lambda \in \CC_\times$. Thus, we have a multiplicative action of 
$\CC_\times$ on the set of frames which gives this set the structure of a principal $G$-bundle: A principal bundle is abstractly a bundle $P$ over $M$ such that a group $G$
can act by right multiplication $X \to X \cdot g$ in the fibre -- in our case $X$ is an NP frame aligned with the principal null direstions and $g \leftrightarrow \lambda$. Given a principal $G$-bundle and a representation $R$ of $G$ on some vector space $V$, there is a canonical construction of an ``associated'' vector bundle. The sections of this bundle correspond physically to quantities defined on $M$ that ``transform in the representation $R$''.
More precisely, the elements in this associated bundle are the equivalence classes of pairs $(X,v)$ where $X \in P$ 
and $v \in V$ where $(X,v)$ is declared to be equivalent to 
$(X \cdot g, R(g)v)$. In the present example, $R_{p,q}(\lambda)v = \lambda^p \bar \lambda^q v$ and $V = \CC$, which corresponds precisely to the ``transformation law'' \eqref{trafo}. The associated vector bundle is denoted in general by $P \ltimes_R V$ and its fibres are isomorphic to $V$. In our case, we get 1-dimensional complex (``line'') bundles $L_{p,q}=P \ltimes_{p,q} \CC$ over $M$ labelled by the GHP weights $(p,q)$. The number $s=\frac{1}{2}(p-q)$ is commonly referred to as the spin. Of course, we could tensor $L_{p,q}$ with the usual tensor bundles $T^{(r,s)} M$ to host objects that have GHP weights and tensor indices at the same time such as $l^a$ or $R_{abcd} m^a m^d$.

The advantage of the above invariant viewpoint involving associated vector bundles is that we can naturally see what quantities are defined in a frame independent manner, which quantities can naturally be added, etc. This provides not only an extremely useful guiding principle in the -- usually very complicated -- calculations related to Kerr, but also means that one is always intrinsically dealing with objects that behave in a well-defined manner under a change of frame. To make the formalism really useful, one needs covariant derivative operators on the bundles $L_{p,q}$. These are given by
\begin{equation}
\Theta_a = \nabla_a - \tfrac{1}{2} (p-q) \bar m^b \nabla_a m_b -\tfrac{1}{2} (p+q) n^b \nabla_a l_b.
\end{equation}
The Teukolsky operators also feature the ``gravito-magnetic potential'' which is  given by
\begin{equation}
\label{Bdef}
B^a \equiv -(\rho n^a - \tau \bar m^a) \GHPwt (0,0), 
\end{equation}
where $\rho, \tau$ are related to spin-coefficients~\cite{Geroch:1973am,Bini:2002jx,Aksteiner:2010rh}; see appendix \ref{app:D}.
The Teukolsky operator acts on GHP-scalars of the same weight\footnote{For the definition of $\mathcal O$ and 
  $\mathcal{O}^\dagger$ for general GHP weights see appendix \ref{app:B}.} as the perturbed Weyl scalar $\psi_0$, 
i.e., $(p,q) = (4,0)$ and is given by 
\begin{equation}
    \mathcal{O} = g^{ab}(\Theta_a + 4 B_a)(\Theta_b + 4 B_b) - 16 \Psi_2 
\end{equation}
  with $\Psi_2$ a background Weyl-scalar. So $E=L_{4,0}$ now. Since the dual vector bundle to $L_{p,q}$ is $L_{-p,-q}$, the adjoint 
  Teukolsky operator $\mathcal{O}^\dagger$ acts on GHP scalars of weight $(-4,0)$. It is given by
 \begin{equation}\label{eq:Odagger}
\mathcal{O}^\dagger  = g^{ab}(\Theta_a - 4 B_a)(\Theta_b - 4 B_b) - 16 \Psi_2  .
\end{equation} 
 It follows that the boundary term $x^a[\tilde \Upsilon, \Upsilon] \equiv \pi^a$ (with $\tilde \Upsilon \GHPwt (4,0), \Upsilon \GHPwt (-4,0)$) 
 is given in the case of the Teukolsky operator by 
 \begin{equation}
 \label{pidef}
   \pi^a =  \tilde \Upsilon(\Theta^a - 4B^a)\Upsilon - \Upsilon (\Theta^a + 4 B^a) \tilde \Upsilon 
 \end{equation}
We denote the corresponding bilinear form -- formally similar to the Klein-Gordon inner product of a charged scalar field -- by 
\begin{equation}
\label{Pidef}
\Pi[\tilde \Upsilon, \Upsilon] = \int_\Sigma \left[ \tilde \Upsilon(\Theta^a - 4B^a)\Upsilon - \Upsilon (\Theta^a + 4 B^a) \tilde \Upsilon \right] \dd \Sigma_a.
\end{equation}

The Teukolsky equation/operator and the linearized Einstein equation/operator are well-known to be related and this implies that the 
bilinear forms $W$ and $\Pi$ as in \eqref{Wdef} and \eqref{Pidef} are related, too. \cite{Prabhu:2018jvy} have shown that for $\Upsilon$ a
smooth solution to $\mathcal O^\dagger \Upsilon=0$ arising from compact
support data and $h_{ab}$ a smooth solution to
$\mathcal E h_{ab} = 0$, an identity of the following form holds
\begin{equation}\label{eq:intertwine_inf}
w^a[h, \mathcal S^\dagger \Upsilon] = - \pi^a[\mathcal T h, \Upsilon] + \nabla_b H^{ab}[\Upsilon, h], 
\end{equation}
where $H^{ab}$ is a skew symmetric local tensor. Furthermore~\cite{Aksteiner:2014thesis,Araneda:2016iwr}
\begin{subequations}
  \begin{align}
    \mathcal{S}(T) &= Z^{bcda} (\Theta_a + 4 B_a) \Theta_b T_{cd},\\
    \mathcal{T}(h) &= - \frac{1}{2}Z^{bcda} \Theta_a \Theta_b h_{cd},
  \end{align}
\end{subequations}
where $Z^{abcd} \equiv Z^{ab}Z^{cd}$, and $Z^{ab} \equiv 2l^{[a}m^{b]}$, are operators such that the Teukolsky-Wald identity holds:
\begin{equation}\label{SEOT}
\mathcal S \mathcal E = \mathcal O \mathcal T.
\end{equation}
This equation encodes that the action $\mathcal T (h)$ on a metric perturbation $h_{ab}$ -- which equals the perturbed Weyl scalar $\psi_0$ --
gives a solution to Teukolsky's equation $\mathcal{O} \psi_0 = 0$. Conversely, taking an adjoint of \eqref{SEOT}, i.e., 
$\mathcal{E} \mathcal{S}^\dagger = \mathcal{T}^\dagger \mathcal{O}^\dagger$, shows that any solution $\mathcal{O}^\dagger \Upsilon = 0$ of 
GHP weight $(-4,0)$ (``Hertz potential'') is such that $h_{ab} = \Re \mathcal{S}^\dagger_{ab} \Upsilon$ is a solution to the linearized Einstein equations. 

Ref.~\cite{Prabhu:2018jvy} did not derive the explicit form for $H^{ab}$ but 
argued for the above equation \eqref{eq:intertwine_inf} to hold on general grounds based on \eqref{SEOT}. The main use of the above identity 
\eqref{eq:intertwine_inf} is to 
relate the corresponding bilinear forms $W[h, \mathcal S^\dagger \Upsilon]$ and $\Pi[\mathcal T h, \Upsilon]$ for a Cauchy surface $\Sigma$
of the exterior of Kerr. This identity is obtained by simply integrating the above identity over $\Sigma$. If all 
fields are falling off rapidly at the horizon and spatial infinity, then the boundary term arising from $H^{ab}$ will not contribute; 
in other cases, $H^{ab}$ will contribute surface terms. Their computation is fairly long and non-trivial and therefore deferred to appendix \ref{app:A}. 
If $\Sigma$ is a co-dimension one surface with boundary $\partial \Sigma$,
$\Upsilon$ is a
smooth solution to $\mathcal O^\dagger \Upsilon=0$ and $h_{ab}$ a smooth solution to
$\mathcal E h_{ab} = 0$, then we have
\begin{equation}\label{eq:intertwine}
  W[h, \mathcal S^\dagger \Upsilon] = - \Pi[\mathcal T h, \Upsilon] + B[h,\Upsilon] 
\end{equation}
where $B = \int_{\partial \Sigma} H^{ab} \dd \Sigma_{ab}$. When $\Sigma$ is 
a slice of constant $t$ in Boyer-Lindquist coordinates, $\partial \Sigma$ would correspond to the bifurcation surface at $r=r_+$ and the sphere at $r=\infty$.
Using this formula, the reader can readily transfer results on 
bilinear forms in this paper between the metric perturbation and Teukolsky variables. 

\section{Bilinear forms from infinitesimal symmetry operators}
\label{sec:Symmetry}

Consider again a general  partial differential operator $\cX$ acting on sections of some vector bundle, $E$, 
over a manifold $M$. We have the corresponding conserved bilinear form $X[\tilde \psi, \psi]$ defined 
by \eqref{Xdef}. Now suppose $\mathcal{C}$ is a partial differential operator acting on $E$ mapping 
solutions to $\cX \psi = 0$ to solutions -- this is equivalent to the statement that there is a partial differential operator $\mathcal{D}$
such that $\cX \mathcal{C} = {\mathcal D} \cX$. Such an operator is called a ``symmetry operator''. The symmetry operators form an algebra which is trivial for a generic operator $\cX$.
If we have a symmetry operator, then 
$X[\tilde \psi, \mathcal{C} \psi]$ is also a conserved 
bilinear form, i.e., invariant under local changes of the surface $\Sigma$ in \eqref{Xdef}, see e.g. \cite{carter1977killing, carter1979generalized} for a similar observation.

Let us apply this recipe to the linearized Einstein operator $\mathcal{E}$ on the Kerr spacetime. The Kerr spacetime
has two Killing vector fields, $t^a, \phi^a$ corresponding to asymptotic time translations and rotations. The Lie
derivatives $\mathcal{L}_t, \mathcal{L}_\phi$ evidently commute with $\mathcal{E}$ and thus provide two conserved quadratic forms:
\begin{equation}
\label{canen}
E[h] = W[h,\mathcal{L}_t h], \quad J[h] = W[h,\mathcal{L}_\phi h].
\end{equation}
They correspond to the canonical energy and canonical angular momentum of the perturbation $h_{ab}$ when $\Sigma$ is a Cauchy surface
stretching between the bifurcation surface and spatial infinity \cite{hollands2013stability}. 

If we want to repeat a similar construction for the Teukolsky operator $\mathcal{O}$ and the corresponding bilinear form $\Pi$ 
we face the problem that the Lie-derivative 
in general is not well-defined on an arbitrary vector bundle (though it is on the usual bundles of tensors over $M$). 
In the GHP formalism, the vector bundles $L_{q,p}$ in question 
are defined relative to an NP tetrad, and in such a case we can still give a definition of the Lie derivative along a Killing vector
field, though not an arbitrary vector field, as we now describe. The point is that if $g_{ab}$ has an isometry $\varphi$ that
  preserves the globally defined null directions, then this 
  constitutes an intrinsically
  defined action on GHP tensors $\eta \GHPwt (p,q)$.  More
  explicitly, if $\varphi$ preserves the null directions, then it must
  be the case that it acts on a given null frame as
  $\varphi_* l^a = \Lambda l^a$, $\varphi_* n^a = \Lambda^{-1} n^a$,
  and $\varphi_* m^a = e^{i\Gamma} m^a$, for some real functions $\Lambda$,
  $\Gamma$ on $M$ that depend on the chosen frame and
  $\varphi$. The action of $\varphi$ on $\eta$ is then invariantly
  defined since GHP tensors are functionals of the
  null tetrads giving rise to the prescribed pair of null
  directions. In the given null frame, this action amounts to
  $\varphi^{\text{GHP}}_*\eta \equiv \lambda^{-p} \bar \lambda^{-q}
  \varphi_*\eta$, where $\lambda^2 = \Lambda e^{i\Gamma}$ and
  $\varphi_*$ is the standard pushforward on functions 
  (or tensors). In particular, the
  tetrad vectors are invariant under $\varphi_*^{\text{GHP}}$.

  Infinitesimally, if $\varphi_t$ is a 1-parameter group of transformations generated by
  a Killing field $\chi^a$ with corresponding
  $\lambda_t$, then the corresponding ``Lie'' transport of $\eta \GHPwt (p,q)$ is given
  by~\cite{edgar2000integration}
  \begin{IEEEeqnarray}{rClCl}\label{eq:GHPLie}
    \GHPLie_\chi \eta &=& \lim_{t\to0}\frac{(\varphi_{-t})^{\text{GHP}}_\ast \eta - \eta}{t} && \nonumber\\
                      &=& (\mathcal{L}_\chi - p w - q \bar w) \eta &\GHPwt& (p,q),
  \end{IEEEeqnarray}
  in the given frame. Here, $\mathcal{L}$ denotes the standard Lie derivative, and 
  \begin{align}
    w &= \frac{d}{dt} \log \lambda_t \bigg|_{t=0}\\
      &= \frac{1}{2} \left(n_a\mathcal{L}_\chi l^a - \bar m_a \mathcal{L}_\chi m^a \right).
  \end{align}
  If we introduce the bivector 
  $Y \equiv n \wedge l - \bar m \wedge m$ (for further details on the
  bivector calculus see, e.g.
  \cite{fayos1990electromagnetic,Aksteiner:2014thesis}) and use
  the fact that $\chi^a$ is a Killing field, so
  $\nabla_{(a}\chi_{b)} = 0$, then~\eqref{eq:GHPLie} can be manipulated to obtain
  \begin{equation}\label{eq:GHPLie-simplified}
    \GHPLie_\chi \eta = \left[ \mathcal{L}^\Theta_\chi
      - \frac{p}{4} Y^{ab}\Theta_a\chi_b
      - \frac{q}{4} \left( Y^{ab} \Theta_a\chi_b \right)^\ast \right] \eta,
  \end{equation}
  where $\mathcal{L}^\Theta$ is the standard Lie derivative with $\nabla_a$
  derivatives replaced by $\Theta_a$ derivatives. In this notation, the GHP Lie derivative 
  is also defined for GHP-tensors, i.e., 
  sections in a bundle $L_{p,q}$ tensored with 
  $TM$ or $T^*M$. In any case, the GHP Lie
  derivative defined here is manifestly GHP covariant, and it can be
  checked that it satisfies the Leibniz rule. The expression for  
  $\GHPLie_\chi$ 
  in a chosen NP tetrad will depend on that choice. 
  For the Kinnersley tetrad \eqref{eq:Kintet}, $w=0$, but
  $w$ can be different from zero for other choices of the frame.

  With these definitions, it then follows that $\GHPLie_\chi$ for $\chi^a$ either $t^a$ or $\phi^a$ 
  commutes with the covariant derivative $\Theta_a$ and annihlates $g_{ab}, n^a, l^a, m^a, \bar m^a, B_a$. Therefore, 
  $\GHPLie_\chi$ also commutes with the Teukolsky operators, 
  \begin{equation}
  [\GHPLie_\chi, \mathcal{O}] = 0 = [\GHPLie_\chi, \mathcal{O}^\dagger], \quad \chi^a = t^a , \phi^a, 
  \end{equation}
   and it thus defines a symmetry operator. {The corresponding conserved currents arising from $\pi^a$ \eqref{pidef} using the general construction described above have been discussed by \cite{Toth:2018ybm}.}
  
  There exist other symmetry operators in the Kerr (and more generally, Petrov type D-) spacetimes
  related to the Killing tensor $K_{ab}$ that exists in those spacetimes. The construction of those operators for spin $s=0, \tfrac{1}{2}$ in the Teukolsky equation goes back to 
  \cite{carter1977killing, carter1979generalized}; here we present the corresponding symmetry operator for arbitrary GHP-weights $(p,q)$. Similar operators have appeared also in 
  \cite{grant2020class, grant2020conserved}, eq. III.3, for spin $s=1, 2$, though not in the GHP covariant form presented here which makes manifest the relationship with the Killing tensor. This tensor 
  is given by 
  \begin{equation}
  \label{Kabdef}
  K^{ab} = - \dfrac{1}{4} \left( \zeta - \bar{\zeta} \right)^2 l^{(a} n^{b)} + \dfrac{1}{4} \left( \zeta + \bar{\zeta} \right)^2 m^{(a} \bar{m}^{b)}
  \end{equation} 
  where we use the shorthand
\begin{equation}
\label{zetadef}
\zeta = - \Psi^{-\tfrac{1}{6}}_2 \bar{\Psi}^{-\tfrac{1}{6}}_2 \rho^{-\tfrac{1}{2}} \bar{\rho}^{\tfrac{1}{2}} \circeq \GHPw{0}{0}
\end{equation}
with 
$\rho$ one of the spin coefficients in the GHP formalism. The desired symmetry operator $\mathcal{K}$
acting on GHP scalars of weights $(p,q)$ is defined as
\begin{align}\label{eq:Koperatordef}
\mathcal{K} \eta =&\left( \Theta_a + p B'_a + q \bar{B}'_a \right) K^{ab} \left( \Theta_b + p B'_b + q \bar{B}'_b \right) \eta  \nonumber\\
&+ 2 (p \gamma + q \bar{\gamma}) \GHPLie_\xi \eta
\end{align}
where 
\begin{equation}
\label{xidef}
\xi_a = \zeta \left( B_a - B'_a \right), 
\end{equation}
is proportional to a Killing vector field, and $\gamma = ( \zeta^2 - \bar{\zeta}^2 )/(8 \zeta)$. Here and in the following, a prime as in $B'_a$ means the GHP priming operation $n^a \leftrightarrow l^a, m^a \leftrightarrow \bar m^a$.
In Boyer-Lindquist coordinates and the Kinnersly frame (see appendix \ref{app:D}), $\xi^a = M^{-1/3} t^a$, $\gamma =  M^{- 1/3} \frac{- i a \cos \theta}{2(r - i a \cos\theta)}$ and $\GHPLie_\xi \eta = M^{-1/3} \partial_t \eta$.
$\mathcal{K}$ is called a symmetry operator because one can show that 
\begin{equation}
\label{commutator}
[\mathcal{K}, \mathcal{O}] = 0 = [\mathcal{K}, \mathcal{O}^\dagger] 
\end{equation}
when acting on GHP quantities of weight $(4,0)$ or $(-4,0)$, respectively. The proof of this statement is rather 
nontrivial and deferred to appendix \ref{app:B}, where we also prove the commutation property for arbitrary $(p,q)$. It follows from the properties of the GHP Lie derivative that $[\GHPLie_\chi, \mathcal{K}]=0$ for any Killing vector field $\chi^a$, so we have:

\begin{theorem}
$\GHPLie_t, \GHPLie_\phi, \mathcal{K}$ generate a commutative, infinite-dimensional algebra of symmetry operators for Teukolsky's operator $\mathcal O$ for any GHP weights $(p,q)$.
\end{theorem}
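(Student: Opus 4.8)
The plan is to verify the three asserted properties in turn — symmetry-operator property, pairwise commutativity, and algebraic independence — of which only the last needs real work. For the first I would simply collect what is already in hand: $[\GHPLie_t,\mathcal O]=[\GHPLie_\phi,\mathcal O]=0$ was shown above, and $[\mathcal K,\mathcal O]=0$ is proved in Appendix \ref{app:B} for arbitrary GHP weight $(p,q)$, with the analogous identities for $\mathcal O^\dagger$. The operators commuting with a fixed $\mathcal O$ form a $\CC$-linear subspace closed under composition, since $[\mathcal C_1\mathcal C_2,\mathcal O]=\mathcal C_1[\mathcal C_2,\mathcal O]+[\mathcal C_1,\mathcal O]\mathcal C_2$, and each such operator maps $\ker\mathcal O$ to itself. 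Hence the associative algebra $\mathcal A$ generated under composition and $\CC$-linear combination by $\GHPLie_t,\GHPLie_\phi,\mathcal K$ consists of symmetry operators for $\mathcal O$ (simultaneously for $\mathcal O^\dagger$), for every weight, and it remains only to identify $\mathcal A$ abstractly.

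For commutativity, the relations $[\GHPLie_t,\mathcal K]=[\GHPLie_\phi,\mathcal K]=0$ were noted above; they follow from the naturality of the GHP Lie derivative together with $\xi^a\propto t^a$. For $[\GHPLie_t,\GHPLie_\phi]$ I would evaluate in the Kinnersley frame \eqref{eq:Kintet}: there the factor $w$ in \eqref{eq:GHPLie} vanishes for both $t^a$ and $\phi^a$ (the Kinnersley tetrad components are independent of $t$ and $\phi$), so $\GHPLie_t$ and $\GHPLie_\phi$ act on the frame components of a GHP scalar simply as $\mathcal L_t=\partial_t$ and $\mathcal L_\phi=\partial_\phi$, which commute; since $\GHPLie$ is defined frame-independently, the operator commutator then vanishes in every frame (equivalently $[\GHPLie_t,\GHPLie_\phi]=\GHPLie_{[t,\phi]}=0$, as $t^a,\phi^a$ commute on Kerr). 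Pairwise commutativity of the generators makes $\mathcal A$ commutative, hence a quotient of the polynomial ring $\CC[X_1,X_2,X_3]$.

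The crux is the third step: no nonzero polynomial relation $P(\GHPLie_t,\GHPLie_\phi,\mathcal K)=0$ holds, so $\mathcal A$ is the \emph{full} polynomial ring $\CC[\GHPLie_t,\GHPLie_\phi,\mathcal K]$ and in particular infinite-dimensional. I would argue with principal symbols. Since these operators act on a line bundle $L_{p,q}$, their principal symbols are scalar functions on $T^*M$: $\GHPLie_t,\GHPLie_\phi$ are first order with symbols proportional to $t^ak_a$ and $\phi^ak_a$, while $\mathcal K$ in \eqref{eq:Koperatordef} is second order with principal symbol proportional to $K^{ab}k_ak_b$ (the term $2(p\gamma+q\bar\gamma)\GHPLie_\xi$ is of lower order). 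Assign the generators weights $1,1,2$ and decompose $P=\sum_N P_N$ into weighted-homogeneous pieces; if $P\neq0$ let $N_0$ be the largest $N$ with $P_{N_0}\neq0$. Then $P(\GHPLie_t,\GHPLie_\phi,\mathcal K)$ has order $\le N_0$ with order-$N_0$ symbol $P_{N_0}(t^ak_a,\phi^ak_a,K^{ab}k_ak_b)$, by multiplicativity of principal symbols under composition — the lower-weight pieces contribute only at strictly lower order, and no order drops since each of $t^ak_a,\phi^ak_a,K^{ab}k_ak_b$ is a nonzero polynomial in $k$ and $\CC[k]$ is an integral domain. Vanishing of the operator would force this symbol to vanish identically on $T^*M$. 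But the three symbols are algebraically independent as functions on $T^*M$ over the Kerr exterior: at a generic point $t^a$ and $\phi^a$ are linearly independent, so the linear forms $k\mapsto t^ak_a$ and $k\mapsto\phi^ak_a$ on $T^*_xM$ can be prescribed arbitrarily; on the remaining affine $2$-plane of covectors the function $k\mapsto K^{ab}k_ak_b$ has quadratic part the restriction of $K^{ab}$ to the annihilator of $\{t^a,\phi^a\}$, which does not vanish — visible directly from \eqref{Kabdef} (in Boyer--Lindquist coordinates $K^{\theta\theta}\neq0$), a manifestation of the irreducibility of Carter's constant \cite{Carter:1968ks}. Hence $K^{ab}k_ak_b$ is non-constant on such a generic $2$-plane, so any weighted-homogeneous polynomial vanishing on the triple of symbols has all coefficients zero, contradicting $P_{N_0}\neq0$. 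Thus $\mathcal A\cong\CC[\GHPLie_t,\GHPLie_\phi,\mathcal K]$; already the powers $\mathcal K^n$, with leading symbols proportional to $(K^{ab}k_ak_b)^n$ of pairwise distinct degrees, are linearly independent, so $\mathcal A$ is infinite-dimensional.

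The main obstacle is precisely this algebraic-independence argument, which is where the geometry of Kerr enters, through the fact that Carter's Killing tensor $K^{ab}$ is genuinely second order and irreducible relative to $t^a,\phi^a$; the remainder is bookkeeping (products of operators commuting with $\mathcal O$ again commute with $\mathcal O$, and $\GHPLie$ is natural). If one instead regards symmetry operators only modulo operators of the form $\mathcal R\,\mathcal O$ that annihilate every solution, infinite-dimensionality of the quotient follows from the stronger classical fact that $K^{ab}$ is not a constant-coefficient combination of $g^{ab}$ and symmetrized products of $t^a,\phi^a$.
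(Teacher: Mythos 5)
Your proposal is correct, and for the parts the paper actually argues it follows the same route: in the paper the theorem is stated as an immediate consequence of $[\GHPLie_\chi,\mathcal O]=0$ for $\chi^a=t^a,\phi^a$, of $[\mathcal K,\mathcal O]=0$ (deferred to appendix \ref{app:B}, where it is obtained for arbitrary $(p,q)$ via the operators $\mathcal P,\mathcal R,\mathcal L$ and lemma \ref{lemma:OKcommutator}), and of $[\GHPLie_\chi,\mathcal K]=0$; no separate justification of commutativity among the generators or of infinite-dimensionality is supplied there. Where you go beyond the paper is the principal-symbol argument showing that $\GHPLie_t,\GHPLie_\phi,\mathcal K$ satisfy no polynomial relation, so that the algebra they generate is the full polynomial ring $\CC[X_1,X_2,X_3]$ and in particular infinite-dimensional. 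That argument is sound: the symbols are scalars because the operators act on the line bundle $L_{p,q}$, they multiply under composition, the weighted-homogeneous decomposition isolates the top-order symbol $P_{N_0}(t^ak_a,\phi^ak_a,K^{ab}k_ak_b)$, and your reduction of algebraic independence to the non-vanishing of $K^{ab}$ on the annihilator of $\mathrm{span}\{t^a,\phi^a\}$ (e.g.\ $K^{\theta\theta}\neq 0$, visible from \eqref{Kabdef} together with \eqref{eq:Kintet}) is exactly the irreducibility of Carter's Killing tensor that makes the statement nontrivial. The paper implicitly relies only on the weaker observation that the powers $\mathcal K^n$ have pairwise distinct differential orders; your version buys the stronger and cleaner statement that the algebra is free, at the modest cost of the symbol bookkeeping. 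Your closing caveat is also apt: freeness as an algebra of differential operators does not preclude some elements acting trivially on $\ker\mathcal O$ or generating trivial currents (cf.\ appendix \ref{sec:trivial}); the theorem as stated does not require ruling this out, and neither proof attempts to.
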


Hence, by the general scheme, if we have  
solutions to $\mathcal{O} \tilde \Upsilon = 0 = \mathcal{O}^\dagger \Upsilon$,
and symmetry operators $\mathcal{A}, \mathcal{B}$, then the bilinear 
form $\Pi[\mathcal{A}\tilde \Upsilon, \mathcal{B} \Upsilon]$, with $\Pi$ as in \eqref{Pidef}, is conserved, i.e. unchanged under local deformations of the Cauchy surface $\Sigma$. We caution the reader that  such bilinear forms can be trivial, i.e., be equivalent to forms that are conserved identically; see appendix \ref{sec:trivial} for some discussion. 
 
It is possible to derive symmetry operators also for the linearized Einstein tensor $\mathcal{E}$ (and for the Maxwell equations) on Kerr or more generally, a Petrov type D spacetime. Let $n=0,1,2,\dots$ and set
\begin{equation}
\label{Cndef}
\mathcal{C}_n = \mathcal{S}^\dagger \mathcal{K}^n \zeta^{2s} \mathcal{T}',
\end{equation}
as well as
\begin{equation}
\mathcal{D}_n = \mathcal{T}^\dagger \mathcal{K}^n \zeta^{2s} \mathcal{S}',
\end{equation}
where for spin-2 considered here we should take $s=2$, and where we use the GHP priming operation. 
Then
\begin{equation}
\begin{split}
\mathcal{E} \mathcal{C}_n =& \mathcal{E} \mathcal{S}^\dagger \mathcal{K}^n \zeta^{2s} \mathcal{T}' \\
=& \mathcal{T}^\dagger \mathcal{O}^\dagger \mathcal{K}^n \zeta^{2s} \mathcal{T}'\\ 
=& \mathcal{T}^\dagger \mathcal{K}^n \mathcal{O}^\dagger \zeta^{2s} \mathcal{T}' \\
=& \mathcal{T}^\dagger \mathcal{K}^n \zeta^{2s} \mathcal{O}' \mathcal{T}'\\
=& \mathcal{T}^\dagger \mathcal{K}^n \zeta^{2s} \mathcal{S}' \mathcal{E}\\
=& \mathcal{D}_n \mathcal{E}
\end{split}
\end{equation}
where we used twice the Teukolsky-Wald identity \eqref{SEOT}, the commutation $[\mathcal{O}^\dagger, \mathcal{K}]=0$, as well as the intertwining 
relation $\mathcal{O}^\dagger \zeta^{2s} = 
\zeta^{2s} \mathcal{O}'$.
When acting on a perturbation $h_{ab}$, $\mathcal{T}'(h)$ gives the perturbed Weyl scalar $\psi_4$, which is gauge invariant. Therefore, we see that $\mathcal{C}_n(h)=0$ for any gauge perturbation $h_{ab} = \mathcal{L}_\xi g_{ab}$. 

By the results of appendix \ref{app:B} another symmetry operator for $\mathcal E$ would be $\mathcal{C}_n = \mathcal{S}^\dagger \mathcal{G}^n \zeta^{2s} \mathcal{T}'$, with $\mathcal{D}_n = \mathcal{T}^\dagger \mathcal{G}^{\dagger n} \zeta^{2s} \mathcal{S}'$ (with similar proof, see appendix \ref{app:B} for the definition of $\mathcal G$), and further symmetry operators are obtained by the GHP prime- and overbar operations applied to these $\mathcal{C}_n$'s. Finally, by putting $s=1$ in the above expressions, and defining $\mathcal{T}, \mathcal{S}$ so that the analog of the Teukolsky-Wald identity \eqref{SEOT} holds for electromagnetic perturbations, where $({\mathcal E}A)_a = \nabla^b \nabla_{[a} A_{b]}$, we get similar operators in the electromagnetic case.

As a consequence, in all cases, $\mathcal{C}_n$ give symmetry operators for $\mathcal{E}$ of order $4+2n$ for spin-2 and of order $2+2n$
for spin-1. Regarding our operator $\mathcal{C}_0$ for spin-2, we remark that a very similar looking operator has been considered by \cite{grant2020class}, Eq.~III.14. 
Regarding our operator $\mathcal{C}_1$, 
a similar looking operator has been considered in 
\cite{grant2020class}, Eq.~III.47 and also in 
\cite{aksteiner2019symmetries}, Thm.~16. However, closer inspection of the operator\footnote{\cite{grant2020class}, Eq. III.14 on the other hand is manifestly local.} in \cite{grant2020class}, Eq. III.47 shows that it is non-local, while our operators are all local and also manifestly GHP covariant. The relation of our operators $\mathcal{C}_1$ to the order 6 symmetry operator asserted in \cite{aksteiner2019symmetries} is not completely clear to us and the same goes for our other operators $\mathcal{C}_1'$, etc. For spin-1, symmetry operators of orders 2 and 4 have been discussed in \cite{grant2020conserved,andersson2015spin}, and the comparison to ours is qualitatively similar.\footnote{In the spin-0 case where ${\mathcal E} \phi = \nabla^a \nabla_a \phi$, the corresponding symmetry operators are powers ${\mathcal C}_n={\mathcal K}^n$ where 
$\mathcal{K} = \nabla_a K^{ab} \nabla_b$, which are of order $2n$ and have already been described.}

By the general theory, for example ($n=0,1,2, \dots$)
 \begin{equation}
\chi_{(n)}[h] = W[\overline{\mathcal{C}_0 h},  \mathcal{C}_n h]
\end{equation}
with $W$ as in \eqref{Wdef} are conserved for all solutions $h_{ab}$ to the linearized Einstein equations, i.e.~unchanged under local deformations of the Cauchy surface $\Sigma$. The corresponding conserved currents are 
 \begin{equation}
 \label{jdef}
j^a_{(n)} = w^a[\overline{\mathcal{C}_0 h},  \mathcal{C}_n h]
\end{equation}
with $w^a$ as in \eqref{wadef}. Note that each $j^a_{(n)}$ is 
local and gauge invariant from the properties of $\mathcal{C}_n$. The concrete expressions of $j^a_{(n)}$ 
are very long and contain $2n+9$ derivatives of $h_{ab}$.
For the reason explained below, we call $j^a_{(n)}$ the ``Carter current(s)''.

To gain some insight into the meaning of the conserved quantities $\chi_{(n)}$, we make a WKB (high frequency) analysis similar to \cite{green2016superradiant}, see also \cite{grant2020class}. If the momentum of the 
sharply collimated WKB wave packet $h_{ab}$ is $p_a$
and its amplitudes defined with respect to a suitable basis of polarization tensors are $A_{+,\times}$, the result is
\begin{equation}
\begin{split}
\label{chinint}
\chi_{(n)}[h] = & \int_\Sigma j^a_{(n)} \dd \Sigma_a \\
\sim &  \, \, -i (-1)^n \int_\Sigma p^a 
{\rm Im}(A_+ \bar A_\times) \times \\
& \qquad \times Q(p)^{n+4} \, \dd \Sigma_a
\end{split}
\end{equation}
where $K^{ab} p_a p_b = Q(p)$ denotes the  Carter constant. See appendix \ref{app:WKB} for more 
detail on the derivation of this formula 
and on the 
precise definitions of the WKB wave functions, polarizations, etc.

We can obviously form alternative conserved quantities by other combinations of the various symmetry operators of the linearized Einstein operator described above giving e.g., the GHP primed version of our Carter currents $j^{a \prime}_{(n)}$. We note also that such currents could have alternatively been constructed from $\pi^a$ \eqref{pidef}, taking $\Upsilon = \zeta^4 \psi_4$ and $\tilde \Upsilon = \psi_0$ and acting on those with various symmetry operators for the Weyl scalars, as described above. 

We finally remark that conserved currents  for metric perturbations related to Carter's constant have also been considered in \cite{grant2020class}, Eqs. IV.14-16. Eq. IV.14 is very similar to our $j_{(0)}^a$ but their currents Eqs. IV.15-16 are different from our Carter currents $j^a_{(n)}$ or their GHP primes because unlike ours, they are based on non-local currents requiring a mode decomposition of the solutions.

\section{Bilinear form from $t$--$\phi$ reflection}\label{sec:bilinear_tphi}

In the previous section, we combined the basic conserved bilinear form \eqref{Pidef} with symmetry operators, which arise in particular 
from the Killing vector fields of Kerr. One naturally expects that a similar construction should be possible for the discrete isometry of
Kerr, namely the $t$--$\phi$ reflection map $J: (t,\phi) \to (-t,-\phi)$ where here and in the following we refer to Boyer-Lindquist coordinates.
However, just as for Killing vectors, some care has to be taken when defining the action of $J$ on GHP scalars with nontrivial weights 
$(p,q)$. So we first turn to this issue.

The map $J$ swaps the null directions $l^a$ and $n^a$ and changes the
  orientation on the orthogonal complement of these null directions
  spanned by $m^a$, $\bar m^a$.  
  There must thus be $\Lambda$, $\Gamma$ depending on the null tetrad
  such that $J_* l^a = -\Lambda n^a$, $J_* n^a = -\Lambda^{-1} l^a$, and
  $J_* m^a = e^{i\Gamma} \bar m^a$, where we have defined $J$ to act on
  tensors by the push-forward. 
  By analogy with the previous
  case of isometries which are continuously deformable to the identity, 
  it is then natural to define for $\eta \GHPwt (p,q)$ a GHP
  reflection
  \begin{equation}
    \label{Jdef}
    \mathcal J \eta \equiv i^{p+q} \lambda^{-p} \bar \lambda^{-q} \eta \circ J \GHPwt (-p,-q) 
  \end{equation}
  in the given frame. 
  
  The operator $\mathcal J$ is evidently a GHP priming operation combined with $t \to -t, \phi \to -\phi$, and
 is therefore easily seen to be GHP covariant (i.e.~defined intrinsically as a map from sections in $L_{p,q}$ to sections in $L_{-p,-q}$, irrespective of the chosen frame), 
 but, by contrast to the ``pull-back'' arising from isometries continuously connected to the 
  identity as considered above, it changes the GHP weights. In this sense it is similar to the CPT operator arising in quantum field theory. 
  It is clear that $\mathcal J^2 = 1$ and one can relatively easily show the ``anti-commutation'' relations 
  $\GHPLie_{t} \mathcal J = - \mathcal J \GHPLie_{t}$, $\GHPLie_{\varphi} \mathcal J = - \mathcal J \GHPLie_{\varphi}$
  with the GHP Lie-derivative defined above.
 We also note an important intertwining property
of the $t$--$\phi$ reflection operator $\mathcal J$ with the Teukolsky operator and its adjoint, namely,
\begin{align}\label{eq:OJ}
  \mathcal O \Psi_2^{4/3} \mathcal J = \Psi_2^{4/3} \mathcal J \mathcal O^\dagger,
\end{align}
where we used basic properties of gravito-magnetic field $B_a$ and its GHP prime $B'_a$, as well as the relation
\begin{equation}\label{eq:gradPsi}
  \Theta_a \Psi_2 = -3 (B_a + B'_a) \Psi_2.
\end{equation}
 In the Kinnersley frame and Boyer-Lindquist coordinates (see appendix \ref{app:D}), the $\mathcal J$ operator corresponds to sending $t \to -t, \phi \to -\phi$
 and multiplication according to \eqref{Jdef} by appropriate powers of $\lambda, \bar \lambda$, where $\lambda$ is given in this case 
 explicitly by 
\begin{equation}\label{eq:boostParams}
  \lambda 
  = \sqrt{2} (r-ia \cos \theta) \Delta(r)^{-1/2}.
\end{equation}

We are now in a position to define the bilinear form. For simplicity, we restrict at first to entries having compact support on the 
Cauchy surface $\Sigma$ in order to avoid any convergence problems.

\begin{definition}[Bilinear form for compact support]
  Let $\Upsilon_1, \Upsilon_2 \GHPwt (-4,0)$ be smooth GHP scalars
  of compact support on $\Sigma$ in the kernel of
  $\mathcal O^\dagger$. Then we set
  \begin{equation}\label{bilinear}
    \llangle \Upsilon_1, \Upsilon_2 \rrangle \equiv \Pi_\Sigma[\Psi_2^{4/3} \mathcal J \Upsilon_1, \Upsilon_2]
  \end{equation}
  with $\Pi$ as in \eqref{Pidef}.
\end{definition}

\begin{lemma}\label{lemma:compactsupport}
  Under the conditions of the definition, we have
  \begin{enumerate}[label=(\roman*), start=1]
  \item $\llangle \Upsilon_1, \Upsilon_2 \rrangle$ is $\CC$-linear in both entries.
  \item
    $\llangle \Upsilon_1, \Upsilon_2 \rrangle=\llangle \Upsilon_2,
    \Upsilon_1 \rrangle$,
  \item
    $\llangle \GHPLie_t\Upsilon_1, \Upsilon_2 \rrangle=\llangle
    \Upsilon_1, \GHPLie_t \Upsilon_2 \rrangle$ for $t^a$ the time
    translation Killing field, and
  \item $\llangle \Upsilon_1, \Upsilon_2 \rrangle$ is independent of
    the chosen Cauchy surface $\Sigma$.
  \end{enumerate}
\end{lemma}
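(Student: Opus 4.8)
The plan is to establish the four items in the order (i), (iv), (iii), (ii), proving the last on the slice $\Sigma_0=\{t=0\}$, which is legitimate once (iv) is in hand. Item (i) is immediate: the form $\Pi$ of \eqref{Pidef} is $\CC$-bilinear --- the whole adjoint/current construction of Section~\ref{sec:Bilinear} is set up $\CC$-linearly rather than anti-linearly --- and both the multiplication operator $\Psi_2^{4/3}$ and the reflection $\mathcal J$ of \eqref{Jdef} are $\CC$-linear, since $\mathcal J$ involves only the pull-back $\eta\circ J$ and multiplication by $i^{p+q}\lambda^{-p}\bar\lambda^{-q}$, not a conjugation. For item (iv), the intertwining property \eqref{eq:OJ} gives $\mathcal O(\Psi_2^{4/3}\mathcal J\Upsilon_1)=\Psi_2^{4/3}\mathcal J\,\mathcal O^\dagger\Upsilon_1=0$, so $\Psi_2^{4/3}\mathcal J\Upsilon_1\in\ker\mathcal O$ while $\Upsilon_2\in\ker\mathcal O^\dagger$; the general scheme of Section~\ref{sec:Bilinear} then makes $\Pi_\Sigma[\Psi_2^{4/3}\mathcal J\Upsilon_1,\Upsilon_2]$ invariant under local deformations of $\Sigma$, and since the Teukolsky operators are wave operators, compact support propagates between Cauchy slices so the Gauss' theorem boundary terms vanish.

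For item (iii), note that $\Psi_2$ has GHP weight $(0,0)$ and is annihilated by $\GHPLie_t$ (in Boyer--Lindquist coordinates it depends only on $r,\theta$), so by the Leibniz rule $\GHPLie_t(\Psi_2^{4/3}\eta)=\Psi_2^{4/3}\GHPLie_t\eta$; with the anti-commutation $\GHPLie_t\mathcal J=-\mathcal J\GHPLie_t$ this gives $\Psi_2^{4/3}\mathcal J(\GHPLie_t\Upsilon_1)=-\GHPLie_t(\Psi_2^{4/3}\mathcal J\Upsilon_1)$. Writing $\tilde\Upsilon_1\equiv\Psi_2^{4/3}\mathcal J\Upsilon_1\in\ker\mathcal O$, it remains to prove $\Pi[\GHPLie_t\tilde\Upsilon_1,\Upsilon_2]=-\Pi[\tilde\Upsilon_1,\GHPLie_t\Upsilon_2]$. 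Since $\GHPLie_t$ commutes with $\Theta_a$ and annihilates $B_a,g_{ab}$, the current \eqref{pidef} (which has weight $(0,0)$, so $\GHPLie_t\pi^a=\mathcal L_t\pi^a$) obeys the derivation identity $\mathcal L_t\pi^a[\tilde\Upsilon_1,\Upsilon_2]=\pi^a[\GHPLie_t\tilde\Upsilon_1,\Upsilon_2]+\pi^a[\tilde\Upsilon_1,\GHPLie_t\Upsilon_2]$. Integrating over $\Sigma$, the left-hand side vanishes because $\int_{\varphi_\epsilon(\Sigma)}\pi^a[\tilde\Upsilon_1,\Upsilon_2]\,\dd\Sigma_a$ --- with $\varphi_\epsilon$ the flow of $t^a$ --- is $\epsilon$-independent for a conserved, compactly supported current, so its $\epsilon$-derivative $\int_\Sigma\mathcal L_t\pi^a\,\dd\Sigma_a$ is zero. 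Hence $\Pi[\GHPLie_t\tilde\Upsilon_1,\Upsilon_2]=-\Pi[\tilde\Upsilon_1,\GHPLie_t\Upsilon_2]$, and combining the two minus signs gives $\llangle\GHPLie_t\Upsilon_1,\Upsilon_2\rrangle=\llangle\Upsilon_1,\GHPLie_t\Upsilon_2\rrangle$.

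Item (ii) is the crux and is where the $t$--$\phi$ reflection is used essentially. By (iv) we may evaluate on $\Sigma_0=\{t=0\}$, which is $J$-invariant as a set. Because $J$ preserves the orientation of $M$ (it reverses two coordinates) but reverses the future-directed normal of $\Sigma_0$, a change of variables gives $\int_{\Sigma_0}V^a\,\dd\Sigma_a=-\int_{\Sigma_0}(\mathcal JV)^a\,\dd\Sigma_a$ for any GHP weight $(0,0)$ vector field $V^a$ ($\mathcal J$ acting by pull-back). Apply this to $V^a=\pi^a[\Psi_2^{4/3}\mathcal J\Upsilon_1,\Upsilon_2]$. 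Using that $\mathcal J$ is multiplicative, that $\mathcal J^2=1$ and $\mathcal J\Psi_2^{4/3}=\Psi_2^{4/3}$, the GHP covariance of $\mathcal J$ (so that it commutes with $\Theta_a$) and the reflection rules exchanging $B_a\leftrightarrow B'_a$, together with the consequence $(\Theta_a+4B_a)\Psi_2^{4/3}=\Psi_2^{4/3}(\Theta_a-4B'_a)$ of \eqref{eq:gradPsi}, a term-by-term computation from \eqref{pidef} yields
\begin{equation}
\mathcal J\big(\pi^a[\Psi_2^{4/3}\mathcal J\Upsilon_1,\Upsilon_2]\big)=-\,\pi^a[\Psi_2^{4/3}\mathcal J\Upsilon_2,\Upsilon_1],
\end{equation}
so the two minus signs cancel and $\llangle\Upsilon_1,\Upsilon_2\rrangle=\llangle\Upsilon_2,\Upsilon_1\rrangle$. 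I expect the main obstacle to be precisely this reflection identity for the current: tracking the GHP weights and $\lambda^{-p}\bar\lambda^{-q}$ factors through $\mathcal J$ and the interplay of priming with $\Theta_a$ and $B_a$ is the same delicate bookkeeping underlying \eqref{eq:OJ}, and fixing the overall sign --- which is exactly what makes the form symmetric rather than antisymmetric --- depends on combining it correctly with the orientation behaviour of $\Sigma_0$ under $J$. The Petrov type~D character enters both through the global existence of the principal null directions (hence of $\mathcal J$) and through \eqref{eq:gradPsi}.
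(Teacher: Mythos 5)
Your proof is correct and follows essentially the same route as the paper's: (i) from bilinearity of the construction, (iv) from the intertwining relation \eqref{eq:OJ} plus conservation of the current, (iii) from the vanishing of $\int_\Sigma \mathcal{L}_t\pi$ (which the paper phrases via Cartan's magic formula and Stokes's theorem, but this is the same fact as your flow-invariance argument) together with $\GHPLie_t\mathcal J=-\mathcal J\GHPLie_t$, and (ii) from the reflection identity for $\pi$ using \eqref{eq:gradPsi} combined with the orientation reversal of $\Sigma$ under $\mathcal J$ — your displayed identity is exactly the paper's equation with $\mathcal J$ applied to both sides. No gaps.
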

Before we prove this lemma, we remark that, e.g. by \eqref{eq:Kinnersley-bilinear}, the bilinear form may be viewed as defined on the initial data of the Teukolsky equation on 
the Cauchy surface $\Sigma$. On an initial data set $\GHPLie_t$ corresponds to the action of a suitably definined Hamiltonian operator $\mathcal{H}$. 
Then item (iii) corresponds to the statement that 
\begin{equation}
    \llangle \Upsilon_1, \mathcal{H} \Upsilon_2 \rrangle = \llangle \mathcal{H} \Upsilon_1, \Upsilon_2 \rrangle, 
\end{equation}
i.e.~to the fact that the Hamiltonian operator is symmetric with respect to our bilinear form. We refer the interested reader to appendix \ref{sec:Lagrangian-Hamiltonian} for details 
on the Hamiltonian formulation of the Teukolsky equation. 

We also note that although we defined our bilinear form on $s=-2$
GHP scalars (i.e., solutions to the adjoint Teukolsky equation), we
could also define a bilinear form on $s=+2$ solutions to the original
Teukolsky equation. In this case, we set
$\llangle\tilde\Upsilon_1, \tilde\Upsilon_2\rrangle \equiv
\Pi_\Sigma[\tilde\Upsilon_1, \Psi_2^{-4/3} \mathcal J
\tilde\Upsilon_2]$. It can be shown that the $s=+2$ bilinear form
satisfies all the same properties as the $s=-2$ form. 
\begin{proof}
  \begin{enumerate}[label=(\roman*),start=1]
  \item This is obvious from the definition.
  \item By explicit calculation, we have with $\pi_{abc} = \epsilon_{abcd} \pi^d$ and $\pi^a$ as in \eqref{pidef},
  \begin{widetext}
    \begin{align}
      \pi_{abc}(\Psi_2^{4/3}\mathcal J \Upsilon_1, \Upsilon_2) &= \epsilon_{dabc} \left[ (\Psi_2^{4/3} \mathcal J \Upsilon_1) (\Theta^d - 4 B^d) \Upsilon_2 - \Upsilon_2 (\Theta^d + 4 B^d) (\Psi_2^{4/3} \mathcal J \Upsilon_1 )\right] \nonumber \\
                                                               &= \mathcal J \epsilon_{dabc} \left[ \Psi_2^{4/3} \Upsilon_1 (\Theta^d - 4 B^{\prime d}) (\mathcal J \Upsilon_2 ) - (\mathcal{J} \Upsilon_2) (\Theta^d + 4 B^{\prime d}) (\Psi_2^{4/3} \Upsilon_1) \right] \nonumber\\
                                                               &= \mathcal J \epsilon_{dabc} \left[ \Upsilon_1 (\Theta^d + 4 B^d) (\Psi_2^{4/3} \mathcal J \Upsilon_2) - (\Psi_2^{4/3} \mathcal J \Upsilon_2) (\Theta^d - 4 B^d) \Upsilon_1 \right] \nonumber\\
                                                               &= - \mathcal J \pi_{abc}(\Psi_2^{4/3}\mathcal J \Upsilon_2, \Upsilon_1),
    \end{align}
    \end{widetext}
    using $\mathcal J^2 = 1$ and~\eqref{eq:gradPsi}. Now integrate over $\Sigma$. Since
    $\mathcal J$ reverses the orientation of $\Sigma$, the claim
    follows.
  \item 
    We first remark that, by Cartan's magic formula, we have that on solutions (where $\pi = \pi_{abc} \dd x^a \wedge \dd x^b \wedge \dd x^c$),
    \begin{equation}
 \mathcal{L}_t \pi = \dd ( t \cdot \pi),
    \end{equation}
    if $\dd \pi=0$. Integrating over $\Sigma$ and using Stokes's theorem,
    \begin{equation}\label{eq:pi-cartan}
    \int_\Sigma  \mathcal{L}_t \pi = \int_{\partial \Sigma} t \cdot \pi = 0.
    \end{equation}
    as, for compact support data, the contribution on $\partial \Sigma$ evaluates to zero. 
    In our case,
    $\Upsilon_1 \in \ker \mathcal{O}^\dagger$, therefore
    $\Psi_2^{4/3} \mathcal J \Upsilon_1 \in \ker \mathcal O$, thus
    $\pi(\Psi_2^{4/3}\mathcal J \Upsilon_1, \Upsilon_2)$ is indeed closed, $\dd \pi=0$.
    On the other hand, we have, since background quantities are all
    GHP-Lie-derived by $t^a = M^{1/3} \xi^a$, and since
    $\mathcal J \GHPLie_t = - \GHPLie_t \mathcal J$, that
    \begin{align}\label{eq:lemmaiiib}
       & \mathcal{L}_t \pi(\Psi_2^{4/3}\mathcal J \Upsilon_1, \Upsilon_2)\nonumber \\
       &\quad= \pi( \Psi_2^{4/3} \GHPLie_t \mathcal J \Upsilon_1, \Upsilon_2) + \pi(\Psi_2^{4/3} \mathcal J \Upsilon_1, \GHPLie_t \Upsilon_2) \nonumber \\
       &\quad= - \pi(\Psi_2^{4/3} \mathcal{J} \GHPLie_t \Upsilon_1, \Upsilon_2) + \pi(\Psi_2^{4/3} \mathcal J \Upsilon_1, \GHPLie_t \Upsilon_2).
    \end{align}
    Inserting this into the left hand side of \eqref{eq:pi-cartan} evaluated on the solutions $\Psi_2^{4/3}\mathcal J \Upsilon_1$ and $\Upsilon_2$ immediately yields the claim.

  \item Holds by Gauss's theorem because $\pi$ is closed on solutions,
    and $\Psi_2^{4/3} \mathcal J$ takes $\ker \mathcal O^\dagger$ into
    $\ker \mathcal O$.
  \end{enumerate}
\end{proof}

We end this section with an explicit expression of our bilinear form in Boyer-Lindquist coordinates and the Kinnersley frame:
\begin{widetext}
\begin{align}\label{eq:Kinnersley-bilinear}
  \llangle \Upsilon_1, \Upsilon_2 \rrangle 
  = 4 M^{4/3} 
  \int_\Sigma \dd  r \, \dd\theta \dd\phi\, \frac{\sin\theta}{\Delta^2} \Bigg[
   & 
  \Upsilon_1\Big|_{\substack{t\to-t \\ \phi\to-\phi}} \left( \frac{\Lambda}{\Delta}\partial_t + \frac{2Mra}{\Delta}\partial_\phi + 2 \left[ -r - ia\cos\theta + \frac{M}{\Delta}(r^2 - a^2)\right] \right) \Upsilon_2
    \nonumber\\
  & 
    + \Upsilon_2 \left[\left( \frac{\Lambda}{\Delta}\partial_t + \frac{2Mra}{\Delta}\partial_\phi + 2 \left[ -r - ia\cos\theta + \frac{M}{\Delta}(r^2 - a^2)\right] \right) \Upsilon_1\right]_{\substack{t\to-t \\ \phi\to-\phi}}
   \Bigg],
   \end{align}
\end{widetext}   
where we refer to appendix \ref{app:D} for the definitions of 
$\Sigma$, $\Delta$, and  $\Lambda$.\footnote{{Depending on the context and following a standard notation, we use the symbol $\Sigma$ for the metric function or a co-dimension one surface.}} 

\section{Quasinormal mode orthogonality}
\label{sec:Ortho}
\subsection{Quasinormal modes}
Consider modes of the form
\begin{equation}\label{eq:modes}
 {}_s\Upsilon_{\ell m\omega} = e^{-i\omega t + i m \phi} \Rh(r) \Sh(\theta),
\end{equation}
with $m \in \mathbb Z$ and $\omega \in \mathbb C$, in the Kinnersley
frame. This form leads to separation of the spin-$s$
Teukolsky equation~\cite{Teukolsky:1973ha}, $\mathcal O \Upsilon = 0$ (for any integer spin $s$), into an angular equation,
\begin{widetext}
\begin{align}\label{eq:Sph eq}
  \left[\frac{1}{\sin \theta} \frac{ \dd}{\dd \theta}\left(\sin \theta \frac{\dd \,}{\dd \theta} \right) \right. \left. + \left( K -  \frac{m^2+s^2+2 m s \cos \theta}{\sin^2 \theta}  - a^2 \omega^2 \sin^2 \theta -2 a \omega s \cos \theta \right) \right] \Sh(\theta) = 0,
\end{align}
and a radial equation,
\begin{align}\label{eq:radial}
   \left[ \Delta^{-s} \frac{\dd}{\dd r} \left( \Delta^{s+1} \frac{\dd}{\dd r} \right) \right. \left. + \left( \frac{H^2 - 2 i s (r-M)H}{\Delta} + 4 i s \omega r+2 a m \omega - K +s(s+1) \right) \right] \Rh(r) = 0,
\end{align}
\end{widetext}
with $H \equiv (r^2+a^2)\omega - a m$. Here $K$ is a separation
constant. Imposing regularity at the poles $\theta=0,\pi$, the angular
equation leads to a discrete set of modes $\Sh$ and separation
constants $\Kh$, both of which are indexed by
$\ell \in \mathbb Z^{\ge \max(|m|, |s|)}$. The functions
$\Sh(\theta)e^{im\phi} $ are known as spin-weighted spheroidal
harmonics~\cite{Teukolsky:1973ha}. For $\omega \in \mathbb R$, the
angular problem reduces to a Sturm-Liouville eigenvalue problem. 
Modes with the same $s$,
$m$, and real $\omega$, but different $\ell$ are orthogonal, and
we normalize them such that
\begin{equation}\label{eq:theta-orthogonality}
  \int_0^\pi \dd\theta\, \sin\theta \, {}_sS_{\ell m\omega}(\theta) {}_sS_{\ell'm\omega}(\theta) = \delta_{\ell\ell'}.
\end{equation}
Orthogonality can be checked by verifying that the angular operator is
symmetric with respect to this product. 

To discuss boundary conditions of the radial equation it is convenient to
introduce a ``tortoise'' coordinate $\dd r_*= (r^2+a^2)/\Delta \dd r$, see
\eqref{eq:rstar}. 
For fixed $s,l,m,\omega$ one considers the  solutions
$R^{\rm in}$ and $R^{\rm up}$ ``defined'' by the ``boundary conditions''
\begin{subequations}\label{eq:R bcs}
  \begin{align}
    &R^{\rm in}  \sim  \frac{e^{-ikr_*}}{\Delta^{s}}, \qquad  r_*\to-\infty,\\
    &R^{\rm up}  \sim  \frac{e^{i\omega r_*}}{r^{2s+1}}, \qquad  r_*\to\infty,
  \end{align}
\end{subequations}
where 
$k \equiv \omega-m\Omega_H$, where $\Omega_H$ is the angular frequency of the outer horizon
$\Omega_H=a/(2Mr_+)$, and where the radii of the inner- and outer horizons (roots of $\Delta$) are denoted by $r_\pm$, respectively. 

The conditions  \eqref{eq:R bcs} correspond physically to the
absence of incoming radiation from the past horizon and past null
infinity, respectively. As stated \eqref{eq:R bcs} do not really pick out uniquely a solution in the case ${\rm Im} \omega<0$ because we may always add a multiple of the subdominant solution as $|r_*| \to \infty$ without affecting the asymptotic behavior. 
More precisely, mode solutions may be obtained via series expansions \cite{Leaver1986}, involving three-term recurrence relations for the coefficients.
 Selecting the so-called ``minimal solution'' \cite{gautschi1967computational}  of the recurrence relations ensures that the series represenation converges at the horizon (in) or infinity (up).\footnote{This definition is satisfied by a radial solution of the form
\begin{equation}
    R(r) = e^{i\omega r}(r-r_-)^{-1-s+i\omega+i\sigma_+}(r-r_+)^{-s-i\sigma_+} f(r) ,
\end{equation}
where $\sigma_+=(\omega r_+-am)/(r_+-r_-)$ and $f(r)=\sum_{n=0}^{\infty}d_n\left(\frac{r-r_+}{r-r_-}\right)^n$ with $d_n$ coefficients that are a minimal solution to a three-term recursion relation~\cite{Leaver1985} so that the series is uniformly absolutely convergent as $r\rightarrow\infty$}
 Imposing both of these conditions simultaneously
\footnote{The problem is made
complicated, however, because $\omega$ and $K$ appear in both the
angular and radial equations, $\omega$ nonlinearly. One must jointly
solve both equations to obtain a self-consistent solution of this
nonlinear eigenvalue problem. Using Hamiltonian methods (see appendix \ref{sec:Lagrangian-Hamiltonian})
one can recast this as the eigenvalue problem $\mathcal{H} \Upsilon = i\omega \Upsilon$,
i.e., the problem is linear in $\omega$, but the angular and radial
problems remain coupled.}
gives rise to a discrete set of quasinormal modes $\omega_n \in \mathbb{C}$, where $n=0,1,2,\ldots$ are the so-called ``overtone'' numbers.   We restrict to frequencies with ${\rm Im}\, \omega \le 0$, as modes growing exponentially in time are not in the specturm of Kerr \cite{Whiting:1988vc}.

\begin{figure*}
\centering
\includegraphics[trim={0.cm 4.cm 0.cm 2.cm},clip,width=0.49\linewidth]{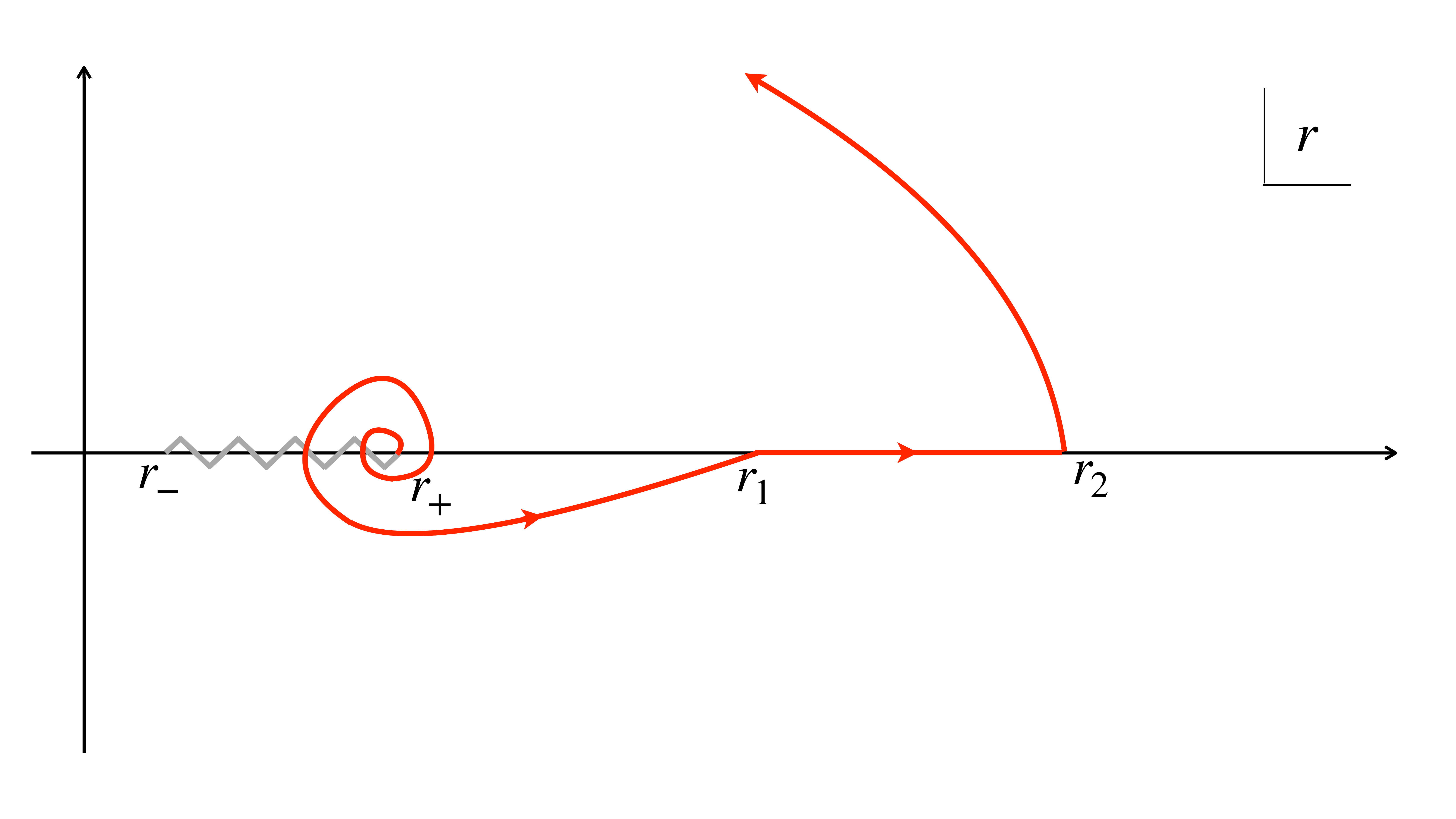}
\includegraphics[trim={0.cm 4.cm 0.cm 2.cm},clip,width=0.49\linewidth]{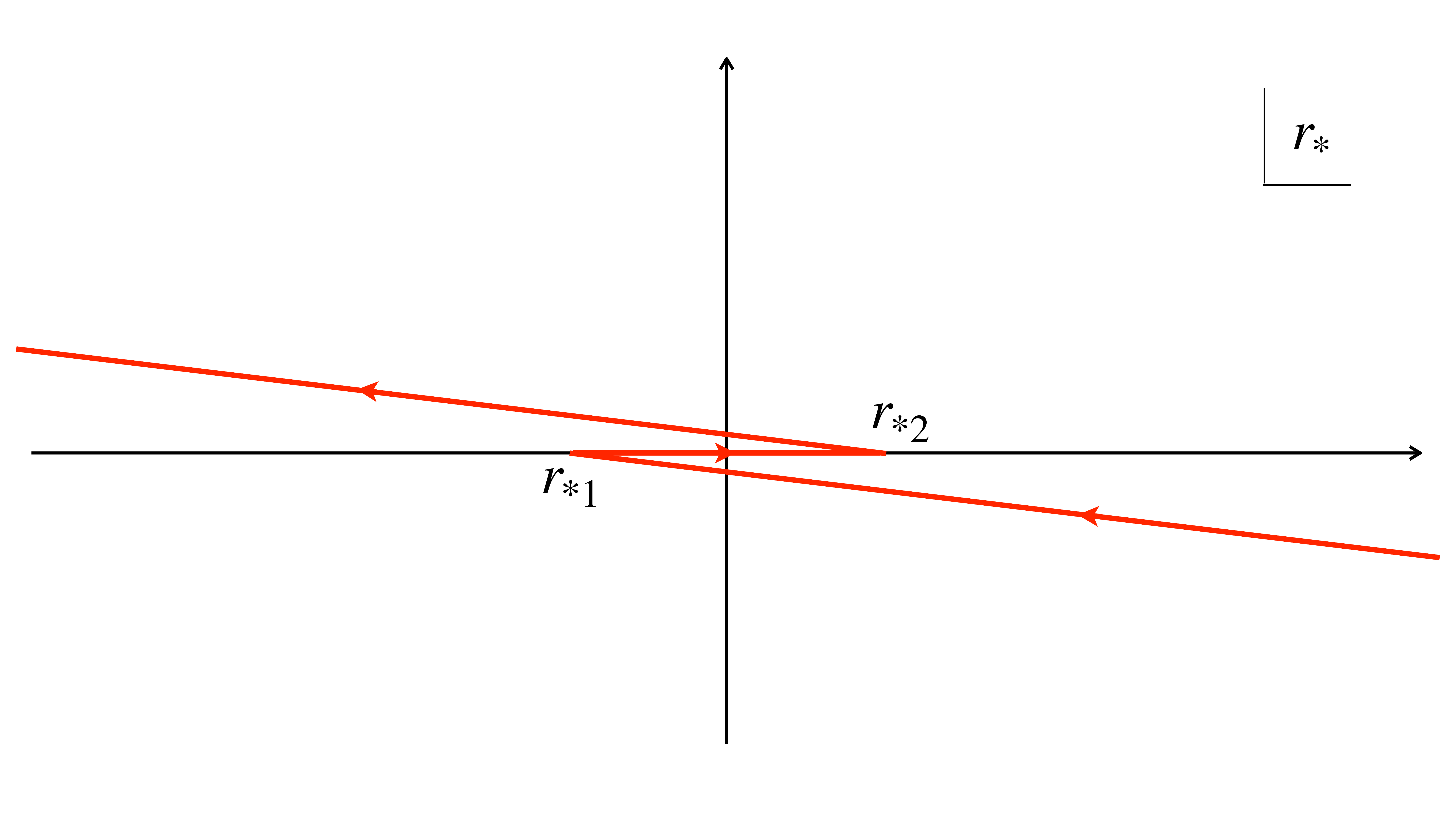}
\caption{{\it Left:} Sketch of the complex $r$ contour $C_*$ defining the bilinear form on quasinormal modes. The contour cannot be pulled back to the real axis because the integrand crosses (an infinite number of) different sheets associated with the branch points $r_-$ and $r_+$. {\it Right:} Same contour, but in the complex $r_*$ plane.
Note that this contour cannot be pulled back to the real axis due to the presence of Stokes lines along which the integrand of the bilinear form would diverge.}
\label{fig:r_contour}
\end{figure*}

\subsection{Bilinear form on quasinormal modes}

We would now like to extend our definition of the bilinear form $\llangle \cdot , \cdot \rrangle$, originally 
only for compactly supported solutions/data on the Cauchy surface $\Sigma$, to quasinormal modes. 
The immediate problem is that, according to the boundary conditions on the corresponding solutions 
to the radial equation, these blow up both at the horizon $r=r_+$ and infinity $r \to \infty$. 
In this subsection, inspired by the work of~\cite{LeungModes94}, we show that the Kerr
bilinear form can be defined for quasinormal mode data by a suitable
deformation of the radial integration into the complex plane.\footnote{
In the quantum mechanics literature, this method is known also as (exterior)
complex scaling \cite{Aguilar:1971ve}. 
Complex scaling and complex integration contours have already been used in the context of black hole quasinormal modes, see for instance~\cite{Bony2007,Dyatlov:2011jd} and~\cite{Glampedakis:2003dn,Leaver1986b}. } 

Consider the bilinear form acting on two quasinormal modes with
quasinormal frequencies $\omega_1$ and $\omega_2$. The 
integrand in the bilinear
form~\eqref{eq:mode-bilinear} goes as
$\sim e^{\pm i(\omega_1+\omega_2)r_\ast}$ as $r_\ast \to \pm\infty$,
and therefore diverges exponentially for
$\Im(\omega_1 + \omega_2) < 0$, which is the case for all modes that decay
in time. Therefore, we clearly see that the bilinear form as defined for compact support data~\eqref{eq:Kinnersley-bilinear} is divergent.

We can obtain a finite bilinear form by analytic continuation in $r$.
The radial mode functions $R^{\rm in/up}(r)$ are analytic with branch points at $r=r_\pm$ \cite{Leaver1986}, and we take the branch cut as the wiggly line in Fig.~\ref{fig:r_contour} going from $r_+$ to $r_-$. We take the branch cut for the tortoise coordinate
\eqref{eq:rstar} $r_*(r)$ to be identical, so that we can think of both the radial functions $R^{\rm in/up}$ and $r_*$ as 
defined on the same multisheeted covering of the twice cut complex $r$-plane. The integrand of the bilinear form, given by the 3-form 
$\pi_{abc} = \epsilon_{abcd} \pi^d$ [see \eqref{pidef}] evaluated on two mode 
functions as in \eqref{bilinear} or equivalently \eqref{eq:Kinnersley-bilinear}, therefore has an analytic continuation on the multi-sheeted complex $r$-plane. 

In \eqref{bilinear} or equivalently \eqref{eq:Kinnersley-bilinear}, we now define an integration contour going into 
this complex $r$-plane as shown qualitatively 
in fig. \ref{fig:r_contour}. In terms of $r_*(r)$, 
which is a function on the same multi-sheeted complex $r$-plane, the contour is defined in such a way that $0 < \arg((\omega_1 + \omega_2) r_\ast) < \pi$
on the right limit, and
$-\pi < \arg((\omega_1 + \omega_2) r_\ast) < 0$ on the left, then as
$|r_\ast| \to \infty$, 
the volume integral will converge exponentially with $|r_\ast|$.

To achieve this for any $\Im(\omega_1 + \omega_2) < 0$ me may take a snake
shaped contour $u \mapsto r_*(u,\epsilon)$ of the radial
coordinate in the complex $r_*$ plane, with the properties
\begin{equation}\label{eq:contour}
 \begin{cases}
   r_*(u,\epsilon)=u
   &\text{for $r_{*1} < r_* <r_{*2}$}\\
   \arg r_*(u,\epsilon) \to +\pi - \epsilon  & \text{for $r_* \to \infty$}\\
   \arg r_*(u,\epsilon) \to 0 + \epsilon &\text{for $r_* \to -\infty$,}
 \end{cases}
\end{equation}
{where $r_{*1}<0$, $r_{*2}>0$} can in principle be chosen arbitrarily. We give a sketch of this contour, $C_*$, which corresponds 
to one in terms of $r$, in the right panel of Fig.~\ref{fig:r_contour}. The corresponding 3-dimensional
 submanifold (depending on $\epsilon > 0$ and on $t \in \R$) of the analytically continued Kerr manifold $M_{\CC}$ is denoted by
$\Sigma_{\CC} = \{ (t,r_*(u,\epsilon),\theta,\phi)  \mid u \in \R \}$.
In practice, the angle $\epsilon>0$ is chosen sufficiently small such that the integral in the following definition of the bilinear form converges, 
\begin{align}
  \llangle \Upsilon_1, \Upsilon_2 \rrangle &= \Pi_{\Sigma_{\CC}}[\Psi_2^{4/3}\mathcal J \Upsilon_1, \Upsilon_2].  \\
 \nonumber
\end{align}
Replacing $\Sigma$ with the contour $\Sigma_{\mathbb{C}}$ as described in section~\ref{sec:bilinear_tphi}, 
thanks to the analyticity of the integrand and its fall off on $\partial \Sigma_{\mathbb{C}}$, all properties of the bilinear form of of lemma \ref{lemma:compactsupport} continue to hold on quasinormal modes. In particular, from item (iii) of lemma \ref{lemma:compactsupport}, we get $(\omega_1-\omega_2) \llangle \Upsilon_1 , \Upsilon_2 \rrangle=0$ for a pair of quasinormal modes with complex frequencies $\omega_1, \omega_2$. Furthermore, by (iv), the value of the bilinear form is independent of the precise choice of $t$, details of the complex integration contour such as the asymptotic angle $\epsilon$ against the real half-axes and/or  $r_{*1}, r_{*2}$, as long as the integrand is exponentially decaying. 

\begin{corollary}[Orthogonality of quasinormal modes]
  Let $\Upsilon_1$ and $\Upsilon_2$ be quasinormal modes for the $s=2$
  Teukolsky equation
  with frequencies $\omega_1$ and $\omega_2$. Then either
  $\llangle \Upsilon_1, \Upsilon_2 \rrangle = 0$ or
  $\omega_1 = \omega_2$.
\end{corollary}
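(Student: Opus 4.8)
The plan is to read orthogonality off from item~(iii) of Lemma~\ref{lemma:compactsupport}, after upgrading that item from compact-support data to quasinormal-mode data living on the complex contour $\Sigma_\CC$. The first (and essentially only nontrivial) step is this upgrade. I would revisit the proof of item~(iii), which rests on Cartan's magic formula $\mathcal L_t\pi=\dd(t\cdot\pi)$ combined with Stokes's theorem on $\Sigma$, and check that each ingredient survives the replacement $\Sigma\to\Sigma_\CC$: (a) the $3$-form $\pi_{abc}[\Psi_2^{4/3}\mathcal J\Upsilon_1,\Upsilon_2]$ built from \eqref{pidef} continues holomorphically to the complexified Kerr manifold $M_\CC$, because $R^{\rm in/up}$ and $r_*$ are analytic away from the cut joining $r_\pm$ while $\mathcal O$, $\mathcal J$ and $\Psi_2^{4/3}$ are built from holomorphic data there; (b) $\pi$ stays closed on $\Sigma_\CC$, since $\Psi_2^{4/3}\mathcal J$ still carries $\ker\mathcal O^\dagger$ into $\ker\mathcal O$ by the intertwining identity \eqref{eq:OJ}, which is a purely algebraic relation among holomorphic operators and hence valid on $M_\CC$; and (c) the boundary contribution $\int_{\partial\Sigma_\CC}t\cdot\pi$ vanishes -- this is exactly what the snake contour \eqref{eq:contour} with asymptotic opening angle $\epsilon>0$ is engineered for, since the integrand behaves as $\sim e^{\pm i(\omega_1+\omega_2)r_*}$ as $r_*\to\pm\infty$ and therefore decays exponentially along the deformed ends whenever $\Im(\omega_1+\omega_2)<0$. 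With (a)--(c) in hand the manipulation \eqref{eq:lemmaiiib} is unchanged, so $\llangle\GHPLie_t\Upsilon_1,\Upsilon_2\rrangle=\llangle\Upsilon_1,\GHPLie_t\Upsilon_2\rrangle$ holds for quasinormal modes.

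The second step is immediate. In the Kinnersley frame $w=0$, so on a GHP scalar $\GHPLie_t$ is just $\partial_t$, and on a mode $\Upsilon_j=e^{-i\omega_j t+im_j\phi}R_j(r)S_j(\theta)$ it acts by multiplication by $-i\omega_j$. Substituting into the symmetry relation of step one,
\begin{equation}
-i\omega_1\llangle\Upsilon_1,\Upsilon_2\rrangle=\llangle\GHPLie_t\Upsilon_1,\Upsilon_2\rrangle=\llangle\Upsilon_1,\GHPLie_t\Upsilon_2\rrangle=-i\omega_2\llangle\Upsilon_1,\Upsilon_2\rrangle,
\end{equation}
whence $(\omega_1-\omega_2)\llangle\Upsilon_1,\Upsilon_2\rrangle=0$ and the stated dichotomy follows. (Equivalently one may invoke item~(iv): once $\mathcal J$ has reversed the time dependence of the first slot, the integrand of \eqref{bilinear}/\eqref{eq:Kinnersley-bilinear} carries an overall factor $e^{i(\omega_1-\omega_2)t}$, yet the integral is $t$-independent, which forces the same conclusion; the two routes are logically the same.)

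The real work, and the main obstacle, lies entirely in step one: making the contour-deformation argument rigorous. One must establish that the minimal-solution radial functions genuinely extend to single-valued analytic functions on the multi-sheeted twice-cut $r$-plane of Fig.~\ref{fig:r_contour} with no singularities besides the branch points $r_\pm$; control the \emph{full} integrand -- including the polynomial prefactors and the $\Delta^{-2}$ weight of \eqref{eq:Kinnersley-bilinear}, not merely the leading WKB exponential -- along $\Sigma_\CC$, so that the volume integral converges and the $\partial\Sigma_\CC$ term vanishes for $\epsilon$ small enough; and verify that the real three-manifold $\Sigma_\CC\subset M_\CC$ is an admissible ``Cauchy-like'' slice for the holomorphic flow of $t^a$, so that deformation invariance and the Cartan/Stokes argument truly apply. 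It is also worth recording separately that the corollary is only a dichotomy: confirming that $\llangle\Upsilon,\Upsilon\rrangle\neq 0$ for each individual mode, so that the form is a genuine nondegenerate pairing on the quasinormal spectrum, is a distinct question not settled by this argument. Everything past step one is bookkeeping.
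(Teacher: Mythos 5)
Your proposal is correct and follows essentially the same route as the paper: extend item~(iii) of Lemma~\ref{lemma:compactsupport} to the complexified slice $\Sigma_\CC$ using analyticity of the integrand and its exponential falloff on $\partial\Sigma_\CC$, then let $\GHPLie_t$ act as $-i\omega_j$ on each mode to conclude $(\omega_1-\omega_2)\llangle\Upsilon_1,\Upsilon_2\rrangle=0$. Your closing remarks on the remaining analytic work (single-valuedness of the continued radial functions, control of the full integrand rather than just the WKB exponential, and possible degeneracy of the pairing) flag exactly the points the paper also leaves at the level of assertion.
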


Our bilinear form takes the following form on quasinormal mode solutions~\eqref{eq:modes}. After plugging two $s=-2$ mode solutions in separated form
into~\eqref{eq:Kinnersley-bilinear}, we can carry out the $\phi$
integration to obtain 
\begin{widetext}
\begin{eqnarray}\label{eq:mode-bilinear}
  &&\llangle \Upsilon_{\ell_1m_1\omega_1}, \Upsilon_{\ell_2m_2\omega_2} \rrangle \nonumber\\&= &8\pi M^{4/3} \delta_{m_1m_2} e^{-i(\omega_2-\omega_1)t}
  \int_{C_*} \dd r_* \int_0^\pi  \dd\theta\, \frac{\sin\theta}{(r^2+a^2) \Delta} S_1(\theta) S_2(\theta)  R_1(r)  R_2(r)  
  \bigg( - \frac{i\Lambda}{\Delta}(\omega_1+\omega_2)   \nonumber\\
 &&\qquad \qquad \qquad  \qquad \qquad \qquad \qquad \qquad \qquad + \frac{2iMra}{\Delta}(m_1+m_2) + 4 \left[ -r - ia\cos\theta + \frac{M}{\Delta}(r^2 - a^2)\right] \bigg)
\end{eqnarray}
\end{widetext}
with $C_*$ the contour for the $r_*$-integration described above and the Kerr quantities $\Delta, \Sigma, \Lambda$ as given in 
appendix \ref{app:D}.

The integrands depend on $\theta$ and $r$ in a nonfactorizable way, so
this expression is the best that can be achieved in general: for Kerr,
the orthogonality relation expressed by the previous corollary 
(vanishing of the above inner product for $\omega_1 \neq \omega_2$) 
is fundamentally two-dimensional. This has
to do with the fact that the orthogonality
relation~\eqref{eq:theta-orthogonality} for spin-weighted spheroidal
harmonics occurs between modes of different $\ell$ but the \emph{same}
$m$ and $\omega$; if $\omega_1 \ne \omega_2$, then no such relation
exists, and one cannot expect to be able to perform the $\theta$
integration to obtain a $\delta_{\ell_1\ell_2}$ factor. 

In the $a\to0$ Schwarzschild limit, however, the integral \emph{does}
factorize: the $\theta$ dependence of the integrand reduces to the
$\sin\theta$ volume factor on the sphere, the spheroidal harmonics
reduce to spherical harmonics (independent of $\omega$), and the
$\theta$ integral is proportional to $\delta_{\ell_1\ell_2}$. One is left
with a radial integration, which must vanish for
$\omega_1 \ne \omega_2$. 

{In Fig.~\ref{fig:ortho}, we numerically evaluate the bilinar form on two pairs of Kerr quasinormal modes. We do so along the most convergent contour -- along which the integrand is purely damped -- for the given pair of frequencies $\omega_1$, $\omega_2$: $\arg r_*(u) + \arg (\omega_1+\omega_2) =\pi/2$ for $r_* \to \infty$ and $\arg r_*(u) + \arg (\omega_1+\omega_2) = - \pi/2$ for $r_* \to -\infty$. In other words, the two complex sections of the contour are given by $r_*(u)= r_{*,\rm 1,2}+ u e^{-i \arg(\omega_1+\omega_2)+i\pi/2}$ with $u<0$ in the section emanating from $r_{*1}$ and with $u>0$ in the other (see again Fig.~\ref{fig:r_contour}). We then integrate $u$ between $0$ and a finite $u_\text{lower, upper}$, respectively, and study the convergence of the integral as we take $u_\text{upper, lower} \rightarrow\pm\infty$.
As we can see from the figure,} the contour integral in the bilinear form converges quite well, 
which is useful in practice when using it to extract excitation coefficients, as we describe in the next section. 
Furthermore, since orthogonality is an exact result for quasinormal modes, it can be used potentially as a benchmark check 
for approximations. For example, we have considered approximations to quasinormal modes based on a matched asymptotic 
expansion for near-extremal black holes, and have found that the orthogonality relation is typically satisfied to a very high accuracy. 

\begin{figure}
\centering
\includegraphics[width=1.\linewidth,trim={.cm .1cm .1cm .1cm},clip]{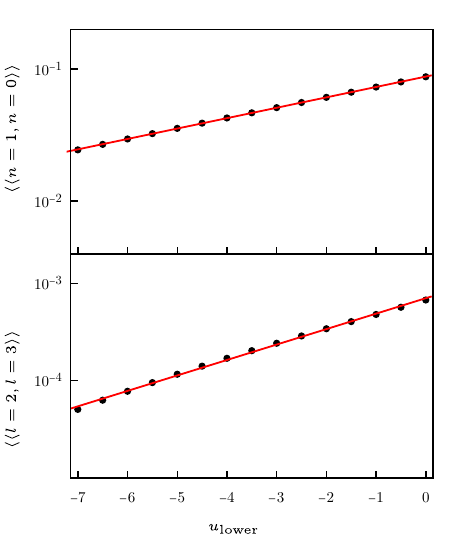}
\caption{Numerical check of the orthogonality between two Kerr quasinormal modes with the same $l=m=2$ and different $n=0$, 1 (upper panel) and modes with the same $n=0$, $m=2$ and different $l=2$, 3 (lower panel). We show the result of the numerical evaluation of the bilinear form~\eqref{eq:mode-bilinear} along the most convergent contour (black points), integrating {the complex sections of the contour up to a finite $u_\text{lower, upper}$.} Because of the {difficulty in handling the} branch cut {in Mathematica}, the lower integration limit {$u_\text{lower}$ sets the overall accuracy of this numerical evaluation of the bilinear form. We also show an exponential fit converging to zero as $u_\text{lower}\to-\infty$ (red line).}  We use the mode solutions {(normalized at the horizon)} provided by the Black Hole Perturbation Toolkit~\cite{BHPToolkit}. In this example, we set $M=1$, $a=0.7$, $r_{*\rm 1}=-8$, and $r_{*\rm 2}=5$.
}
\label{fig:ortho}
\end{figure}

We remark that
our ``norm'' on quasinormal modes 
has some similarities with the ``norm'' of resonant state 
wave functions in quantum mechanics defined by~\cite{Zeldovich:1961theory}. Rather than taking the integral of $|\psi|^2$, the  ``norm''
used by \cite{Zeldovich:1961theory} also involves $\psi^2$, whereas our bilinear is complex linear in both arguments as opposed to an inner product (anti-linear in the first argument, complex linear in the second).
Our regularization procedure differs from that proposed by~\cite{Zeldovich:1961theory} but was rather inspired by the investigations \cite{LeungModes94} in the context of leaky optical
one-dimensional cavities, and on Schwarzschild black holes in~\cite{Ching:1993gt}.
In \cite{leung1997twoa,leung1997twob} it was recognized that phase space was the natural setting for the bilinear form.\footnote{A
variational method for computing quasinormal frequencies of ``dirty''
Schwarzschild black holes was developed in~\cite{Leung:1999rh,Leung:1999iq}.}
In fact, in several ways our work was inspired by some of these papers: we work
within the Teukolsky formalism, we arrive at the bilinear form
starting from the symplectic form, and we recognize the fundamental
importance of the $t$--$\phi$ reflection symmetry. 

\medskip

\section{Excitation coefficients}\label{sec:Lap transform} 

If $\llangle \cdot , \cdot \rrangle$ were an honest to God scalar product in a Hilbert space and $\{ {}_s\Upsilon_{\ell mn} \}$ an orthonormal basis, then an arbitrary 
wave function $\Upsilon_s$ could evidently be expanded as
\begin{align}
\label{excited}
\Upsilon_s = \sum_{\ell mn} c_{\ell mn} \, {}_s\Upsilon_{\ell mn},
\end{align}
where the excitation coefficients are
\begin{align}
c_{\ell mn} =\frac{\llangle {}_s\Upsilon_{\ell mn} , \Upsilon_s \rrangle }{\llangle {}_s\Upsilon_{\ell mn} , {}_s\Upsilon_{\ell mn} \rrangle }. \label{eq:excitation coeff}
\end{align}
Here $\sum_{\ell m n}$ denotes  $\sum_{\ell = \vert s\vert}^\infty\sum_{m=-\ell}^\ell \sum_{n=0}^\infty$. In the present context, $\llangle \cdot,\cdot\rrangle$ is of course 
only a symmetric bilinear form on solutions to the spin $s$ Teukolsky equation (for the case of interest in this paper, $s=-2$). It is neither positive definite, nor is the set of quasi-normal modes, 
while being orthogonal, in any obvious mathematical sense a complete basis for a reasonable function space in as far as we can see. 

Inspired by~\cite{LeungModes94}, we will nevertheless show in this section 
that for solutions $\Upsilon_s$ to the adjoint Teukolsky equation with compact support on a Cauchy surface $\Sigma$, 
the above expansion can formally be ``derived'' in the Laplace transform formalism \cite{Leaver1986b,Nollert:1999ji} for the retarded propagator if we deform the frequency 
integration contours into the complex plane and collect only contributions from the quasinormal mode frequencies. Thus, \eqref{eq:excitation coeff}, while not an exact 
equality, is expected to capture the transient behavior of the solution $\Upsilon_s$.

\subsection{Laplace transform}\label{eq:Laplace}

The Laplace transform $\hat f(\omega) = L f(t)$ of a function $f(t)$ is given by
\begin{equation}\label{eq:Lap def}
\hat f(\omega) = \int_0^\infty e^{i\omega t} f(t) \dd t,
\end{equation}
 where $\Im \omega >0$. The Laplace transform is related to the Fourier transform $\mathcal F \!f = \int_{-\infty}^{\infty}  e^{i \omega t} f(t)  \dd t$  by sending $f(t) \to  f(t) \theta (t)$, where $\theta(t)$ is the Heaviside distribution. Sufficient conditions for the existence of $\hat f(\omega)$ are that the function $f(t)$ be Riemann integrable (continuous except on sets of measure zero) on every closed sub-interval of the path of integration and that it be of exponential order; i.e. at any  $t$ one can find constants $a$ and $N$ such that $\vert e^{-at} f(t) \vert < N$.  If the Laplace integral exists for some value of $\omega=\omega_0$, then it also exists for all $\omega$ with $ \Im \omega>\Im \omega_0$.  The lowermost $\Im \omega_0$ where convergence occurs is called the abscissa of convergence and the region above this line called the convergence region. The function $\hat f(\omega)$ is analytic in the convergence region.  

The Laplace transform formalism is naturally adapted to the study of causal dynamics of linear second-order systems,  as it incorporates the initial data into a source by taking time derivatives into field values at the initial time
\begin{align}\label{eq:lap derivs}
Lf'(t) = &-i\omega \hat f(\omega)-f(0), \\
Lf''(t) = &-\omega^2 \hat f(\omega)+i\omega f(0) - f'(0).
\end{align}
The Laplace transform $\tUs$ of the spin-$s$ master function $\Us$ is given by
\begin{equation}\label{eq:tilde U}
\tUs(\omega,r,\theta,\phi) = \int_{0}^{\infty} e^{i \omega t} \Us(t,r,\theta,\phi) \dd t.
\end{equation}
This decomposed into modes in the usual way
\begin{equation}\label{eq:tilde Us mode}
\tUs = 
\sum_{\ell m} \Sh(\theta) \, \Rh(r) e^{i m \phi}.
\end{equation}
The inverse transform is given by
\begin{equation}\label{eq:inverseLap}
\Upsilon_s(t,r,\theta,\phi)= \frac{1}{2\pi}\int_{-\infty+ic}^{\infty+ic} e^{-i\omega t} \tUs(\omega,r,\theta,\phi)\, \dd \omega,
\end{equation}
where $c>0$ is chosen such that the integral contour lies within the convergence region.

To formulate the initial data problem within the mode decomposition, we  take the Laplace transform of the Teukolsky master equation 
and substitute \eqref{eq:tilde Us mode}. We then collect the terms in the  master equation with transformed time derivatives to the right-hand side and project onto the angular mode function. 
This yields  a sourced equation for the radial function
\begin{equation}\label{eq:radeq formal}
\mathcal{L} \Rh = \Ih.
\end{equation}
Here, $\mathcal{L}$ is given in \eqref{eq:radial} and the source $\Ih=\Ih(r)$ is comprised of  $(\ell,m)$-projected initial data~\cite{Campanelli:1997un}:
\begin{widetext}
\begin{align}\label{eq:I source modes}
 \Ih= \int_0^{2\pi}\! \! \int_0^\pi \bigg[ 
\frac{\Lambda}{\Delta} \left( \pd_t \Us - i \omega \Us \right) -2s \Big( \frac{M(r^2-a^2)}{\Delta} -r -ia\cos\theta \Big) \Us + \frac{4 M ar}{\Delta} \pd_\phi \Us\bigg]_{t=0}\Sh(\theta)\, e^{-i m \phi} \sin \theta\, \dd \theta \, \dd \phi
\end{align}
\end{widetext}
which we take to be of compact support.
 Imposing the outgoing boundary conditions \eqref{eq:R bcs} fixes the freedom of homogeneous solutions to \eqref{eq:radeq formal} and therefore defines a radial Green's function
${}_s  g_{\ell m\omega}(r,r') =  \Rh^{\rm in}(r_<)\Rh^{\rm up}(r_>)/\mathcal W $
where $r_< \,\,(r_>)$ is the lesser (greater) of $r$ and $r'$. Here, for any two solutions of the radial equation at fixed
$s, m, \ell, \omega$, the (``$\Delta$-scaled'') Wronskian
\begin{equation}\label{eq:Wronskian}
\mathcal W[R_1,R_2] = \Delta^{1+s}\left[ R_1 \frac{\dd R_2}{\dd r} - R_2 \frac{\dd R_1}{\dd r} \right]
\end{equation}
has been defined which is independent of $r$. If $R_1$ and $R_2$ are linearly dependent, then
the Wronskian vanishes. Thus, if we take $R_1 \to R^{\text{in}}$,
$R_2 \to R^{\text{up}}$, the Wronskian vanishes when $\omega$ attains
a quasinormal frequency.

The quasinormal mode contribution to $\Upsilon_s$ can be found by closing the contour of the Laplace integral in the lower-half complex $\omega$ plane, and can be expressed 
as a  discrete sum over the residues of the radial Green's function arising at the points in the complex frequency plane where the Wronskian vanishes:
\begin{align}\label{eq:oops n}
\Upsilon_{s} =& -i \sum_{n\ell m} e^{-i\omega_n t+im\phi} {}_sS_{\ell m n}(\theta) \nonumber\\
&\times \int_{r_+}^{\infty} \frac{
\Rq{n}^{\rm in}(r_<) \Rq{n}^{\rm up}(r_>)}{\dd\mathcal W/ \dd\omega\vert_{\omega_n}} {}_s I_{\ell m n}(r')\Delta^s(r') \dd r',
\end{align}
where ${}_s I_{\ell m n}={}_s I_{\ell m \omega}\vert_{\omega=\omega_n}$.  In considering only the poles when closing the contour, we are effectively ignoring 
the early-time ``direct'' contribution from the large-$\omega$ arc and the late-time ``tail'' contribution resulting from the branch point at zero frequency.
Thus, the $=$ sign in the above equation is not actually justified and should be understood as meaning this approximation.

On a quasinormal mode,
 $R^{\rm in}$ is a constant multiple of $R^{\rm up}$, and either may be moved outside the radial integral to write the field as
\begin{equation}\label{eq:oops nice}
\Upsilon_{s} = \sum_{n \ell m} c_{\ell mn} \,{}_s\Upsilon_{lmn},
\end{equation}
where we have isolated the familiar form of the excitation coefficient
\begin{align}\label{eq:qnm exc}
c_{n\ell m} &=   -\frac{i}{\dd \mathcal W/\dd\omega\vert_{\omega_n}}\int_{r_+}^{\infty}
 {}_s I_{\ell m n}(r') \, {}_s R_{\ell mn}(r') \Delta^s(r') \, \dd r'.
\end{align}

\subsection{Equivalence between  
\eqref{eq:qnm exc} and \eqref{eq:excitation coeff}}  

 To begin, for a given
radial function $R$, and $\ell, m, \omega$, we can define a $s=-2$ GHP
scalar $\Upsilon_{\ell m \omega}$ in separated form~\eqref{eq:modes}
by appending a spin-weighted spheroidal harmonic and $e^{-i\omega t}$
time-dependence. For the time being, we do not require $R$ to satisfy
any equation. We have the following lemma relating the Wronskian to
$t\cdot\pi$ integrated over the 2-sphere.

\begin{lemma}\label{lemma:wronskian-boundary}
  Let $\Upsilon_1, \Upsilon_2 \GHPwt (-4,0)$ be two GHP scalars in
  separated form~\eqref{eq:modes}, with the same $m, \ell, \omega$,
  where $S_1$, $S_2$ are normalized spin-weighted spheroidal harmonics
  solving the angular equation, but where $R_1, R_2$ are not
  necessarily solutions to the radial equation. Then
  \begin{equation}
    8\pi M^{4/3} \mathcal{W}[R_1,R_2] = \int_{S^2(t,r)} t \cdot \pi(\Psi_2^{4/3}\mathcal J \Upsilon_1, \Upsilon_2),
  \end{equation}
  where $S^2(t,r)$ is a sphere of constant $t$ and $r$, $\pi_{abc} = \epsilon_{abcd} \pi^d$ and $\pi^a$ as in \eqref{pidef}.
\end{lemma}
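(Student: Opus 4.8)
The plan is to compute both sides of the claimed identity directly in the Kinnersley frame and Boyer–Lindquist coordinates, using the explicit form of the bilinear-form integrand already recorded in \eqref{eq:Kinnersley-bilinear} and \eqref{eq:mode-bilinear}. First I would write $t\cdot\pi$ as the 2-form obtained by contracting the Killing field $t^a=M^{1/3}\xi^a$ into the 3-form $\pi_{abc}(\Psi_2^{4/3}\mathcal J\Upsilon_1,\Upsilon_2)$, and identify which component survives integration over a round sphere of fixed $t$ and $r$: only the piece proportional to $\dd\theta\wedge\dd\phi$ contributes. Comparing with the volume integrand appearing in \eqref{eq:mode-bilinear}, this component is precisely the bracketed radial-derivative expression multiplying $S_1S_2R_1R_2$, evaluated \emph{without} performing the $r_*$ integration.

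The key step is then to insert the separated ansatz~\eqref{eq:modes} for $\Upsilon_1,\Upsilon_2$ with the same $m,\ell,\omega$. Since $S_1,S_2$ are normalized spheroidal harmonics with the same parameters, the $\theta$ integral collapses via the orthonormality relation~\eqref{eq:theta-orthogonality} to $\delta_{\ell_1\ell_2}=1$, and the $\phi$ integral over $e^{i(m_1-m_2)\phi}$ gives $2\pi\delta_{m_1m_2}=2\pi$. What remains is a purely radial expression. The assumed-common frequency and azimuthal number mean the $\omega_1+\omega_2$ and $m_1+m_2$ terms in \eqref{eq:mode-bilinear} become $2\omega$ and $2m$; crucially, using the radial equation is \emph{not} permitted here since $R_1,R_2$ need not solve it, so I expect the bracket to reorganize — after using $H=(r^2+a^2)\omega-am$ and the definitions of $\Lambda,\Delta$ — into the total-derivative combination $\Delta^{1+s}(R_1 R_2' - R_2 R_1')=\mathcal W[R_1,R_2]$ up to the overall constant. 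Tracking the constant: the $8\pi M^{4/3}$ in \eqref{eq:mode-bilinear} already bundles the $M^{4/3}$, the $4$ from $\llangle\cdot,\cdot\rrangle$, and the $2\pi$ from $\phi$; matching this against the $\mathcal W$ normalization in \eqref{eq:Wronskian} should reproduce exactly the factor $8\pi M^{4/3}$ asserted in the lemma.

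The main obstacle I anticipate is the bookkeeping in showing that the first-order radial-derivative terms inside the bracket of \eqref{eq:mode-bilinear} — namely the $\partial_t$, $\partial_\phi$, and the inhomogeneous $[-r-ia\cos\theta+\tfrac{M}{\Delta}(r^2-a^2)]$ piece acting on $R_1$ plus the mirror term acting on $R_2$ — genuinely assemble into the antisymmetric Wronskian $R_1\partial_r R_2 - R_2\partial_r R_1$ rather than leaving a residual symmetric remainder. Here one must be careful: the $\partial_t\Upsilon$ acting on a mode yields $-i\omega R$ (no $r$-derivative), so the $r$-derivatives can only enter through how $\pi^a$ is projected onto the $\dd\theta\wedge\dd\phi$ component of $t\cdot\pi$ — i.e.\ through the $\Theta^r$ part of $(\Theta^a\mp 4B^a)$. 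I would therefore go back to \eqref{pidef}, extract the $a=r$ (or rather the $\partial_{r_*}$) component explicitly in the Kinnersley frame using appendix~\ref{app:D}, and check that the GHP derivative $\Theta^r$ on a weight-$(\pm4,0)$ scalar reduces on a separated mode to $\Delta/(r^2+a^2)$ times $\partial_r$ plus weight-connection terms; the weight-connection and $B^a$ contributions are precisely what generate the symmetric, non-derivative terms that must then cancel against each other between the $\Upsilon_1$ and $\Upsilon_2$ entries, leaving the clean $\Delta^{1+s}$-scaled Wronskian. The identity \eqref{eq:gradPsi} for $\Theta_a\Psi_2$ and the $\mathcal J$-intertwining \eqref{eq:OJ} are what make the $\Psi_2^{4/3}\mathcal J\Upsilon_1$ factor contribute its derivative cleanly, so those should be invoked rather than re-derived.
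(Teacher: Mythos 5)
Your proposal follows essentially the same route as the paper: extract the radial component of $\pi^a$ (the only piece of $t\cdot\pi$ surviving the $\dd\theta\wedge\dd\phi$ integration), evaluate $r^a(\Theta_a-4B_a)$ explicitly in the Kinnersley frame so that the weight-connection and $B^a$ terms cancel by antisymmetry, and collapse the angular integral via \eqref{eq:theta-orthogonality} to leave $\Delta^{1+s}(R_1R_2'-R_2R_1')$ with the stated constant. Your first paragraph's identification of the relevant component with the bracket of \eqref{eq:mode-bilinear} is a misstep (that bracket is the \emph{time} component of $\pi^a$, containing no $\partial_r$), but your third paragraph correctly self-corrects to the radial projection of \eqref{pidef}, which is exactly the paper's computation.
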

\begin{proof}
Consider the Cauchy surface $\Sigma(t) = \{t={\rm const.}\}$ in Boyer-Lindquist coordinates. The future directed normal 
to $\Sigma$ and induced area element on $S^2(t,r)$ are given by, respectively
\begin{align}
  \nu^a =& \left( \sqrt{\frac{\Lambda}{\Delta\Sigma}}, 0, 0, \frac{2Mar}{\sqrt{\Delta\Sigma\Lambda}}\right), \\
  \dd A =& \sqrt{\Sigma\Lambda} \sin\theta \, \dd \theta \dd \phi. 
\end{align}
From the first relation on can read off the lapse function $N$ of $t^a$ from $\nu^a = (t^a - N^a)/N$. 
The action of the reflection reverses $\nu^a$ and from this fact and the formula for $\pi^a$, see \eqref{pidef}, one can deduce that 
\begin{widetext}
  \begin{align}\label{eq:int-tdotpi}
  \int_{S^2(t,r)} t \cdot \pi(\Psi_2^{4/3}\mathcal J \Upsilon_1, \Upsilon_2) =  \int_{S^2(t,r)} N \Psi_2^{4/3}\left\{ (\mathcal J\Upsilon_1) r^a(\Theta_a - 4 B_a) \Upsilon_2 - \Upsilon_2 \mathcal J [r^a (\Theta_a - 4 B_a) \Upsilon_1 ]\right\} \dd A
 \end{align}
\end{widetext}
where $r^a$ is the normal to $S^2(r,t)$ inside $\Sigma(t)$.
An explicit calculation shows that in the Kinnersley frame,
  \begin{equation}
    r^a(\Theta_a - 4B_a)\Upsilon = \sqrt{\frac{\Delta}{\Sigma}}\partial_r\Upsilon - 2\frac{(r-M)}{\sqrt{\Delta\Sigma}}\Upsilon.
  \end{equation}
  Using this, as well as expressions for $N$, $\dd A$, and
  $\mathcal J$ [using~\eqref{eq:boostParams}], we obtain
 \begin{widetext}
  \begin{equation}
    \int_{S^2(t,r)} t \cdot \pi(\Psi_2^{4/3}\mathcal J \Upsilon_1, \Upsilon_2)
    = \frac{4M^{4/3}}{\Delta(r)} \left( R_1\frac{\dd R_2}{\dd r} - R_2 \frac{\dd R_1}{\dd r}\right) \int_0^{\pi}\int_0^{2\pi} \ S_1(\theta) S_2(\theta) \, \sin\theta  \dd \theta \dd \phi .
  \end{equation}
\end{widetext}
Finally, perfoming the integration and using the normalization~\eqref{eq:theta-orthogonality} for the angular
  functions, we obtain the result.
\end{proof}

Next, we take $R_1$ and $R_2$ to be solutions ingoing at the horizon
and outgoing at infinity. Considered as a function of $\omega$, the
Wronskian vanishes at quasinormal frequencies $\omega_n$, because at
these frequencies the two solutions become linearly dependent. The
first derivative with respect to $\omega$, however, is proportional to
the ``norm'' of the quasinormal mode.
\begin{lemma}\label{lemma:Wronskian-derivative}
  Let $R_\omega^{\mathrm{in}}, R_{\omega}^{\mathrm{up}}$ be solutions to
  the radial equation for fixed $s=-2,\ell,m$, and allowing $\omega$
  to vary, that are ingoing at the horizon and outgoing at infinity,
  respectively, as in~\eqref{eq:R bcs}. Construct
  $\Upsilon^{\mathrm{in}}_\omega, \Upsilon^{\mathrm{up}}_\omega \GHPwt
  (-4,0)$ as mode solutions based on the radial functions. Then the
  derivative of the Wronskian at a quasinormal frequency $\omega_n$
  can be written
  \begin{equation}\label{eq:excitation_factor}
    \left. \frac{\dd}{\dd \omega}\mathcal{W}[R^{\mathrm{in}}_\omega, R^{\mathrm{up}}_\omega] \right|_{\omega = \omega_n} = \frac{-i}{8\pi M^{4/3}} \llangle \Upsilon^{\mathrm{in}}_{\omega_n}, \Upsilon^{\mathrm{up}}_{\omega_n} \rrangle.
  \end{equation}
\end{lemma}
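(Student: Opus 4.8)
The plan is to express the $\omega$-derivative of the Wronskian as an integral by differentiating the defining radial equation with respect to $\omega$, and then to recognize the resulting integral as (a multiple of) the $t$--$\phi$ reflection bilinear form via Lemma~\ref{lemma:wronskian-boundary}. First I would recall that $\mathcal{W}[R^{\mathrm{in}}_\omega, R^{\mathrm{up}}_\omega]$ is $r$-independent, so I may evaluate it at any convenient radius. Differentiating the radial equation \eqref{eq:radial} with respect to $\omega$ at fixed $s=-2,\ell,m$ produces an inhomogeneous equation for $\partial_\omega R$ whose source involves $\partial_\omega$ of the potential term, i.e., terms proportional to $R$ itself (coming from $H^2$, $4is\omega r$, $2am\omega$, and $\partial_\omega K$). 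The standard manipulation — multiply the equation for $R_2$ by $\partial_\omega R_1$, subtract the $1\leftrightarrow2$ version, integrate $\Delta^s$-weighted over $r$ — converts $\tfrac{\dd}{\dd\omega}\mathcal W$ into a bulk integral of $R^{\mathrm{in}}_\omega R^{\mathrm{up}}_\omega$ against the $\omega$-derivative of the radial potential, up to boundary terms that vanish at $r=r_+$ and $r=\infty$ once $\omega=\omega_n$ (where the two solutions coincide up to scale and the quasinormal boundary conditions are met).

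The second step is to identify that bulk integral. I would recast $\tfrac{\dd}{\dd\omega}\mathcal W$ instead as a \emph{difference} of boundary Wronskians evaluated between two spheres, and then apply Lemma~\ref{lemma:wronskian-boundary}: for mode solutions $\Upsilon_1,\Upsilon_2$ with the same $\ell,m$ but frequencies $\omega_1,\omega_2$, the quantity $8\pi M^{4/3}\mathcal W[R_1,R_2]$ equals $\int_{S^2(t,r)} t\cdot\pi(\Psi_2^{4/3}\mathcal J\Upsilon_1,\Upsilon_2)$. Taking $\Upsilon_1=\Upsilon^{\mathrm{in}}_{\omega_1}$, $\Upsilon_2=\Upsilon^{\mathrm{up}}_{\omega_2}$, subtracting the value of this surface integral at $r=\infty$ from its value at $r=r_+$, and using that the 3-form $\pi(\Psi_2^{4/3}\mathcal J\Upsilon_1,\Upsilon_2)$ is \emph{not} closed when $\omega_1\neq\omega_2$ (its failure to close is controlled by $\GHPLie_t$ acting asymmetrically, via item (iii) of Lemma~\ref{lemma:compactsupport} and the anticommutation $\mathcal J\GHPLie_t=-\GHPLie_t\mathcal J$), one gets that the difference of boundary terms equals $(\omega_1-\omega_2)$ times a bulk integral. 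That bulk integral is, by construction, $\tfrac{1}{8\pi M^{4/3}}\cdot\tfrac{1}{i}\llangle\Upsilon^{\mathrm{in}}_{\omega_1},\Upsilon^{\mathrm{up}}_{\omega_2}\rrangle$ evaluated on a constant-$t$ slice (the $e^{-i(\omega_2-\omega_1)t}$ factor in \eqref{eq:mode-bilinear} drops out in the coincidence limit, and the factor of $i$ tracks the $\GHPLie_t \leftrightarrow -i\omega$ correspondence). Dividing by $(\omega_1-\omega_2)$ and taking $\omega_1,\omega_2\to\omega_n$ gives \eqref{eq:excitation_factor}; the left side becomes $\tfrac{\dd}{\dd\omega}\mathcal W$ because $\mathcal W[R^{\mathrm{in}}_{\omega_n},R^{\mathrm{up}}_{\omega_n}]=0$, so the difference quotient $\tfrac{\mathcal W[R^{\mathrm{in}}_{\omega_1},R^{\mathrm{up}}_{\omega_2}]-\mathcal W[R^{\mathrm{in}}_{\omega_n},R^{\mathrm{up}}_{\omega_n}]}{\omega_1-\omega_2}$ limits to the derivative.

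Concretely, the cleanest route is: fix $\ell,m$, let $\omega_1=\omega_n+\delta_1$, $\omega_2=\omega_n+\delta_2$; write the two versions of Lemma~\ref{lemma:wronskian-boundary} at $r$ and at $r'$ with $r_+<r<r'<\infty$; subtract; use $\dd\pi=\GHPLie_t$-type terms (more precisely, evaluate $\int_{\Sigma_{r,r'}}\dd\pi$ where $\Sigma_{r,r'}$ is the annular constant-$t$ region, with $\dd\pi(\Psi_2^{4/3}\mathcal J\Upsilon_1,\Upsilon_2)$ given by the $\mathcal O,\mathcal O^\dagger$ failure plus the $\GHPLie_t$ mismatch from \eqref{eq:lemmaiiib}); and finally take $r\to r_+$, $r'\to\infty$ so the boundary sphere terms vanish by the quasinormal decay, leaving exactly $(\omega_1-\omega_2)$ times the bilinear-form integrand. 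The relative sign and the $\tfrac{-i}{8\pi M^{4/3}}$ prefactor should emerge from carefully tracking \eqref{eq:lemmaiiib} together with the orientation reversal of $\mathcal J$.

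\textbf{Main obstacle.} The delicate point is the interchange of limits and the vanishing of the boundary sphere contributions at $r=r_+$ and $r\to\infty$: away from a quasinormal frequency, $R^{\mathrm{in}}_\omega$ and $R^{\mathrm{up}}_\omega$ are genuinely independent and their Wronskian is a nonzero constant, so the surface integrals at the two ends do \emph{not} individually vanish — only the subtracted combination (the bulk integral of the non-closed 3-form) is controlled. One must therefore be careful to work with the difference throughout, to use the coincidence $\omega_1,\omega_2\to\omega_n$ (where $R^{\mathrm{in}}$ becomes proportional to $R^{\mathrm{up}}$ and both satisfy the decaying boundary conditions at \emph{both} ends) before or consistently with the $r\to r_+$, $r'\to\infty$ limits, and to verify that $\pi(\Psi_2^{4/3}\mathcal J\Upsilon^{\mathrm{in}}_{\omega_n},\Upsilon^{\mathrm{up}}_{\omega_n})$ has no leftover finite flux at the horizon or infinity. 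Getting the overall normalization and sign exactly right — i.e., pinning down the factor $\tfrac{-i}{8\pi M^{4/3}}$ — is then a matter of bookkeeping in \eqref{eq:lemmaiiib} and the explicit form \eqref{eq:mode-bilinear}, but it is the step most prone to error.
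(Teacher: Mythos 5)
Your overall strategy---trading the $\omega$-derivative of the Wronskian for a bulk integral of $\pi$ by exploiting the failure of $\dd\pi=0$ at unequal frequencies, and using Lemma~\ref{lemma:wronskian-boundary} to convert sphere integrals of $t\cdot\pi$ into Wronskians---is the same one the paper uses. But there are two genuine gaps. First, your treatment of the boundary terms is backwards: you assert that the sphere contributions at $r\to r_+$ and $r'\to\infty$ ``vanish by the quasinormal decay'' and that at $\omega=\omega_n$ both solutions ``satisfy the decaying boundary conditions at both ends.'' For a quasinormal mode with $\Im\omega_n<0$ the boundary conditions \eqref{eq:R bcs} give $e^{-ikr_*}$ and $e^{i\omega r_*}$, which grow exponentially as $r_*\to\mp\infty$ along the real axis; this divergence is the entire reason the bilinear form on quasinormal modes is defined only via the complex contour $\Sigma_{\mathbb{C}}$. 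Your argument never invokes that contour, so neither your bulk integral nor your boundary limits exist as written. In the paper's proof the limit $S\to\Sigma_{\mathbb{C}}$ is taken only at the very end, and the two residual boundary terms are each arranged to contain the undifferentiated quasinormal mode at the boundary where it is exponentially damped \emph{on the contour}.

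Second, your limiting procedure for the left-hand side does not produce the derivative of the Wronskian. Writing $F(\omega_1,\omega_2)=\mathcal W[R^{\rm in}_{\omega_1},R^{\rm up}_{\omega_2}]$ (which for $\omega_1\neq\omega_2$ is not even $r$-independent), the difference quotient $F(\omega_1,\omega_2)/(\omega_1-\omega_2)$ behaves near the diagonal like $(\delta_1\partial_1F+\delta_2\partial_2F)/(\delta_1-\delta_2)$, which is direction-dependent and generically divergent as $\delta_1,\delta_2\to0$; it does not converge to the diagonal derivative $(\partial_1+\partial_2)F=\tfrac{\dd}{\dd\omega}\mathcal W[R^{\rm in}_\omega,R^{\rm up}_\omega]$. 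The paper sidesteps this by fixing the ``in'' entry exactly at $\omega_n$, differentiating the identity $\int_{\partial S}t\cdot\pi(\Psi_2^{4/3}\mathcal J\Upsilon^{\rm in}_{\omega_n},\Upsilon^{\rm up}_\omega)=-i(\omega-\omega_n)\int_S\pi$ in the single variable $\omega$, and then performing a product-rule rearrangement at the inner boundary so that the full equal-frequency derivative $\tfrac{\dd}{\dd\omega}\mathcal W[R^{\rm in}_\omega,R^{\rm up}_\omega]$ appears through Lemma~\ref{lemma:wronskian-boundary}, with the two leftover boundary terms killed on $\Sigma_{\mathbb{C}}$. Your proposal would need both of these devices to close.
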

\begin{proof}
  Consider the current
  $\pi\left(\Psi_2^{4/3} \mathcal J \Upsilon^{\text{in}}_{\omega_n},
    \Upsilon^{\text{up}}_\omega\right)$ evaluated at a generic
  frequency $\omega$ and a quasinormal frequency $\omega_n$.  By
  Cartan's magic formula and the fact that $\pi$ is closed on
  solutions,
  \begin{align}
    &\dd\left(t \cdot \pi\left(\Psi_2^{4/3} \mathcal J \Upsilon^{\text{in}}_{\omega_n}, \Upsilon^{\text{up}}_\omega\right)\right) \nonumber \\
    &\quad =  \Lie_t \pi\left(\Psi_2^{4/3} \mathcal J \Upsilon^{\text{in}}_{\omega_n}, \Upsilon^{\text{up}}_\omega\right) 
    \nonumber\\
    &\quad =  -i(\omega - \omega_n)\pi\left(\Psi_2^{4/3} \mathcal J \Upsilon^{\text{in}}_{\omega_n}, \Upsilon^{\text{up}}_\omega\right) ,
  \end{align}
  where in contrast to the previous lemma the right side does not
  vanish on account of the different frequencies. We integrate over a partial Cauchy surface $S$, and
  apply Stokes's theorem,
  \begin{align}\label{eq:balance}
    &\int_{\partial S} t \cdot \pi(\Psi_2^{4/3} \mathcal J \Upsilon^{\text{in}}_{\omega_n}, \Upsilon^{\text{up}}_\omega)\nonumber\\
     &\quad= -i (\omega - \omega_n) \int_S \pi(\Psi_2^{4/3} \mathcal J \Upsilon^{\text{in}}_{\omega_n}, \Upsilon^{\text{up}}_\omega).
  \end{align}
  Next, we
  differentiate this equation with respect to $\omega$ and take the
  limit $\omega\to \omega_n$. On the right side, we trivially get
  \begin{equation}
    \left.\frac{\dd}{\dd\omega}\right|_{\omega=\omega_n} \text{r.h.s. of \eqref{eq:balance}} 
    = -i \int_S \pi(\Psi_2^{4/3} \mathcal J \Upsilon^{\text{in}}_{\omega_n}, \Upsilon^{\text{up}}_{\omega_n}).
  \end{equation}
  The left side can be expressed as three terms, namely 
  \begin{align}
    &\left.\frac{\dd}{\dd\omega}\right|_{\omega=\omega_n} \text{l.h.s. of \eqref{eq:balance}}  \nonumber\\
     &\quad= \int_{\partial S_+} t\cdot\pi\left(\Psi_2^{4/3} \mathcal J \Upsilon^{\text{in}}_{\omega_n}, \left.\frac{\dd}{\dd\omega}\right|_{\omega=\omega_n} \Upsilon^{\text{up}}_\omega\right) \nonumber\\
    &\qquad- \left.\frac{\dd}{\dd\omega}\right|_{\omega=\omega_n} \int_{\partial S_-} t\cdot\pi\left(\Psi_2^{4/3} \mathcal J \Upsilon^{\text{in}}_\omega, \Upsilon^{\text{up}}_\omega\right) \nonumber\\
     &\qquad + \int_{\partial S_-} t\cdot \pi\left( \left.\frac{\dd}{\dd\omega}\right|_{\omega=\omega_n} \Psi_2^{4/3} \mathcal{J} \Upsilon^{\text{in}}_\omega, \Upsilon^{\text{up}}_{\omega_n} \right).
  \end{align}
  By lemma~\ref{lemma:wronskian-boundary} we can write the second of
  these terms as the derivative of the Wronskian,
  \begin{align}
    &\left.\frac{\dd}{\dd\omega}\right|_{\omega=\omega_n} \int_{\partial S_-} t\cdot\pi\left(\Psi_2^{4/3} \mathcal J \Upsilon^{\text{in}}_\omega, \Upsilon^{\text{up}}_\omega\right) \nonumber\\
     &\quad= 8 \pi M^{4/3} \left.\frac{\dd}{\dd\omega} \mathcal W[R^{\text{in}}_\omega, R^{\text{up}}_\omega]\right|_{\omega = \omega_n}.
  \end{align}

  Summarizing our results so far, we have shown that
  \begin{align}\label{eq:balance2}
    &8 \pi M^{4/3} \left.\frac{\dd}{\dd\omega} \mathcal W[R^{\text{in}}_\omega, R^{\text{up}}_\omega]\right|_{\omega = \omega_n} \nonumber\\
    &\quad = -i \int_S \pi(\Psi_2^{4/3} \mathcal J \Upsilon^{\text{in}}_{\omega_n}, \Upsilon^{\text{up}}_{\omega_n}) \nonumber\\
    &\qquad - \int_{\partial S_-} t\cdot \pi\left( \left.\frac{\dd}{\dd\omega}\right|_{\omega=\omega_n} \Psi_2^{4/3} \mathcal{J} \Upsilon^{\text{in}}_\omega, \Upsilon^{\text{up}}_{\omega_n} \right)
    \nonumber\\
     &\qquad - \int_{\partial S_+} t\cdot\pi\left(\Psi_2^{4/3} \mathcal J \Upsilon^{\text{in}}_{\omega_n}, \left. \frac{\dd}{\dd\omega} \right|_{\omega=\omega_n} \Upsilon^{\text{up}}_\omega\right).
  \end{align}
As $S \to \Sigma_\mathbb{C}$, the boundary integrals vanish exponentially, so the right hand side
of~\eqref{eq:balance2} reduces to  $-i \llangle \Upsilon^{\text{in}}_{\omega_n}, \Upsilon^{\text{up}}_{\omega_n} \rrangle$. 
\end{proof}

The desired equivalence between  \eqref{eq:qnm exc} and \eqref{eq:excitation coeff} can be seen immediately by
substituting \eqref{eq:I source modes} for ${}_s I_{\ell m n}$, comparing with \eqref{eq:Kinnersley-bilinear}, and  applying lemma~\ref{lemma:Wronskian-derivative}.

{We checked numerically that the excitation factors obtained via the bilinear form, Eq.~\eqref{eq:excitation_factor}, are in good agreement (for some $a$, $n$ within a few \%, which we believe arises from inaccuracies of our mode functions close to the horizon) with the ones computed via direct differentiation of the Wronskian in Refs.~\cite{Zhang:2013ksa,Oshita2021}. }

\section{Concluding remarks}

We end this paper with some potential applications and alternatives to our formalism. 

The main motivation of this work is to provide some tools needed to study the black hole ringdown beyond linear order in perturbation theory. Higher orders are already needed to interpret high-precision numerical relativity simulations of binary mergers~\cite{Mitman:2022qdl,Cheung:2022rbm}, and could be needed to analyse gravitational wave observations by future detectors. 
Roughly speaking, we wish to make an ansatz 
for the solution of the non-linear system as a linear combination of quasinormal modes with \emph{time dependent} excitation coefficients similar to \eqref{excited}. The idea is that the bilinear form will help us writing down a dynamical system for these coefficients by analogy with wave equations on compact spaces, where the normal modes would be used instead to compute the overlap integrals required for terms in this dynamical system that are non-linear in the modes.

As extremality is approached, it is well known that a family of quasinormal modes becomes arbitrarily long-lived, with
$\Re \omega \approx m\Omega_H$
\cite{PressTeukolsky1973,Detweiler1977,Leaver1985,Hod:2008zz,Yang:2012pj,Cook:2014cta}. With
a commensurate frequency spectrum and arbitrarily slow decay, these
modes have been conjectured to become \emph{turbulent} as
$a\to M$~\cite{Yang:2014tla}. {By considering the nonlinear excitation coefficients near extremality using the approach detailed above, one may hope to establish (or rule out) the emergence of turbulent behavior, or to get a new perspective on the Aretakis instability \cite{aretakis2012decay}. It would also be interesting to see if our bilinear form can be used directly \emph{at} extremality. This would involve understanding better the behavior of the complex contour integrals that we employ to regulate the bilinear form at extremality.\footnote{{For recent work on mode solutions to the Teukolsky equation at extremality, see \cite{onozawa1996quasinormal,richartz2016quasinormal}.}}}

Applications along similar lines could include mode
mixing in clouds of ultralight scalar fields that could form outside
Kerr black holes~\cite{Arvanitaki:2010sy}. Here, once these clouds grow via
the superradiant instability, nonlinear interactions between the modes have been conjectured to give rise to a coherent
emission of gravitational waves or a bosenova~\cite{Yoshino:2013ofa}, see~\cite{Baumann:2018vus,Baumann:2022pkl} for recent proposals based on heuristic methods. It would be interesting to see whether our methods could be used to conceptualize or shed more light on the theoretical basis of such proposals. 

In \cite{gajic2021quasinormal}, a different approach is taken to quasinormal modes. Their essential idea is to consider, instead of a time $t$
Cauchy surface intersecting the bifurcation surface and spatial infinity, a ``hyperboloidal'' slice intersecting the future event horizon and future null infinity. On such a slice, they define a certain space of ``almost analytic'' functions (a ``Gevrey space'') encoding somehow the ``boundary conditions'' \eqref{eq:R bcs}. Their space is in fact a genuine Hilbert space, and the time evolution is represented on this space by a semigroup whose generator is essentially the Hamiltonian, $\mathcal{H}$ (see appendix \ref{sec:Lagrangian-Hamiltonian}). Their inner product is non-canonical -- and is not conserved -- and the generator of the semi-group is correspondingly not symmetric, as is also not physically expected due to the ``dissipative'' nature of quasinormal modes. Nevertheless, their analysis shows that quasinormal modes are genuine eigenfunctions $\mathcal{H} \Upsilon = i\omega \Upsilon$ in this space -- crucially, by contrast to their restriction to a constant $t$ surface, they do not blow up on the hyperboloidal slice as the horizon or scri are approached. 
While the quasinormal modes are not orthogonal in their inner product, the definition of our bilinear form with a hyperboloidal slice is also clearly possible and it would be interesting to see whether quasinormal modes, as defined in the 
setting of \cite{gajic2021quasinormal} (see also~\cite{Zenginoglu:2011jz,PanossoMacedo:2019npm,Ripley:2022ypi}) are still orthogonal in this bilinear form, as we conjecture. This would provide an alternative to our regularization procedure involving complex contours. 

Finally, it will be interesting to explore the relation between our bilinear form and the adjoint-spheroidal functions introduced in Ref.~\cite{London:2020uva}.

\medskip
{\bf Acknowledgements:} We thank M.~Casals, E.~Flanagan, D.~Gajic and A.~Grant for comments and discussions related to this work. We thank J.~Lestingi, S.~Ma and H.~Yang for pointing out a typo in an earlier version of the manuscript. SH thanks the Max-Planck Society for supporting the collaboration between MPI-MiS and Leipzig U., Grant Proj. Bez. M.FE.A.MATN0003. VT is grateful to the International Max Planck Research School, MPI-MiS for support through a studentship. This work makes use of the Black Hole Perturbation Toolkit.

\appendix

\begin{widetext}
\section{Proof of \eqref{eq:intertwine} and explicit form of $H^{ab}$}
\label{app:A}

For the current calculations, we use the following forms for the Einstein operator $\mathcal{E}$, the separation operator $\mathcal{S}$, $\mathcal{T}$ and the Teukolsky operator $\mathcal{O}$. 
In the expressions below, the GHP operators $\eth = m^a \Theta_a, \eth' = \bar m^a \Theta_a, \thorn = l^a \Theta_a, \thorn'=n^a \Theta_a$ are understood and their commutation relations as well 
as the algebraically special properties of an NP tetrad aligned with the principal null directions in a Petrov type D spacetime (implying that $\sigma = \sigma' = \kappa = \kappa' = 0, \Psi_i = 0, i \neq 2$ in GHP notation) are heavily used in all subsequent manipulations.  
\begin{subequations}
\begin{multline}
(\mathcal{E} h)_{ab} = \half (\nabla^c \nabla_a h_{bc} + \nabla^c \nabla_b h_{ac} - \nabla_a \nabla_b {h^c}_c - \nabla^c \nabla_c h_{ab} - g_{ab} \nabla^c \nabla^d h_{cd} + g_{ab} \nabla^c \nabla_c {h^d}_d)\\
= P^{cabdef} \nabla_c \nabla_d h_{ef},
\end{multline}
\begin{multline}
\mathcal{S} T = - (\eth - \bar{\tau'} - 4\tau)(\eth - \bar{\tau'}) T_{ll} - (\thorn - \bar{\rho} - 4\rho)(\thorn - \bar{\rho}) T_{mm} \\
+ \left[ (\eth - \bar{\tau'} - 4\tau)(\th - 2\bar{\rho})  + (\thorn - \bar{\rho} - 4\rho)(\eth - 2\bar{\tau'}) \right] T_{lm},
\end{multline}
\begin{multline}
\mathcal{T} h = \half (\eth - \bar{\tau'})(\eth - \bar{\tau'}) h_{ll} + \half (\thorn - \bar{\rho})(\thorn - \bar{\rho}) h_{mm} - \half \left[(\thorn - \bar{\rho})(\eth - 2 \bar{\tau'}) + (\eth -\bar{\tau'})(\thorn - 2 \bar{\rho})\right] h_{lm} \\
= - \psi_0,
\end{multline}
\begin{equation}
\mathcal{O} \eta = 2 \left[ (\thorn - 4\rho - \bar{\rho})(\thorn'-\rho') - (\eth - 4\tau - \bar{\tau'})(\eth' - \tau') -3\Psi_2 \right] \eta,
\end{equation}
\end{subequations}
where
\begin{equation}
P^{abcdef} = \half g^{ad} g^{bc} g^{ef} + \half g^{ae} g^{bd} g^{cf}  + \half g^{ae} g^{bf} g^{cd} - \half g^{ab} g^{cd} g^{ef} - \half g^{ad} g^{be} g^{cf} - \half g^{af} g^{bc} g^{de}.
\end{equation}
The corresponding adjoint operators satisfy
\begin{subequations}\label{FormaldefsofConjugates}
\begin{equation}
\nabla_a w^a[h, h'] = h_{bc}(\mathcal{E} h')^{bc} - h'_{bc}(\mathcal{E}^\dagger h)^{bc},
\end{equation}
\begin{equation}
\nabla_a \sigma^a[\phi, T] = \phi (\mathcal{S} T) - T_{a b} (\mathcal{S}^\dagger \phi)_{a b},
\end{equation}
\begin{equation}
\nabla_a t^a[\eta, h] = \eta (\mathcal{T} h) - h^{a b} (\mathcal{T}^\dagger \eta)_{a b},
\end{equation}
\begin{equation}
\nabla_a \pi^a[\phi, \eta] = \phi (\mathcal{O} \eta) - \eta (\mathcal{O}^\dagger \phi).
\end{equation}
\end{subequations}
which defines $w^a, \sigma^a, t^a, \pi^a$ up to a total divergence. Below, 
we make specific choices, which will lead to a specific $H^{ab}$.
One can derive the adjoint operators explicitly in GHP form as
\begin{subequations}
\begin{equation}
(\mathcal{E}^\dagger h)_{ab} = (\mathcal{E} h)_{ab},
\end{equation}
\begin{multline}
(\mathcal{S}^\dagger \phi)_{a b} = - l_a l_b(\eth - \tau)(\eth + 3\tau) \phi - m_a m_b (\thorn - \rho)(\thorn + 3\rho) \phi \\
+ l_{(a} m_{b)} \{(\thorn - \rho + \bar{\rho})(\eth + 3\tau) + (\eth - \tau + \bar{\tau'})(\thorn + 3\rho) \} \phi \\
= \nabla^c \big( (l_{(a} m_{b)} l_c - l_a l_b m_c) (\eth + 3\tau)\phi + (l_{(a} m_{b)} m_c - m_a m_b l_c) (\thorn + 3\rho) \phi \big) \\
= \nabla^c (\nabla^d - 4 \tau \overline{m}^d + 4 \rho n^d) \big( \phi (2 l_{(a} m_{b)} l_{(c} m_{d)} - l_a l_b m_c m_d - m_a m_b l_c l_d) \big),
\end{multline}
\begin{multline}
(\mathcal{T}^\dagger \eta)_{a b} = \half l_a l_b (\eth - \tau)(\eth - \tau) \eta + \half m_a m_b (\thorn - \rho)(\thorn - \rho) \eta \\
- \half l_{(a} m_{b)}\{(\eth + \bar{\tau'} -\tau)(\thorn - \rho) + (\thorn - \rho + \bar{\rho})(\eth - \tau) \} \eta  \\
= -\half \nabla^c \nabla^d \big( \eta (2 l_{(a} m_{b)} l_{(c} m_{d)} - l_a l_b m_c m_d - m_a m_b l_c l_d) \big),
\end{multline}
\begin{equation}
\mathcal{O}^\dagger \phi = 2\left[ (\thorn' - \bar{\rho'})(\thorn + 3 \rho) - (\eth' - \bar{\tau})(\eth + 3\tau) -3\Psi_2 \right] \phi.
\end{equation}
\end{subequations}
When computing these, the boundary terms $w^a, \sigma^a, t^a, \pi^a$ drop out as
\begin{subequations}
\begin{equation}
w^a[h, h'] = h_{bc} \left(\mathcal{F} h' \right)^{abc} - h'_{bc} \left(\mathcal{F} h \right)^{abc},
\end{equation}
\begin{multline}
\sigma^a[\phi, T] = l^a \left[ T_{mm} (\thorn + 3 \rho) \phi - \phi (\thorn - \bar{\rho}) T_{mm} + \phi (\eth - 2 \bar{\tau'}) T_{lm} - T_{lm} (\eth + 3 \tau) \phi \right] \\
+ m^a \left[ T_{ll} (\eth + 3 \tau) \Phi - \Phi (\eth - \bar{\tau'}) T_{ll} + \Phi (\thorn - 2 \bar{\rho}) T_{lm} - T_{lm} (\thorn + 3 \rho) \Phi \right],
\end{multline}
\begin{multline}
t^a[\eta, h] = \half l^a \left[ \eta (\thorn - \bar{\rho}) h_{mm} - h_{mm} (\thorn - \rho) \eta + h_{lm} (\eth - \tau) \eta - \eta (\eth - 2 \bar{\tau'}) h_{lm} \right] \\
+\half m^a \left[ \eta (\eth - \bar{\tau'}) h_{ll} - h_{ll} (\eth - \tau) \eta + h_{lm} (\thorn - \rho) \eta - \eta (\thorn - 2 \bar{\rho}) h_{lm} \right],
\end{multline}
\begin{equation}
\pi^a[\phi, \eta] = \phi \left( \Theta^a + 4 B^a \right) \eta - \eta \left( \Theta^a - 4 B^a \right) \phi,
\end{equation}
\end{subequations}
where
\begin{equation}
\label{Fdefinition}
\left(\mathcal{F} h \right)^{abc} = P^{abcdef} \nabla_d h_{ef}.
\end{equation}
The operator $\mathcal{F}$ is related to the linearized Einstein operator by $(\mathcal{E} h)^{ab} = \nabla_c \left(\mathcal{F} h \right)^{cab}$.\par
Using the equation $\mathcal{O} \mathcal{T} = \mathcal{S} \mathcal{E}$, one obtains the following relation \eqref{wsigmapitidentity} between $w^a$,  $\sigma^a$, $\pi^a$, and $t^a$.
\begin{equation}
\label{wsigmapitidentity}
\nabla_a \left( w^a[\mathcal{S}^\dagger \Phi, h] + \sigma^a[\Phi, \mathcal{E} h] - \pi^a[\Phi, \mathcal{T} h] - t^a[\mathcal{O}^\dagger \Phi, h] \right) = 0.
\end{equation}
\cite{wald1990identically} have proven that one learns from this equation that there must exist a 2-form $H$, constructed out of the fields $h_{ab}, \Phi$ and their derivatives such that $w + \sigma - \pi - t = \intd \star H$, where $w = w_a(\mathcal{S}^\dagger \Phi, A) \star {\rm d}x^a$, $\sigma = \sigma_a(\Phi, \mathcal{E} A) \star {\rm d}x^a$, $\pi = \pi_a(\Phi, \mathcal{T} A) \star {\rm d}x^a$, and $t = t_a(\mathcal{O}^\dagger \Phi, A) \star {\rm d}x^a$.
In divergence form, the identity for $H^{ab}[\Phi, h]$ is
\begin{equation}\label{HabMastereq}
\nabla_b H^{ba}[\Phi, h] = - w^a[\mathcal{S}^\dagger \Phi, h] - \sigma^a[\Phi, \mathcal{E} h] + t^a[\mathcal{O}^\dagger \Phi, h] + \pi^a[\Phi, \mathcal{T} h].
\end{equation}
Of course, $H^{ab}$ is defined by the above equation only up to a total local divergence $\nabla_c C^{[abc]}$, or equivalently, $\star H$ is only defined up to an exact local 1-form.
Therefore, the expression for $H^{ab}$ given below only represents one possible solution. Unfortunatlely, the proof given in \cite{wald1990identically} is not really useful to actually find an $H^{ab}$ in practice, so we simply try to ``peel off'' the total divergence from the right side of \eqref{HabMastereq}
by hand. 

For this, it is convenient to write out the divergence operator on $H^{ab}$ in terms of GHP operators as
\begin{equation}\label{Divegenceof2form}
\begin{split}
\nabla_b H^{ba} =& \left[ n^a \left( \thorn - \rho - \bar{\rho} \right) - l^a \left( \thorn' - \rho' - \bar{\rho}' \right) + \bar{m}^a \left( \tau - \bar{\tau}' \right) - m^a \left( \tau' - \bar{\tau} \right) \right] H_{nl} \\
+& \left[ m^a \left( \thorn - \rho \right) - l^a \left( \eth - \tau \right) \right] H_{\bar{m} n} + \left[ \bar{m}^a \left( \thorn - \bar{\rho} \right) - l^a \left( \eth' - \bar{\tau} \right) \right] H_{m n} \\
+& \left[ m^a \left( \thorn' - \bar{\rho}' \right) - n^a \left( \eth - \bar{\tau}' \right) \right] H_{\bar{m} l} + \left[ \bar{m}^a \left( \thorn' - \rho' \right) - n^a \left( \eth' - \tau' \right) \right] H_{m l} \\
+& \left[ \bar{m}^a \left( \eth - \tau - \bar{\tau}' \right) - m^a \left( \eth' - \tau' - \bar{\tau} \right) + n^a \left( \rho - \bar{\rho} \right) - l^a \left( \rho' - \bar{\rho}' \right) \right] H_{\bar{m} m}.
\end{split}
\end{equation}
We now substitute the above expressions for $\mathcal{S}^\dagger \Phi$, $\mathcal{E} h$, $\mathcal{O}^\dagger \Phi$ and $\mathcal{T} h$ respectively into $w^a$, $\sigma^a$, $t^a$ and $\pi^a$ on the right side of \eqref{HabMastereq}. Next, by comparing the result with \eqref{Divegenceof2form} we should in principle be able to get the components of the 2-form $H^{ab}[\Phi, h]$ up to a total divergence. 
This calculation is extremely tedious and was done using the Xtensor package of Mathematica, along the following lines. 
First, we neglect the terms containing less than three derivatives on the right side of \eqref{HabMastereq} and try to ``peel off'' a total divergence by making suitable choices of the
NP components of $H^{ab}$ appearing in \eqref{Divegenceof2form}. In fact, we cannot peel off a divergence exactly, but only at the cost of terms containing less than three derivatives. Having determined 
the highest derivative parts of the NP components of $H^{ab}$, we collect any terms with less than three derivatives that were left over when peeling of the total divergence, and combine them 
with all terms with less than three derivatives  on the right side of \eqref{HabMastereq} that were neglected so far. Of these remaining terms, we now discard all terms with less than two derivatives and repeat the process until no terms are left. We are ensured on general grounds that this must happen, though it is in practice rather non-trivial and time-consuming to determine all terms. {The corresponding Mathematica notebooks are provided at \cite{Vahid_github}}.  All in all, it is found that a skew symmetric tensor $H^{ab}$ which satisfies \eqref{HabMastereq} is given by the following expression.
\begin{equation}\label{ExpliciteHab}
\begin{split}
H^{ab} =& \Big\{ - \half h_{ll} (\mathcal{S}^\dagger \Phi)_{nn} + h_{lm} (\mathcal{S}^\dagger \Phi)_{n\overline{m}} - \half h_{mm} (\mathcal{S}^\dagger \Phi)_{\overline{m}\overline{m}} - \rho h_{lm} (\eth + 3\tau) \Phi + \mathcal{T} h \Phi\\
&- (\left(\mathcal{F} h \right)_{mlm} - \left(\mathcal{F} h \right)_{lmm} - \tau h_{lm}) (\thorn + 3\rho) \Phi \Big\} \left( l^a n^b - n^a l^b \right) \\
+& \Big\{ - h_{ln} (\mathcal{S}^\dagger \Phi)_{n\overline{m}} + h_{l\overline{m}} (\mathcal{S}^\dagger \Phi)_{nn} - h_{m\overline{m}} (\mathcal{S}^\dagger \Phi)_{n\overline{m}} + h_{nm} (\mathcal{S}^\dagger \Phi)_{\overline{m}\overline{m}} + (\mathcal{I} h) (\eth + 3 \tau) \Phi \\
&+ (\mathcal{J} h) (\thorn + 3 \rho) \Phi + (\mathcal{K} h) \Phi \Big\} \left( l^a m^b - m^a l^b \right) \\
+& \Big\{ (\left(\mathcal{F} h \right)_{mlm} - \left(\mathcal{F} h \right)_{lmm} + \rho h_{mm}) (\eth + 3 \tau) \Phi - \tau h_{mm} (\thorn + 3 \rho) \Phi \Big\} \left( l^a \overline{m}^b - \overline{m}^a l^b \right) \\
+& \Big\{ (\left(\mathcal{F} h \right)_{llm} - \left(\mathcal{F} h \right)_{mll} + \tau h_{ll}) (\thorn + 3 \rho) \Phi - \rho h_{ll} (\eth + 3 \tau) \Phi \Big\} \left( n^a m^b - m^a n^b \right) \\
+& \Big\{ \half h_{ll} (\mathcal{S}^\dagger \Phi)_{nn} - h_{lm} (\mathcal{S}^\dagger \Phi)_{n\overline{m}} + \half h_{mm} (\mathcal{S}^\dagger \Phi)_{\overline{m}\overline{m}} + \tau h_{lm} (\thorn + 3 \rho) \Phi - \mathcal{T} h\\
&+ (\left(\mathcal{F} h \right)_{llm} - \left(\mathcal{F} h \right)_{mll} - \rho h_{lm}) (\eth + 3 \tau) \Phi\Phi \Big\} \left( m^a \overline{m}^b - \overline{m}^a m^b \right),
\end{split}
\end{equation}
where
\begin{equation}
\begin{split}
\mathcal{I} h =& \left( \thorn + \rho - \bar{\rho} \right) h_{m\bar{m}} - \left( \eth - \tau' \right) h_{l\bar{m}} - \rho' h_{ll} - (\rho + \bar{\rho}) h_{ln} + \tau' h_{lm},\\
\mathcal{J} h =& \left( \eth + \tau - \tau' \right) h_{ln} -\left( \thorn - \bar{\rho} \right) h_{nm} + \rho' h_{lm} - \tau' h_{mm} - (\tau + \bar{\tau}') h_{m\bar{m}},\\
\mathcal{K} h =& \half \left[ \eth \thorn' - \bar{\tau}' \thorn' + \left( 3 \rho' - \bar{\rho}' \right) \eth + 2 \left( \rho' + \bar{\rho}' \right) \tau \right] h_{ll}\\
-& \half \left[ \eth \thorn + \left( \tau - 2 \bar{\tau}' \right) \thorn + \left( \rho - \bar{\rho} \right) \eth + 2 \left( \rho \tau - \bar{\rho} \bar{\tau}' \right) \right] h_{ln}\\
-& \half \bigg[ \thorn' \thorn + \eth' \eth + \left( 2 \rho' - \bar{\rho}' \right) \thorn + \left( \rho - 2 \bar{\rho} \right) \thorn' + \left( 2 \tau' - \bar{\tau} \right) \eth + \left( \tau - 2 \bar{\tau}' \right) \eth'\\
&+ 2 \left( 2 \tau \tau' - \bar{\tau} \bar{\tau}' - 2 \tau \bar{\tau} - \bar{\rho} \bar{\rho}' - 2 \rho' \bar{\rho} + 2 \rho \rho' - \half \bar{\Psi}_2 \frac{\rho}{\bar{\rho}} + \half \Psi_2 \frac{\bar{\rho}}{\rho} \right) \bigg] h_{lm}\\
+& \half \left[ \eth \eth + 2 \left( \tau - \bar{\tau}' \right) \eth + 2 \tau \left( \tau - \bar{\tau}' \right) \right] h_{l\bar{m}}\\
+& \half \left[ \thorn \thorn + 2 \left( \rho - \bar{\rho} \right) \thorn + 2 \rho \left( \rho - \bar{\rho} \right) \right] h_{nm}\\
+& \half \left[ \eth' \thorn - \left( \bar{\tau} - 2 \tau' \right) \thorn + \left( \rho - \bar{\rho} \right) \eth' + 2 \rho \left( \bar{\tau} + 2 \tau' \right) \right] h_{mm}\\
-& \half \left[ \eth \thorn + \left( \tau - 2 \bar{\tau}' \right) \thorn + \left( \rho - \bar{\rho} \right) \eth + 2 \left( \rho \tau - \bar{\rho} \bar{\tau}' \right) \right] h_{m\bar{m}}\\
\end{split}
\end{equation}
When integrating $H^{ab}$ over a 2-surface we need the Hodge-dual, which is given by
\begin{equation}
\begin{split}
\star H =& -i \Big\{ \half h_{ll} (\mathcal{S}^\dagger \Phi)_{nn} - h_{lm} (\mathcal{S}^\dagger \Phi)_{n\overline{m}} + \half h_{mm} (\mathcal{S}^\dagger \Phi)_{\overline{m}\overline{m}} + \tau h_{lm} (\thorn + 3 \rho) \Phi - \mathcal{T} h \Phi\\
&+ (\left(\mathcal{F} h \right)_{llm} - \left(\mathcal{F} h \right)_{mll} - \rho h_{lm}) (\eth + 3 \tau) \Phi\Big\} \left( l \wedge n \right)\\
+& i \Big\{ - h_{ln} (\mathcal{S}^\dagger \Phi)_{n\overline{m}} + h_{l\overline{m}} (\mathcal{S}^\dagger \Phi)_{nn} - h_{m\overline{m}} (\mathcal{S}^\dagger \Phi)_{n\overline{m}} + h_{nm} (\mathcal{S}^\dagger \Phi)_{\overline{m}\overline{m}} + (\mathcal{I} h) (\eth + 3 \tau) \Phi \\
&+ (\mathcal{J} h) (\thorn + 3 \rho) \Phi + (\mathcal{K} h) \Phi \Big\} \left( l \wedge m \right) \\
-& i \Big\{ (\left(\mathcal{F} h \right)_{mlm} - \left(\mathcal{F} h \right)_{lmm} + \rho h_{mm}) (\eth + 3 \tau) \Phi - \tau h_{mm} (\thorn + 3 \rho) \Phi \Big\} \left( l \wedge \overline{m} \right) \\
-& i \Big\{ (\left(\mathcal{F} h \right)_{llm} - \left(\mathcal{F} h \right)_{mll} + \tau h_{ll}) (\thorn + 3 \rho) \Phi - \rho h_{ll} (\eth + 3 \tau) \Phi \Big\} \left( n \wedge m \right) \\
-& i \Big\{ - \half h_{ll} (\mathcal{S}^\dagger \Phi)_{nn} + h_{lm} (\mathcal{S}^\dagger \Phi)_{n\overline{m}} - \half h_{mm} (\mathcal{S}^\dagger \Phi)_{\overline{m}\overline{m}} - \rho h_{lm} (\eth + 3\tau) \Phi + \mathcal{T} h \Phi\\
&- (\left(\mathcal{F} h \right)_{mlm} - \left(\mathcal{F} h \right)_{lmm} - \tau h_{lm}) (\thorn + 3\rho) \Phi\Big\} \left( m \wedge \overline{m} \right).
\end{split}
\end{equation}
We remind the reader that $\star H$ is unique only up to an exact local form. 
Such a contribution could be introduced e.g. to simplify the form of $\star H$. We have not attempted to do this.
Finally, we remark that for spin-1, the tensor $H^{ab}$ has already been constructed in \cite{hollands2020radiation}. It seems to have a much simpler form than in the spin-2 case.

\section{Proof of \eqref{commutator}}
\label{app:B}

In this appendix, we will prove \eqref{commutator}. Although that relation is used in the body of the paper only for gravitational perturbations, i.e., spin $s=(p-q)/2=\pm 2$, 
most of the calculations work for arbitrary $s$, and in fact $p$ and $q$. We therefore state the results in their most general form in view of 
their potential applicability to various values of the spin and for general Petrov type D spacetimes, of which Kerr is an example.

We begin by defining
\begin{equation}\label{eq:Ohatoperatordef}
\mathcal{P} \eta = \left( \Theta_a + p B'_a + q \bar{B}'_a \right) \left( \Theta^a + p B'^a + q \bar{B}'^a \right) \eta + 2 (\dfrac{p}{\zeta} + \dfrac{q}{\bar{\zeta}} ) \GHPLie_\xi \eta,
\end{equation}
where $B_a$ was given above in \eqref{Bdef}, $\zeta$ was given in \eqref{zetadef}, $\xi^a$ was given in \eqref{xidef}, 
and where here and in the following we use the GHP priming operation. Strictly speaking, $\mathcal P$
denotes not one operator, but sevaral operators depending on the chosen $(p,q)$, and the same is understood for other operators below. 

The following lemma is checked by direct computation.

\begin{lemma}\label{lemma:OhatandKproperties}
The operators $\mathcal{P}$ and $\mathcal{K}$ (see \eqref{eq:Koperatordef}) have the following properties:
\begin{enumerate}[label=(\roman*)]
\item They are self-dual, i.e., $\mathcal{K}^\dagger = \mathcal{K}$ and $\mathcal{P}^\dagger = \mathcal{P}$,
\item They are real, i.e.,  $\overline{\mathcal{K} \eta} = \mathcal{K} \bar{\eta}$ and $\overline{\mathcal{P} \eta} = \mathcal{P} \bar{\eta}$ for every smooth weighted scalar $\eta$, where we mean the GHP overbar operation,
\item $\mathcal{K}' \eta' = \zeta^{-p} \bar{\zeta}^{-q} \mathcal{K} \zeta^p \bar{\zeta}^q \eta'$ and $\mathcal{P}' \eta' = \zeta^{-p} \bar{\zeta}^{-q} \mathcal{P} \zeta^p \bar{\zeta}^q \eta'$ for any $\eta  \GHPwt (p,q)$,
\item $\mathcal{P} \Phi = \mathcal{O}^\dagger \Phi$ for every smooth weighted scalar $\Phi  \GHPwt (-2s,0)$,
\item $\mathcal{P} \phi = \mathcal{O} \phi$ for every smooth weighted scalar $\phi  \GHPwt (2s,0)$
\end{enumerate}
where $s=(p-q)/2$ is the spin of the theory, i.e., $s=1$ for electromagnetic- and $s=2$ for gravitational perturbations.
\end{lemma}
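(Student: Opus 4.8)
The plan is to verify the five items by a direct GHP computation, organised so that (ii) and (iii) are cheap structural observations, (i) is an integration‑by‑parts argument, and the real content is concentrated in (iv)--(v). Throughout I would work in an NP frame aligned with the principal null directions, using the type‑$D$ simplifications $\kappa=\sigma=\kappa'=\sigma'=0$ and $\Psi_i=0$ for $i\neq2$, the GHP commutators between $\thorn,\thorn',\eth,\eth'$, and the background relation $\Theta_a\Psi_2=-3(B_a+B_a')\Psi_2$ of \eqref{eq:gradPsi}. In practice I would run the explicit checks in the Kinnersley frame and Boyer--Lindquist coordinates (appendix \ref{app:D}) and then invoke GHP covariance to get the frame‑independent statement.

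For (iii) I would use that priming is an involutive automorphism of the GHP calculus exchanging $l^a\leftrightarrow n^a$, $m^a\leftrightarrow\bar m^a$, hence $\rho\leftrightarrow-\rho'$, $\tau\leftrightarrow-\tau'$, $\Psi_2\leftrightarrow\Psi_2$, $B_a\leftrightarrow B_a'$. Computing $\zeta'$ from \eqref{zetadef} shows that $\zeta$ and $\zeta'$ differ by a definite factor, so that $\gamma$ and $\xi_a$ of \eqref{xidef} and the Carter tensor $K^{ab}$ of \eqref{Kabdef} (prime‑invariant) transform exactly as needed for $\mathcal{K}'\eta'=\zeta^{-p}\bar\zeta^{-q}\mathcal{K}\zeta^p\bar\zeta^q\eta'$, and similarly for $\mathcal{P}$ using \eqref{eq:Ohatoperatordef}. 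Item (ii) is the analogous bookkeeping under the overbar operation, which fixes $l^a,n^a$, swaps $m^a\leftrightarrow\bar m^a$, conjugates scalars, exchanges $\zeta\leftrightarrow\bar\zeta$, and interchanges $p$ and $q$; since $K^{ab}$ and $\xi_a$ are real, reality of $\mathcal K$ and $\mathcal P$ follows.

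For (i) I would integrate by parts twice in the principal part $(\Theta_a+pB_a'+q\bar B_a')K^{ab}(\Theta_b+pB_b'+q\bar B_b')$: because $K^{ab}$ is symmetric and $\Theta_a$ is metric‑compatible on $L_{p,q}\otimes L_{-p,-q}$, the top‑order piece is manifestly self‑dual, and the first‑order remainder generated by commuting the $\Theta$'s past $K^{ab}(pB_b'+q\bar B_b')$ — together with the terms quadratic in the potential — should reorganise, using $\nabla_{(a}\xi_{b)}=0$ and the definitions of $\gamma$ and $\xi$, into $2(p\gamma+q\bar\gamma)\GHPLie_\xi$ plus its formal adjoint, the sum then being self‑dual; the identical argument handles $\mathcal P$. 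Given (ii)--(iii), one only needs to carry this out in a single sector of weights.

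The heart of the lemma is (iv)--(v), and this is the step I expect to be the main obstacle. Setting $q=0$ and $p=\mp2s$ (the weights $(-2s,0)$ and $(2s,0)$), I would expand the ``primed twisted Laplacian'' $(\Theta_a+pB_a')(\Theta^a+pB'^a)$ and subtract the unprimed one $(\Theta_a+pB_a)(\Theta^a+pB^a)$. Using $B_a'-B_a=-\xi_a/\zeta$ from \eqref{xidef} and the explicit form \eqref{eq:GHPLie-simplified} of $\GHPLie_\xi$ on a weighted scalar, the part of the difference carrying one derivative on $\eta$ is precisely $-\tfrac{2p}{\zeta}\GHPLie_\xi$, which cancels the extra term of $\mathcal P$ in \eqref{eq:Ohatoperatordef}, leaving a purely algebraic remainder built from $\Theta_aB^a-\Theta_aB'^a$, $B_aB^a-B_a'B'^a$, and $Y^{ab}\Theta_a\xi_b$. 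It then remains to check that this remainder equals $-16\Psi_2$, i.e.\ reproduces the curvature term of $\mathcal O$ and $\mathcal O^\dagger$ in \eqref{eq:Odagger}; this follows from the GHP Ricci and Bianchi identities in a type‑$D$ background, which express $\Theta_aB^a$, $\Theta_aB'^a$, $B_aB^a$, $B_a'B'^a$ through $\rho,\rho',\tau,\tau',\Psi_2$, together again with \eqref{eq:gradPsi}. The practical difficulty is keeping the many spin‑coefficient relations, and the branch choice of $\zeta$ in \eqref{zetadef}, mutually consistent — which is why I would perform the computation explicitly in the Kinnersley frame and appeal to GHP covariance at the end.
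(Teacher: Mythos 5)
Your proposal is correct and follows the same route as the paper, which simply states that the lemma "is checked by direct computation" and gives no further detail; your outline supplies a sensible organization of exactly that computation, correctly identifying the key cancellation in (iv)--(v) between the $\xi^a\Theta_a$ cross terms generated by $B_a'-B_a=-\xi_a/\zeta$ and the $\tfrac{2p}{\zeta}\GHPLie_\xi$ term of $\mathcal P$, with the remaining algebraic terms to be matched against $-16\Psi_2$ via the type-D Ricci/Bianchi identities. The structural arguments for (i)--(iii) (self-duality of the divergence-form principal part $D_aK^{ab}D_b$ plus anti-self-duality of the $\GHPLie_\xi$ term under $(p,q)\to(-p,-q)$, and bookkeeping under prime and overbar) are likewise consistent with what the paper leaves implicit.
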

Next, we construct the following operators $\mathcal{G}$, $\mathcal{R}$ and $\mathcal{L}$ by combining $\mathcal{K}$ and $\mathcal{P}$ in a certain way.

\begin{equation}
\begin{split}
\mathcal{G} =& \mathcal{K} - \dfrac{1}{4} K \mathcal{P},\\
\mathcal{R} =& \dfrac{1}{4} \abs{\zeta}^2 \mathcal{P} + \mathcal{G},\\
\mathcal{L} =& \dfrac{1}{4} \abs{\zeta}^2 \mathcal{P} - \mathcal{G}.
\end{split}
\end{equation}
Here $K = g^{ab}K_{ab}$ is the trace of the Killing tensor $K_{ab}$ \eqref{Kabdef}. See \eqref{eq:Koperatordef} for the definition of the ``Carter'' operator $\mathcal K$ for general $(p,q)$.
We will make use of the following lemma \ref{lemma:OKcommutator} observed by \cite{beyer2008new}.
\begin{lemma}\label{lemma:OKcommutator}
Suppose we have partial differential operators $\mathcal{A}, \mathcal{B}$ (defined on suitably compatible vector bundles) such that
\begin{enumerate}[label=(\roman*)]
\item $[\mathcal{A}, \mathcal{B}] = 0$,
\item $\mathcal{A}(\alpha f) = \alpha \mathcal{A}(f)$,
\item $\mathcal{B}(\beta f) = \beta \mathcal{B}(f)$
\end{enumerate}
for all $f$, where $\alpha,\beta$ are certain functions, i.e., multiplication operators. Then
\begin{equation}
\left[ \frac{1}{\alpha + \beta} (\mathcal{A} + \mathcal{B}), \frac{\alpha}{\alpha + \beta} \mathcal{A} - \frac{\beta}{\alpha + \beta} \mathcal{B} \right] = 0.
\end{equation}
\end{lemma}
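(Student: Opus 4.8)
The plan is to reduce everything to commutator algebra with multiplication operators. Write $M_\alpha, M_\beta$ for multiplication by $\alpha,\beta$, set $s:=\alpha+\beta$ and note that $M_s=M_\alpha+M_\beta$ is invertible with inverse $M_{1/s}$; all of $M_\alpha,M_\beta,M_s,M_{1/s}$ commute with one another. With $C:=\mathcal{A}+\mathcal{B}$, the two operators appearing in the commutator are $P=M_{1/s}C$ and $Q=M_{1/s}(M_\alpha\mathcal{A}-M_\beta\mathcal{B})$, while hypotheses (ii)--(iii) read $[\mathcal{A},M_\alpha]=0$ and $[\mathcal{B},M_\beta]=0$, and (i) reads $[\mathcal{A},\mathcal{B}]=0$.

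First I would record the algebraic identity $Q=M_\alpha P-\mathcal{B}$, which follows from $M_\alpha\mathcal{A}-M_\beta\mathcal{B}=M_\alpha C-M_s\mathcal{B}$ together with $M_{1/s}M_s=\mathrm{id}$ and the commuting of multiplication operators. Using this and $[P,P]=0$,
\[
[P,Q]=[P,M_\alpha P]-[P,\mathcal{B}]=[P,M_\alpha]\,P-[P,\mathcal{B}].
\]
Then I would evaluate the two remaining commutators by the Leibniz rule. For the second, $[P,\mathcal{B}]=M_{1/s}[C,\mathcal{B}]+[M_{1/s},\mathcal{B}]C$, and $[C,\mathcal{B}]=[\mathcal{A},\mathcal{B}]=0$ by (i), so $[P,\mathcal{B}]=-[\mathcal{B},M_{1/s}]\,C$. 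For the first, $[P,M_\alpha]=M_{1/s}[C,M_\alpha]=M_{1/s}[\mathcal{B},M_\alpha]$ using (ii) and that multiplication operators commute. The key step is now to replace $M_\alpha$ by $M_s-M_\beta$ and use (iii): $[\mathcal{B},M_\alpha]=[\mathcal{B},M_s]-[\mathcal{B},M_\beta]=[\mathcal{B},M_s]$. Hence $[P,M_\alpha]\,P=M_{1/s}[\mathcal{B},M_s]M_{1/s}C$.

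Finally I would invoke the elementary conjugation identity, valid for any operator $\mathcal{B}$ and any invertible multiplication operator $M_s$,
\[
M_{1/s}[\mathcal{B},M_s]M_{1/s}=M_{1/s}\mathcal{B}-\mathcal{B}M_{1/s}=-[\mathcal{B},M_{1/s}],
\]
obtained by expanding $[\mathcal{B},M_s]=\mathcal{B}M_s-M_s\mathcal{B}$ and cancelling $M_sM_{1/s}=M_{1/s}M_s=\mathrm{id}$. Substituting back, $[P,M_\alpha]P=-[\mathcal{B},M_{1/s}]C=[P,\mathcal{B}]$, and therefore $[P,Q]=0$.

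The main obstacle is organisational rather than computational: if one naively expands $[P,Q]$ while merging $M_{1/s}$ with $M_\alpha$ into $M_{\alpha/s}$, one is forced to commute $\mathcal{A}$ or $\mathcal{B}$ past $M_{1/s}$ and $M_{\alpha/s}$, neither of which is possible since those functions depend on \emph{both} $\alpha$ and $\beta$, and nothing cancels. The resolution is to keep $M_{1/s}$ and $M_\alpha$ separate, to trade $[\mathcal{B},M_\alpha]$ for $[\mathcal{B},M_s]$ via $M_\alpha=M_s-M_\beta$, and then to use the conjugation identity above, after which the expression collapses in one line. I do not expect other difficulties; the argument uses only (i)--(iii), independently of the precise nature of $\mathcal{A}$ and $\mathcal{B}$.
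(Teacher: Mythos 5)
Your argument is correct: every step checks out. The identity $Q=M_\alpha P-\mathcal{B}$ follows from $M_\alpha\mathcal{A}-M_\beta\mathcal{B}=M_\alpha C-M_s\mathcal{B}$; the reduction $[P,Q]=[P,M_\alpha]P-[P,\mathcal{B}]$ is just the Leibniz rule together with $[P,P]=0$; the trade $[\mathcal{B},M_\alpha]=[\mathcal{B},M_s]$ uses (iii); and the conjugation identity $M_{1/s}[\mathcal{B},M_s]M_{1/s}=-[\mathcal{B},M_{1/s}]$ closes the loop, giving exact cancellation. Note that the paper does not actually prove this lemma --- it only cites it (as ``observed by'' an earlier reference) and immediately applies it to the operators $\mathcal{R}$, $\mathcal{L}$ with $\alpha=K_{ln}$, $\beta=K_{m\bar m}$ --- so there is no in-text proof to compare against; your self-contained operator-algebra verification fills that gap cleanly and uses only hypotheses (i)--(iii) plus the implicit nonvanishing of $\alpha+\beta$ (needed for $M_{1/s}$ to exist, and already presupposed by the statement). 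Your closing remark about why the naive expansion with merged multipliers $M_{\alpha/s}$ stalls is also accurate and worth keeping.
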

This lemma is used to prove the following result.  

\begin{theorem}\label{lemma:DLcommutator}
Let $\eta \circeq \GHPw{p}{q}$. Then in any type D spacetime
\begin{enumerate}[label=(\roman*)]
\item $\left[ \mathcal{K}, \mathcal{P} \right] = 0$,
\item $\abs{\zeta}^{-2} \mathcal{G} \abs{\zeta}^2 = \mathcal{G}^\dagger$,
\item $\left[ \abs{\zeta}^2 \mathcal{P}, \mathcal{G} \right] = 0$.
\end{enumerate}
\end{theorem}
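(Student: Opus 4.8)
The plan is to prove Theorem~\ref{lemma:DLcommutator} in the order (i) $\to$ (ii) $\to$ (iii), treating (i) as the genuine computational heart of the matter and deriving the other two as relatively formal consequences. For part (i), I would establish $[\mathcal{K},\mathcal{P}]=0$ by direct computation in the GHP formalism, working with a Kinnersley-type tetrad aligned with the principal null directions so that $\sigma=\sigma'=\kappa=\kappa'=0$ and $\Psi_i=0$ for $i\neq 2$. The strategy is to expand both $\mathcal{K}$ (from \eqref{eq:Koperatordef}) and $\mathcal{P}$ (from \eqref{eq:Ohatoperatordef}) in terms of the GHP derivatives $\thorn,\thorn',\eth,\eth'$ acting on a scalar of weight $(p,q)$, using the explicit forms of $K^{ab}$, $\zeta$, $B_a$, $B'_a$, $\xi^a$, and $\gamma$. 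One then commutes the two second-order operators, repeatedly invoking the GHP commutation relations and the Bianchi/Ricci identities specialized to type D, plus the key intertwining identity \eqref{eq:gradPsi}, $\Theta_a\Psi_2=-3(B_a+B'_a)\Psi_2$, and the analogous identity $\Theta_a\zeta=(\gamma/\zeta)(\ldots)$ type relations that follow from the definition \eqref{zetadef} of $\zeta$ in terms of $\Psi_2$ and $\rho$. I expect this to be where essentially all the work lies: the cancellations are not manifest and depend crucially on the precise coefficients ($2(p/\zeta+q/\bar\zeta)$ in $\mathcal{P}$, the factor $2(p\gamma+q\bar\gamma)$ in $\mathcal{K}$) being exactly right, which is presumably why the authors defer it to the appendix and check it with computer algebra. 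I would organize the computation by GHP weight and by the number of derivatives, verifying that the third-order, second-order, first-order and zeroth-order pieces of the commutator each vanish separately.

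For part (ii), the claim $\abs{\zeta}^{-2}\mathcal{G}\abs{\zeta}^2=\mathcal{G}^\dagger$ with $\mathcal{G}=\mathcal{K}-\tfrac14 K\mathcal{P}$, I would use Lemma~\ref{lemma:OhatandKproperties}(i), which gives $\mathcal{K}^\dagger=\mathcal{K}$ and $\mathcal{P}^\dagger=\mathcal{P}$, together with the elementary fact that conjugation by a multiplication operator interacts with $\dagger$ via $(\alpha\, \mathcal{A})^\dagger=\mathcal{A}^\dagger\alpha$ and $(\mathcal{A}\,\alpha)^\dagger=\alpha\,\mathcal{A}^\dagger$ for a scalar multiplier $\alpha$. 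Thus $\mathcal{G}^\dagger=\mathcal{K}-\tfrac14\mathcal{P}K$, and the desired identity becomes $\abs{\zeta}^{-2}(\mathcal{K}-\tfrac14 K\mathcal{P})\abs{\zeta}^2 = \mathcal{K}-\tfrac14\mathcal{P}K$. Since $\mathcal{K}$ and $K$ are both related under similarity by $\abs{\zeta}^2$ only through commutators, this reduces to showing $[\mathcal{K},\abs{\zeta}^2]=\tfrac14\big(K\mathcal{P}\abs{\zeta}^2-\abs{\zeta}^2\mathcal{P}K\big)$, which one checks by noting that $K=-\tfrac14(\zeta-\bar\zeta)^2+\tfrac14(\zeta+\bar\zeta)^2\cdot(\text{appropriate trace})=\zeta\bar\zeta=\abs{\zeta}^2$ up to a constant—indeed $K^{ab}g_{ab}$ is a multiple of $\abs\zeta^2$ in these conventions—so the similarity transform is really conjugation by (a multiple of) $K$ itself, and the statement collapses to an identity among $\mathcal{K}$, $\mathcal{P}$, $K$ that follows from part (i) and the multiplicative structure. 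I would make the relation $K=\abs{\zeta}^2$ (up to normalization) explicit early, since it is what ties (ii) and (iii) together cleanly.

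For part (iii), $[\abs{\zeta}^2\mathcal{P},\mathcal{G}]=0$, the plan is to invoke Lemma~\ref{lemma:OKcommutator} (the Beyer-type algebraic lemma) with the choices $\mathcal{A}=\mathcal{K}$, $\mathcal{B}=-\tfrac14 K\mathcal{P}$, and multipliers chosen so that $\alpha+\beta$ and $\alpha/(\alpha+\beta)$ reproduce the combinations $\abs{\zeta}^2\mathcal{P}$ and $\mathcal{G}$. Concretely, from part (i) we have $[\mathcal{K},\mathcal{P}]=0$, hence $[\mathcal{K},K\mathcal{P}]=[\mathcal{K},K]\mathcal{P}$; one then checks the remaining hypotheses of the lemma (the scalar-multiplier conditions on $\mathcal{A}$ and $\mathcal{B}$) hold with the identifications $\alpha\propto\abs\zeta^2$, and reads off that $\abs{\zeta}^2\mathcal{P}$ commutes with $\mathcal{G}$. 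The only subtlety is that the lemma as stated requires honest commutativity $[\mathcal{A},\mathcal{B}]=0$ rather than $[\mathcal{K},K\mathcal P]=[\mathcal K,K]\mathcal P$, so I would either verify $[\mathcal{K},K]=0$ directly (plausible since $K=\abs\zeta^2$ is GHP-Lie-derived along the Killing directions and $\mathcal{K}$ is built from the Killing tensor) or apply the lemma to suitably renormalized operators $\abs\zeta^{a}\mathcal K\abs\zeta^{b}$ so that the hypotheses are met on the nose. Parts (ii) and (iii) together are exactly the algebraic input needed to split $\mathcal{O}^\dagger$-commuting operators into a $\mathcal{K}$-like and a $\mathcal{P}$-like factor, which is how the paper later builds the tower $\mathcal{C}_n$.
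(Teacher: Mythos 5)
Your proposal for part (i) (a brute-force GHP expansion of $[\mathcal{K},\mathcal{P}]$) could in principle be carried through, but it is not how the paper proceeds, and more importantly your derivations of (ii) and (iii) rest on a false identity. You assert that the trace $K=g^{ab}K_{ab}$ is proportional to $\abs{\zeta}^2$. It is not: contracting \eqref{Kabdef} with $g_{ab}=2l_{(a}n_{b)}-2m_{(a}\bar m_{b)}$ gives $K=2\left(K_{ln}-K_{m\bar m}\right)=-\tfrac{1}{2}\left(\zeta^2+\bar\zeta^2\right)$, whereas $\abs{\zeta}^2=\zeta\bar\zeta=2\left(K_{ln}+K_{m\bar m}\right)$; these are the \emph{difference} and \emph{sum} of the two eigenvalue functions of the Killing tensor and are not proportional (in Kerr, $K\propto r^2-a^2\cos^2\theta$ while $\abs{\zeta}^2\propto r^2+a^2\cos^2\theta$). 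With this identification gone, your argument for (ii) does not close: $\abs{\zeta}^{-2}\mathcal{G}\abs{\zeta}^2=\mathcal{G}^\dagger$ is a genuinely nontrivial operator identity that does not follow from part (i) plus formal adjoint rules. Likewise, your application of Lemma~\ref{lemma:OKcommutator} in (iii) with $\mathcal{A}=\mathcal{K}$, $\mathcal{B}=-\tfrac14 K\mathcal{P}$ fails its hypotheses: one would need $[\mathcal{K},K]=0$, but $[\mathcal{K},K]$ has principal part $2K^{ab}(\Theta_a K)\Theta_b$, which is nonzero since $K^{ab}\Theta_bK\neq 0$ in Kerr, and neither $\mathcal{K}$ nor $K\mathcal{P}$ commutes with a useful multiplication operator.

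The missing structural idea is the paper's splitting of $\tfrac14\abs{\zeta}^2\mathcal{P}\pm\mathcal{G}$ into a \emph{radial} operator $\mathcal{R}$ (built only from $\thorn,\thorn'$) and an \emph{angular} operator $\mathcal{L}$ (built only from $\eth,\eth'$). The single hard computation — far more tractable than brute-forcing $[\mathcal{K},\mathcal{P}]$, because the two operators involve disjoint sets of GHP derivatives — is to verify $[\mathcal{R},\mathcal{L}]=0$ together with the multiplier properties $\mathcal{R}(K_{ln}\eta)=K_{ln}\mathcal{R}\eta$ and $\mathcal{L}(K_{m\bar m}\eta)=K_{m\bar m}\mathcal{L}\eta$. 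One then applies Lemma~\ref{lemma:OKcommutator} with $\alpha=K_{ln}$, $\beta=K_{m\bar m}$, using $\alpha+\beta=\tfrac12\abs{\zeta}^2$ and $\alpha-\beta=\tfrac12 K$, to obtain $\mathcal{P}=(\alpha+\beta)^{-1}(\mathcal{R}+\mathcal{L})$ and $\mathcal{K}=(\alpha+\beta)^{-1}(\alpha\mathcal{R}-\beta\mathcal{L})$, whence (i); part (ii) follows from the algebraic rearrangement $\mathcal{K}-\tfrac14\mathcal{P}K=(\alpha+\beta)^{-1}(\mathcal{R}-\mathcal{L})\tfrac{\alpha+\beta}{2}=\abs{\zeta}^{-2}\mathcal{G}\abs{\zeta}^2$, and part (iii) is just $[\mathcal{R}+\mathcal{L},\mathcal{R}-\mathcal{L}]=0$. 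So all three claims are consequences of one commutator and two multiplier identities for the radial/angular pieces; without that decomposition (or an equivalent substitute), your parts (ii) and (iii) remain unproved.
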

\begin{proof}\label{proof:DLcommutator}
Let $\alpha := K_{ln} = - \dfrac{1}{8} \left( \zeta - \bar{\zeta} \right)^2$, $\beta := K_{m\bar{m}} = \dfrac{1}{8} \left( \zeta + \bar{\zeta} \right)^2$ and
\begin{equation}
\begin{split}
\mathcal{A} := \mathcal{R} \equiv& \half \abs{\zeta}^2 \left( \thorn - \rho - \bar{\rho} \right) \left( \thorn' - p \rho' - q \bar{\rho}' \right) + \half \abs{\zeta}^2 \left( \thorn' - (p+1) \rho' - (q+1) \bar{\rho}' \right) \thorn\\
&+ \half (p+q) (\zeta + \bar{\zeta}) \GHPLie_{\xi},\\
\mathcal{B} := \mathcal{L} \equiv& - \half \abs{\zeta}^2 \left( \eth - (q+1) \bar{\tau}' - \tau  \right) \left( \eth' - p \tau' \right) - \half \abs{\zeta}^2 \left( \eth' - (p+1) \tau' - \bar{\tau} \right) \left( \eth - q \bar{\tau}' \right)\\
&- \half (p-q) (\zeta - \bar{\zeta}) \GHPLie_{\xi}.
\end{split}
\end{equation}
Through a long tedious calculation using heavily the type D property of the background, one can show that
\begin{enumerate}[label=(\roman*)]
\item $\left[ \mathcal{R}, \mathcal{L} \right] \eta = 0$,
\item $\mathcal{R} (K_{ln} \eta) = K_{ln} \mathcal{R} \eta$,
\item $\mathcal{L} (K_{m\bar{m}} \eta) = K_{m\bar{m}} \mathcal{L} \eta$.
\end{enumerate}
Consequently, using the lemma \ref{lemma:OKcommutator}, we conclude that
\begin{equation}
\left[ 2 \abs{\zeta}^{-2} (\mathcal{R} + \mathcal{L}), 2 \abs{\zeta}^{-2} K_{ln} \mathcal{R} - 2 \abs{\zeta}^{-2} K_{m\bar{m}} \mathcal{L} \right] = \left[ \mathcal{P}, \mathcal{K} \right] = 0.
\end{equation}
Further,
\begin{multline}
\mathcal{G}^\dagger = \mathcal{K} - \dfrac{1}{4} \mathcal{P} K = \frac{\alpha}{\alpha + \beta} \mathcal{R} - \frac{\beta}{\alpha + \beta} \mathcal{L} - \frac{1}{\alpha + \beta} (\mathcal{R} + \mathcal{L}) \frac{\alpha - \beta}{2}\\
= \frac{1}{\alpha + \beta} \mathcal{R} \alpha - \frac{1}{\alpha + \beta} \mathcal{L} \beta - \frac{1}{\alpha + \beta} (\mathcal{R} + \mathcal{L}) \frac{\alpha - \beta}{2} = \frac{1}{\alpha + \beta} (\mathcal{R} - \mathcal{L}) \frac{\alpha + \beta}{2}\\
= \abs{\zeta}^{-2} \mathcal{G} \abs{\zeta}^2.
\end{multline}
In addition, we can also conclude that
\begin{equation}
\left[ \mathcal{R} + \mathcal{L}, \mathcal{R} - \mathcal{L} \right] = \left[ \half \abs{\zeta}^2 \mathcal{P}, 2 \mathcal{G} \right] = \left[ \abs{\zeta}^2 \mathcal{P}, \mathcal{G} \right] = 0.
\end{equation}
This demonstrates the claimed relations in the theorem.
\end{proof}

\section{Trivial conservation laws}\label{sec:trivial}

An important question when deriving conservation laws is whether the current is \emph{nontrivially} conserved. A conservation law is said to be trivial if either (a) the conserved quantity itself vanishes on solutions, or (b) the quantity is conserved even on non-solutions~\cite{olver1993applications}. A generic trivial conserved quantity will be a combination of (a) and (b), e.g., if by adding a term involving the equation of motion to a conserved current, it becomes conserved even on non-solutions, then the conservation law is trivial. 

As an example, consider a real Klein-Gordon field $\psi$, and a second order symmetry operator $\Lie_\xi^2$ for a Killing vector $\xi^a$. The associated current $x^a(\psi, \Lie_\xi^2\psi)$ is conserved as follows:
\begin{eqnarray}
\nabla^a x_a(\psi, \Lie_\xi^2\psi) &=& (\mathcal{X} \psi) (\Lie_\xi^2 \psi) - \psi (\mathcal{X} \Lie_\xi^2 \psi) \nonumber\\
&=& (\mathcal{X} \psi) (\Lie_\xi^2 \psi) - \psi (\Lie_\xi^2 \mathcal{X}  \psi) \nonumber\\
&=& \Lie_\xi\left[(\mathcal{X} \psi) (\Lie_\xi \psi) - (\Lie_\xi \mathcal{X} \psi) \psi\right] \nonumber\\
&=& \nabla^a \xi_a \left[ (\mathcal{X} \psi) (\Lie_\xi \psi) -  (\Lie_\xi \mathcal{X} \psi) \psi\right].
\end{eqnarray}
Thus if we add the quantity $\xi^a \left[ (\mathcal{X} \psi) (\Lie_\xi \psi) -  (\Lie_\xi \mathcal{X} \psi) \psi\right]$---which vanishes on solutions $\mathcal X \psi = 0$---to $x^a(\psi, \Lie_\xi^2\psi)$, then the resulting current is identically conserved. Hence the current associated to $\Lie_\xi^2$ is trivial. It is easy to see that in this context symmetry operators $\Lie_\xi^n$ for odd $n$ (i.e., skew-adjoint operators) generate nontrivial conserved quantities, whereas those with even $n$ generate trivial ones.

In general, for a symmetry to generate a nontrivial conservation law, it must be a symmetry not just of the equation of motion, but also the Lagrangian. A complete analysis of triviality for the conserved currents discussed in the present manuscript is left for future work.

\section{WKB analysis of conserved currents}
\label{app:WKB}

We make a WKB (high frequency) ansatz for the metric perturbation of the usual form
\begin{equation}
\label{WKB}
h_{ab} \sim {\rm Re} \, A_{ab} e^{i\omega S}    
\end{equation}
with real frequency $\omega \gg 1$. For $h_{ab}$ to be a solution to the linearized Einstein equation, $S$ should as usual satisfy the eikonal equation $g^{ab }\nabla_a S \nabla_b S = 0$, so that $k^a = \nabla^a S$ is tangent to a future directed -- by assumption -- congruence of affine null geodesics. We may take the (complex) amplitude $A_{ab}$ to be varying slowly on time scales of order one, and supported in a very small tube around such a geodesic.  Imposing the usual transverse-traceless gauge conditions, $A_{ab}$ is found to obey a set of transport equations which are solved order by order in $\omega^{-1}$, see e.g. \cite{green2016superradiant}. In addition, $k^a A_{ab} = g^{ab} A_{ab} = 0$, i.e. $A_{ab}$ is a \emph{polarization tensor}. To be precise, when the expansion for $A_{ab}$ is carried out up to order $\omega^{-N}$, we only get a solution to the linearized Einstein equation up that order, which must be supplemented with a non-perturbative (in $\omega^{-1}$) correction of order $O(\omega^{-N})$. This correction can be chosen such that $h_{ab}$ is still supported near the world-tube of the geodesic in the vicinity of a given Cauchy surface $\Sigma$, see \cite{green2016superradiant} for details of this  non-trivial construction.

When a derivative $\nabla_a$ hits $A_{ab} e^{i\omega S}$, the oscillations dominate, corresponding effectively to the substitution
\begin{equation}
  \nabla_a \to ip_a \equiv i\omega k_a  
\end{equation}
familiar from quantum mechanics.
Under this substitution (also called the ``principal symbol'' in the mathematics literature), the operators $\mathcal{C}_n$ in \eqref{Cndef} become
\begin{equation}
\mathcal{C}_{n \, ab}{}^{cd} \to  -\tfrac{1}{2} \zeta^4 [-Q(p)]^n e_{ab}(p) e^{cd}(p)', 
\end{equation}
where $Q(p) = K^{ab} p_a p_b$ is Carter's constant
which is conserved along the null geodesic on account of 
$\nabla_{(a} K_{bc)} = 0$, and where $e_{ab}(p)$ and $e_{ab}(p)'$ are the polarization tensors 
\begin{equation}
e_{ab}(p)=Z_{ac} Z_{bd}p^c p^d, \quad 
e_{ab}(p)'=Z_{ac}' Z_{bd}' p^c p^d
\end{equation} in which the polarization tensor properties follow from the definitions $Z = l \wedge m$ and $Z' = n \wedge \bar m$. 
Thus, in the high frequency limit, $\mathcal{C}_n$ is basically the $n$-th power of Carter's constant, dressed by the tensor structure $e_{ab}(p) e^{ cd}(p)'$ involving polarization tensors related to the principal null direction $l^a$ respectively $n^a$. For the GHP primed symmetry operator $\mathcal{C}_n'$, the roles of the polarization tensors should be reversed. 

We now wish to evaluate $\chi_{(n)}[h]$ \eqref{chinint} on our WKB solution $h_{ab}$ \eqref{WKB}. Using the definition of $Z^{ab}$ and of $Z^{ab \prime}$, we can see that
\begin{equation}
\begin{split}
\bar e^{ab}(p) e_{ab}(p) &= (l^a p_a)^4, \ 
\quad e^{ab}(p) e_{ab}(p)  = 0,\\
\bar e^{ab}(p)' e_{ab}(p)' &= (n^a p_a)^4, 
\quad e^{ab}(p)' e_{ab}(p)' = 0,
\end{split}
\end{equation}
in addition to the usual properites $e_{ab}(p) p^a = g^{ab} e_{ab}(p) = 0$, and similarly for the primed polarization tensor $e_{ab}'(p)$. This means that \begin{equation}
\epsilon_{ab}^+(p) = \tfrac{1}{\sqrt{2}} {\rm Re} \, e_{ab}(p)'/(n^c p_c)^2 \qquad \epsilon_{ab}^\times(p) = \tfrac{1}{\sqrt{2}} {\rm Im} \, e_{ab}(p)'/(n^c p_c)^2
\end{equation} 
form an orthonormal basis of (real) polarization tensors unless $p_a \propto n_a$ which we assume for simplicity is not the case. In particular, we can write the complex amplitude of our WKB approximation as
\begin{equation}
A_{ab} = A_+ \epsilon_{ab}^+ + A_\times \epsilon_{ab}^\times 
\end{equation}
up to a ``gauge-transformation''\footnote{A tensor of the form $\xi_{(a} p_{b)}$ where $p_a \xi^a = 0$.}, which does not matter since $j^a$ is gauge invariant.
Making use of such relations in \eqref{jdef} and \eqref{wadef}, we find that in the high frequency limit, the conserved quantity $\chi$ -- the flux of Carter current through $\Sigma$ --
on a WKB solution \eqref{WKB} is given to leading order in $\omega$ by 
\begin{equation}
\begin{split}
\chi_{(n)}[h] = & \int_\Sigma j^a_{(n)} \dd \Sigma_a \\
\sim &  \, \, -i(-1)^n \int_\Sigma p^a |\zeta|^8 
{\rm Im}(A_+ \bar A_\times)  (p \cdot l)^4 (p \cdot n)^4  Q(p)^n \, \dd \Sigma_a
\end{split}
\end{equation}
where $K^{ab} p_a p_b = Q(p)$ denotes the  Carter constant.
The Killing tensor \eqref{Kabdef} can be written as $K^{ab}=|\zeta|^2 l^{(a} n^{b)} - \tfrac12 ({\rm Re} \zeta)^2 g^{ab}$ and since 
$g^{ab}p_a p_b=0$, we find that $|\zeta|^8 (p \cdot l)^4 (p \cdot n)^4 = Q(p)^4$. This leads to \eqref{chinint}.

The integrand is by construction sharply localized at the point where the geodesic pierces the Cauchy surface $\Sigma$, so the integral is basically the integrand at that point. 
Thus, $\chi_{(n)}[h]$ is essentially a power of the Carter constant for a WKB solution \cite{green2016superradiant} in the high frequency limit. An analogous result \cite{green2016superradiant}, holds for the canonical energy $E[h]$ \eqref{canen}: In that case we get the energy of a particle in the WKB limit.

Due to the presence of 
${\rm Im}(A_+ \bar A_\times)$ coupling the different polarizations, the Carter currents $j^a_{(n)}$ are of ``zilch''-type in the terminology of \cite{grant2020class}.

\section{Lagrangian and Hamiltonian for the Teukolsky
  equation}\label{sec:Lagrangian-Hamiltonian}

One can write down a 
Lagrangian for the Teukolsky equation:
\begin{equation}\label{eq:TeukolskyL}
  L( \tilde \Upsilon, \Upsilon ) = \epsilon \left[ g^{ab} (\Theta_a + 4 B_a) \tilde \Upsilon (\Theta_b - 4 B_b ) \Upsilon + 16 \Psi_2 \tilde \Upsilon \Upsilon \right],
\end{equation}
involving the fields
$\tilde \Upsilon\GHPwt (4, 0)$ and
$\Upsilon \GHPwt (-4,
0)$~\cite{Toth:2018ybm}. These fields should be varied independently
to obtain the field equations $\mathcal O^\dagger \Upsilon = 0$ and
$\mathcal O \tilde \Upsilon = 0$, respectively. The Lagrangian
obviously bears close resemblance to that of a $U(1)$ charged scalar
field; however, $B_a$ is not pure imaginary, and $\Upsilon$ and
$\tilde \Upsilon$ are not complex conjugates of each other.

Next we take the Legendre transform to obtain a Hamiltonian. We
foliate our Kerr spacetime by Cauchy surfaces $\Sigma_t$ of constant
Boyer-Lindquist time $t$. Our time flow vector is taken to be the Kerr
time-translation Killing vector field $t^a$, which is related to the
type D Killing field by $t^a = M^{1/3} \xi^a$. These satisfy
$t^a \nabla_a t = 1$. This gives the $3+1$ decomposition of the
metric,
\begin{equation}\label{eq:g 3+1}
g^{ab} = \sm^{ab} +\nv^a \nv^b ,
\end{equation}
where  $\sm_{ab}$ is a negative definite spatial metric intrinsic to $\Sigma_t$, and 
$
\nv^{a} = \frac{1}{N} \left(t^a - N^a \right)
$
is the unit surface normal which defines the 
lapse $N$ and shift $N^a$ of $t^a$. In our sign conventions $N=t_a \nv^a$ and
$N^a=\sm^{a}_{\ b}t^b$. We also introduce the Lagrangian density via
$L = \mathscr L \boldsymbol{e}$, where $e_{abcd}$ is a fixed
time-independent coordinate volume element. See appendix~E
of~\cite{Wald1984} for further details.

We work in a tetrad adapted to the Kerr principal null directions so
that the $\GHPLie_t$ derivative annihilates the background
quantities. We take this to be our ``time derivative'' and take the
Legendre transform with respect to
$\dot \Upsilon \equiv \GHPLie_t\Upsilon$ and
$\dot {\tilde \Upsilon} \equiv \GHPLie_t \tilde \Upsilon$. The
associated canonical momenta are
\begin{align}\label{eq:Tuek momenta}
  \varpi &= \frac{\partial \mathscr L}{\partial \dot {\tilde \Upsilon}} = \sqrt{-\sm} \,\nv^a\left(\Theta_a-4B_a\right)\Upsilon, \\
  \tilde \varpi &= \frac{\partial \mathscr L}{\partial \dot \Upsilon} = \sqrt{-\sm} \,\nv^a\left(\Theta_a+4B_a\right)\tilde \Upsilon.
\end{align}
Note the slightly nonstandard convention where we take the derivative
with respect to the conjugated field rather than the unconjugated
field to define the canonical momentum. This is so that a field and
its conjugate momentum have the same GHP weight, and it will be
convenient when we write the equations in first order form.

Finally, the Hamiltonian density is given by
\begin{align}\label{eq:Hamiltonian}
  \mathscr H \equiv{}& \pU \dot{\tilde \Upsilon} + \tilde \pU \dot\Upsilon-\mathscr L \nonumber\\
  ={}& \frac{N}{\sqrt{-\sm}}\pU \tilde \pU + \pU \left( 2 M^{1/3} (\Psi_2^{2/3} - 2 B^a\xi_a) + N^a(\Theta_a+4B_a)\right)\tilde \Upsilon
       +\tilde \pU \left(-2 M^{1/3} (\Psi_2^{2/3} - 2 B^a\xi_a) + N^a(\Theta_a-4B_a)\right)\Upsilon \nonumber\\
                     &- N \sqrt{-\sm}\left(
                       \sm^{ab}(\Theta_a+4B_a)\tilde \Upsilon(\Theta_b-4B_b)\Upsilon + 16\Psi_2\Upsilon\tilde \Upsilon\right),
\end{align}
with Hamiltonian $H = \int_{\Sigma_t} \mathscr{H}$.  This gives rise
to Hamilton's equations of motion,
\begin{align}
  \dot \Upsilon ={}& \frac{\delta H}{\delta \tilde \pU} 
  = \frac{N}{\sqrt{-\sm}} \pU - 2 M^{1/3} (\Psi_2^{2/3} - 2 \xi^a B_a)\Upsilon + N^a(\Theta_a - 4 B_a)\Upsilon, \label{eq:dotUpsilon} \\
  \dot \pU ={}& - \frac{\delta H}{\delta \tilde \Upsilon} \nonumber\\
  ={}& - \sqrt{-\sm}\left\{\sm^{ab}(\Theta_a - 4B_a)\left[ N(\Theta_b - 4B_b)\Upsilon\right] - 16 N \Psi_2 \Upsilon\right\}  - 2 M^{1/3} (\Psi_2^{2/3} - 2 \xi^a B_a)\pU + (\Theta_a - 4 B_a)(N^a\pU), \label{eq:dotvarpi}
\end{align}
as well as corresponding equations for conjugated fields.

It is convenient to also express these equations in matrix form,
\begin{equation}
  \GHPLie_t \boldsymbol{Y} = \mathcal{H} \boldsymbol{Y}, \qquad \boldsymbol{Y} \equiv \begin{pmatrix} \Upsilon \\ \varpi \end{pmatrix},
\end{equation}
where
\begin{align}\label{eq:Hmatrix}
  & \mathcal{H} = 
  \begin{pmatrix}
    s M^{1/3} (\Psi_2^{2/3} - 2 \xi^a B_a) + N^a(\Theta_a + 2s B_a)
    & \frac{N}{\sqrt{-\sm}} \\
    - \sqrt{-\sm}\left[\sm^{ab}(\Theta_a + 2s B_a) N(\Theta_b + 2s B_b) - 4s^2 N \Psi_2\right]
    & s M^{1/3} (\Psi_2^{2/3} - 2 \xi^a B_a) + (\Theta_a + 2s B_a)N^a
  \end{pmatrix}.
\end{align}
Note that the derivative operators act on everything to the right;
here $s$ should again be viewed in the operator sense, as the weight
of the field on which $\mathcal H$ acts (in this case $s=-2$). The
equation for the conjugated fields has the same operator (but now with
$s=2$), $\GHPLie_t \tilde{\boldsymbol{Y}} = \mathcal{H} \tilde{\boldsymbol{Y}}$.

\section{Kerr geometry}
\label{app:D}
The Kerr metric is an asymptotically flat non-extremal, rotating black hole
spacetime for the values of the parameters $M>|a|$, assumed throughout the text. In Boyer-Lindquist coordinates, the Kerr metric takes the form
\begin{align}\label{eq:BL Kerr met}
g = &\left(1-\frac{2Mr}{\Sigma}\right) \dd t^2 + \frac{4Mar\sin^2\theta}{\Sigma} \dd t\dd\phi \nonumber \\&- \frac{\Sigma}{\Delta}\dd r^2 - \Sigma \dd\theta^2 - \frac{\Lambda}{\Sigma} \sin^2\theta \dd\phi^2,
\end{align} 
where 
\begin{align}\label{eq:BL stuff} 
  &\Delta=r^2+a^2-2Mr, \qquad \Sigma=r^2+a^2\cos^2\theta, \nonumber \\
  &\Lambda = (r^2 + a^2)^2 - \Delta a^2 \sin^2\theta.
\end{align}
We usually refer to the exterior of the Kerr manifold defined by $r>r_+$, 
with $r_+$ (event horizon) the greater root $r_\pm$ of $\Delta$. Sometimes we also refer to the tortoise coordinate
\begin{equation}\label{eq:rstar}
r_*=r+\frac{r_+^2+a^2}{r_+-r_-}\ln \left(\frac{r-r_+}{r_+}\right) - \frac{r_-^2+a^2}{r_+-r_-}\ln \left(\frac{r-r_-}{r_+}\right).
\end{equation}
The Kerr geometry has two commuting continuous symmetries generated by
the Killing fields
\begin{equation}
t^a = (\partial/\partial t)^a, \qquad \varphi^a = (\partial / \partial \phi)^a,
\end{equation}
The Kerr spacetime is Petrov type D, and therefore has two repeated
principal null directions. There is some freedom in choosing a NP
tetrad aligned with these null directions, but we will find it
convenient to choose the Kinnersley tetrad in explicit
calculations~\cite{Kinnersley:1969zza}. This tetrad is given in the above coordinates $(t,r,\theta,\phi)$ by
\begin{subequations}\label{eq:Kintet}
  \begin{align}
    l^a &= \frac{1}{\Delta} \left( r^2+a^2, \Delta, 0, a \right), \\
    n^a &= \frac{1}{2 \Sigma} \left( r^2+a^2, -\Delta,0,a \right), \\
    m^a &= \frac{1}{\sqrt{2} (r+ia\cos\theta)} \left( ia\sin\theta, 0,1,i\csc\theta \right). 
  \end{align}
\end{subequations}
The Kinnersley tetrad is regular on the past horizon of the black
hole. This choice of tetrad also sets the spin coefficient
$\epsilon=0$; the remaining non-zero spin coefficients appearing in this paper are
\begin{align}\label{eq:rhos and taus}
  \rho = - \frac{1}{r - ia\cos\theta},  
  \qquad   \rho' = -\frac{\rho\Delta}{2\Sigma},
  \qquad  \tau = -\frac{ia\sin\theta}{\sqrt 2 \Sigma},
  \qquad \tau' = -\frac{ia\rho^2\sin\theta}{\sqrt 2}.
\end{align}
The only nonzero Weyl scalar is
\begin{equation}\label{eq:Psi2 BL-K}
  \Psi_2 = - \frac{M}{(r-ia\cos\theta)^3} = M \rho^3.
\end{equation}
\end{widetext}
\bibliography{MyReferences.bib}
\end{document}